\newtheorem{theorem}{Theorem}[section]
\newtheorem{corollary}[theorem]{Corollary}
\newtheorem{proposition}[theorem]{Proposition}
\newtheorem{lemma}[theorem]{Lemma}
\newtheorem{definition}[theorem]{Definition}
\newtheorem{observation}[theorem]{Observation}
\newtheorem{claim}[theorem]{Claim}
\newcommand{\R}{\mathbb{R}}
\DeclareMathOperator{\poly}{poly}
\newcommand\norm[1]{\left\lVert#1\right\rVert}
\author{Max Hopkins\thanks{Department of Computer Science, Princeton, NJ 08544. Email: \texttt{mh4067@princeton.edu}.}
}
\title{Hypercontractivity on HDX II: Symmetrization and $q$-Norms}
\date{}
\begin{document}
\maketitle
\begin{abstract}
    Bourgain's symmetrization theorem is a powerful technique reducing boolean analysis on product spaces to the cube. It states that for any product $\Omega_i^{\otimes d}$, function $f: \Omega_i^{\otimes d} \to \R$, and $q > 1$:
    \[
    \norm{T_{\frac{1}{2}}f(x)}_q \leq \norm{\tilde{f}(r,x)}_{q} \leq \norm{T_{c_q}f(x)}_q
    \]
    where $T_{\rho}f = \sum\limits \rho^Sf^{=S}$ is the noise operator and $\tilde{f}(r,x) = \sum\limits r_Sf^{=S}(x)$ `symmetrizes' $f$ by convolving its Fourier components $\{f^{=S}\}_{S \subseteq [d]}$ with a random boolean string $r \in \{\pm 1\}^d$.

    In this work, we extend the symmetrization theorem to high dimensional expanders (HDX). Building on (O'Donnell and Zhao 2021), we show this implies nearly-sharp $(2{\to}q)$-hypercontractivity for partite HDX. This resolves the main open question of (Gur, Lifshitz, and Liu STOC 2022) and gives the first fully hypercontractive subsets $X \subset [n]^d$ of support $n\cdot\exp(\poly(d))$, an exponential improvement over Bafna, Hopkins, Kaufman, and Lovett's $n\cdot\exp(\exp(d))$ bound (BHKL STOC 2022). Adapting (Bourgain JAMS 1999), we also give the first booster theorem for HDX, resolving a main open question of BHKL.

    Our proof is based on two elementary new ideas in the theory of high dimensional expansion. First we introduce `$q$-norm HDX', generalizing standard spectral notions to higher moments, and observe every spectral HDX is a $q$-norm HDX. Second, we introduce a simple method of coordinate-wise analysis on HDX which breaks high dimensional random walks into coordinate-wise components and allows each component to be analyzed as a \textit{$1$-dimensional} operator locally within $X$. This allows for application of standard tricks such as the replacement method, greatly simplifying prior analytic techniques.
\end{abstract}
\section{Introduction}

Recent years have seen the development of a powerful theory of boolean function analysis \textit{beyond the cube}, with breakthrough applications spanning hardness of approximation \cite{dinur2017exponentially,dinur2018towards,subhash2018pseudorandom,kaufman2022improved,minzer2024near}, extremal and probabilistic combinatorics \cite{friedgut1999sharp,keevash2021global,ellis2024product}, approximate sampling \cite{lee1998logarithmic,chen2021optimal,jain2023optimal}, and even quantum communication \cite{arunachalam2023one}. Chief among analytic tools (and core to all above applications) is the theory of \textit{hypercontractivity} \cite{bonami1970etude}, a powerful functional inequality controlling the `smoothness' of low-degree functions.
Popularized in the seminal work of Kahn, Kalai, and Linial \cite{kahn1988influence}, variants of the hypercontractive inequality have since been established in a variety of settings including products \cite{bourgain1992influence,talagrand1994russo,friedgut1996every,friedgut1999sharp,zhao2021generalizations,keevash2021global,keevash2023sharp}, the slice \cite{khot2018small}, the shortcode \cite{barak2012making}, the Grassmann \cite{subhash2018pseudorandom,ellis2023analogue}, and various groups \cite{filmus2020hypercontractivity,ellis2024product}.

Many potential applications of hypercontractivity in complexity (e.g.\ toward low-soundness PCPs \cite{dinur2017exponentially}, sparsest cut \cite{kane2013prg}, the unique games conjecture \cite{barak2018small}) are hindered by the fact that almost all known variants hold for \textit{dense} domains. Indeed it wasn't until 2021 and the works of Gur, Lifshitz, and Liu \cite{gur2021hypercontractivity} and Bafna, Hopkins, Kaufman, and Lovett \cite{bafna2021hypercontractivity} that basic hypercontractivity was shown for any sparse domain at all. These works gave a preliminary set of hypercontractive inequalities for sparse yet strongly `pseudorandom' subsets of $[n]^d$ called \textit{high dimensional expanders} (HDX), a remarkable family of objects with breakthrough applications in closely related areas including the Mihail-Vazirani conjecture \cite{kaufman2020high,anari2019log}, c3-LTCs \cite{dinur2022locally,panteleev2022asymptotically}, and quasilinear size PCPs \cite{bafna2024quasi}.

While the first of their kind, \cite{gur2021hypercontractivity} and \cite{bafna2021hypercontractivity}'s hypercontractive bounds fall short of their classical counterpart in several critical aspects. In \cite{gur2021hypercontractivity}, the authors' bound suffers asymptotically sub-optimal dependence on degree, ruling out many traditional applications.\footnote{In particular, \cite{gur2021hypercontractivity} prove a simplified variant of hypercontractivity called the `Bonami lemma' but with super-exponential dependence on degree. As a result, one cannot recover standard hypercontractivity of the noise operator from their work.} Meanwhile \cite{bafna2021hypercontractivity} prove a variant with the correct asymptotic dependence, but only for quite ``dense'' HDX (i.e., families $X \subseteq [n]^d$ of size $|X| \approx n\cdot 2^{2^{O(d)}}$), a severe limitation since many applications require taking $d$ large. As such, \cite{gur2021hypercontractivity} and \cite{bafna2021hypercontractivity} left open an obvious question: \textit{can we achieve the best of both worlds?} Even further, are there sparse subsets matching \textit{sharp} hypercontractive inequalities for $[n]^d$? I.e.\ stronger $(2{\to}q)$-type bounds needed for recent breakthrough applications in combinatorics and number theory \cite{keller2023sharp,green2024improved}?


In this work, we resolve these questions up to a log factor by taking a new approach to hypercontractivity on HDX: \textit{symmetrization}. Symmetrization is a classical technique of Kahane \cite{kahane1985some} and Bourgain \cite{bourgain1979walsh} which analyzes high moments of functions on product spaces $f: \Omega_i^{\otimes d} \to \R$ by convolving them with a random boolean string $r \in \{\pm 1\}^d$. Up to the application of a small amount of noise, Bourgain proved that the resulting function $\tilde{f}:\{\pm 1\}^d \times \Omega_i^{\otimes n} \to \R$ has the same behavior of $f$, allowing one to analyze $\tilde{f}$ instead using standard analysis on its boolean component. We prove Bourgain's theorem holds for high dimensional expanders up to a $(1 \pm o(1))$ factor. 

Building on \cite{zhao2021generalizations}, we then use symmetrization to prove a nearly-sharp $(2{\to}q)$-hypercontractive inequality and a booster theorem on HDX, resolving the main open questions of \cite{bafna2021hypercontractivity,gur2021hypercontractivity}. Combined with \cite{lubotzky2005explicit}, our bounds give the first fully hypercontractive sparsification $X \subset [n]^d$ of support $n \cdot 2^{\text{poly}(d)}$, coming much closer to the size of heavily used `weakly pseudorandom' sparsifications of $[n]^d$ like expander walks. This reduced support is critical since it corresponds to the `blowup' incurred using $X$ in application, e.g.\ as the rate blowup in applications to codes \cite{dinur2019list,alev2020list,dikstein2024chernoff} or size blowup in applications to PCPs and fault-tolerant networks \cite{bafna2024quasi,bafna2024bconstant}. In the latter vein, it is worth highlighting that (global) hypercontractivity has classically played an important role in PCPs (see, e.g.\ \cite{haastad2001some,dinur2018towards}). In fact, \cite{bafna2024quasi}'s PCP construction using HDX hit a barrier at low constant soundness in part due to lack of an optimal agreement test, variants of which are known to follow from hypercontractivity \cite{dinur2017exponentially,barak2018small,kaufman2022improved,minzer2024near}. Combined with our results, extending the latter connection to HDX could lead to substantially improved PCPs in the notoriously challenging \textit{subconstant} soundness regime.\footnote{We remark it is known that hypercontractivity alone is not sufficient for agreement testing on HDX \cite{dikstein2024agreement,bafna2024characterizing}, but it is plausible optimal testers follow from combining hypercontractivity with the topological conditions from these works.}

Our proof of the symmetrization theorem is based on two new ideas in the theory of high dimensional expansion. First, we introduce the notion of \textit{$q$-norm HDX}, which bounds the local divergence of a hypergraph from a product space in $q$-norm, and show any strong enough spectral HDX is a $q$-norm HDX. This allows us to manipulate higher norms directly and avoid error terms that hampered previous techniques. Second, we introduce a basic variant of coordinate-wise analysis for the noise operator $T_r$ on $q$-norm HDX, showing it can be approximately broken into coordinate-wise components $T_{r_i}^i$ that correctly localize to the standard $1$-dimensional operators $T_{r_i}$ when restricted to the $i$th coordinate. This allows us to apply the replacement method, greatly simplifying \cite{gur2021hypercontractivity,bafna2021hypercontractivity} and allowing us to achieve near-sharp bounds downstream.

\subsection{Background}
\paragraph{High Dimensional Expanders:} We study (weighted) \textit{partite hypergraphs} $X \subset [n]^d$, which we view as $d$-dimensional distributions $X=(X_1,\ldots,X_d)$. $X$ is called a \textit{$\gamma$-product}\footnote{$\gamma$-products, introduced formally in \cite{gur2021hypercontractivity}, are essentially equivalent to the standard notion of spectral HDX \cite{dikstein2019agreement}.} if for every $i \neq j$
\begin{enumerate}
    \item The bipartite graph corresponding to the marginal $(X_i,X_j)$ is a $\gamma$-spectral expander
    \item This holds even conditioning on any (feasible) values of $X_S$ for any $i,j \not\in S \subseteq [d]$.
\end{enumerate}
Note $[n]^d$ itself is a $0$-product of support $n^d$ and that there exist (strongly explicit) constructions of $\gamma$-products of support $n\cdot \gamma^{-O(d^2)}$ \cite{lubotzky2005explicit}. We remark all results in this paper hold in the setting of general domains $X \subset \Omega_i^{\otimes d}$; we focus on $[n]^d$ for simplicity of presentation.
\paragraph{Basic Analysis on Hypergraphs}
Denote the space of functions on $X$ as $C_d \coloneqq \{f:X \to \R\}$. Assuming the weights of $X$, denoted $\Pi(x)$, form a distribution, they induce a natural expectation operator and $L_p$-space over $C_d$ defined by $\mathbb{E}[f] = \sum\limits_{x \in X} \Pi(x)f(x)$ and $\norm{f}_p = \mathbb{E}[|f|^p]^{1/p}$.
Given a linear operator $T:C_d \to C_d$, we define its $p$-norm in the standard way as $\max_{f \in C_d}\left\{\frac{\norm{Tf}_p}{\norm{f}_p}\right\}$. 

It will often be useful for us to analyze $f \in C_d$ inside \textit{restrictions} of $X$. Formally, given a sub-assignment $x_S$ to the marginal $X_S$, define $f|_{x_S}(x_{\bar{S}})=f(x)$ to be the localization of $f$ to the restricted domain $X_{\bar{S}}$ \textit{conditioned on fixing $X_S=x_S$}. This domain is called the link and denoted by $X_{x_S}$ (informally, the link just fixes the $S$-coordinates to value $x_S$, and we think of $f|_{x_S}$ as a function of the remaining coordinates).

\paragraph{Noise Operator and Total Influence:}

The \textit{noise operator} $T_\rho: C_d \to C_d$ is a classical smoothing procedure in boolean analysis which given a string $x=(x_1,\ldots,x_d) \in X$, averages over $y$ generated by re-sampling each coordinate of $x$ independently with probability $1-\rho$. The \textit{total influence} of a $\{\pm 1\}$-valued function $f \in C_d$ measures its average sensitivity to re-sampling each \textit{individual} coordinate
\[
\mathbf{I}[f] = \sum\limits_{i=1}^d \mathbb{E}_x[\Pr[f(x) \neq f(x^{(i)})]],
\]
where $x^{(i)}$ is generated by re-sampling the $i$th coordinate of $x$.

\paragraph{Fourier Analysis on Hypergraphs:}
The classical theory of Fourier analysis on products is given by the \textit{Efron-Stein Decomposition}, whose components $\{f^{=S}\}_{S \subseteq [d]}$ are
\[
f^{=S}(y) = \sum\limits_{T \subseteq S}(-1)^{|S \setminus T|}\mathbb{E}[f(x)~|~x_T=y_T].
\]
It is easy to check that $f=\sum\limits_{S \subseteq [d]}f^{=S}$. We write $f^{\leq i}=\sum\limits_{|S| \leq i} f^{=S}$ as the degree-$i$ component of $f$. On product spaces, $\{f^{=S}\}$ is an orthogonal eigenbasis for $T_\rho$:
\[
T_\rho f = \sum\limits_{S \subseteq [d]}\rho^{|S|}f^{=S}.
\]
This leads to a standard generalization of the noise operator to vectors $r \in \R^d$
\[
T_r f \coloneqq \sum\limits_{S \subseteq [d]} r_Sf^{=S},
\]
where $r_S = \prod\limits_{i \in S} r_i $. Finally, the \textit{symmetrization} of $f$, denoted $\tilde{f}: \{\pm 1\}^d \times X \to \R$ is the function
\[
\widetilde{f}(r,x) = T_rf(x) = \sum\limits_{S \subseteq [d]} r_Sf^{=S}(x).
\]
Note $\tilde{f}$ is the unique function whose $X$-restrictions $\tilde{f}|_x(r)=\tilde{f}(x,r)$ have Fourier coefficients $f^{=S}(x)$.
\subsection{Results}\label{sec:intro-results}

Our first main contribution is the following extension of Bourgain's symmetrization theorem to HDX:
\begin{theorem}[The Symmetrization Theorem (Informal \Cref{cor:sym-products})]\label{thm:symmetrization-intro}
    Let $q>1$ and $X$ be a $d$-partite $\gamma$-product satisfying $\gamma \leq 2^{-\Omega_q(d)}$. Then for any $f:X \to \R$
    \[
    (1-o_{\gamma}(1))\norm{\widetilde{T_{c_q}f}}_q \leq \norm{f}_q \leq (1+o_{\gamma}(1))\norm{\widetilde{T_{2}f}}_q.
    \]
    for some constant $c_q>0$ depending only on $q$.
\end{theorem}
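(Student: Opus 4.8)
The plan is to prove both sandwich inequalities by the \emph{replacement (Lindeberg) method}, peeling off one coordinate at a time and using a ``$q$-norm HDX'' estimate to pay for the gap between $X$ and the true product $X_1\times\cdots\times X_d$. For each coordinate $i$ I would introduce the coordinate-wise noise operator $T^i_\rho$ on $C_d$, which resamples $x_i$ from its conditional law $X_i\mid x_{\bar i}$ (and its Rademacher twist $T^i_{\rho r_i}$, which at $\rho=1$ is ``symmetrize coordinate $i$''). On a genuine product one has the exact factorization $\widetilde{T_\rho f}(r,x)=\big(\prod_{i=1}^{d}T^i_{\rho r_i}\big)f$ with the $T^i$ pairwise commuting, and each $T^i_{\rho r_i}$ acting inside the link $X_{x_{\bar i}}$ as the $1$-dimensional operator $g\mapsto \E g+\rho r_i(g-\E g)$, since a single coordinate has only the trivial and top Efron--Stein levels. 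The first step is to show this factorization and localization survive on a $\gamma$-product up to $q$-norm error $2^{O_q(d)}\gamma^{\Omega(1)}$ on the functions that occur in the argument; this is exactly where the $q$-norm HDX notion is used, upgrading the $2$-norm (spectral) closeness of each marginal $(X_i,X_j)$ to product into control of higher moments of the local divergence between $X_i\mid x_{\bar i}$ and $X_i$, necessarily at a cost exponential in $d$ --- which is why we assume $\gamma\le 2^{-\Omega_q(d)}$.

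With that reduction in hand, the core is a \emph{one-dimensional lemma}: for a real function $g$ on a link (equivalently, for an intermediate hybrid function with all coordinates other than $k$ and all Rademachers frozen), $\|g\|_q\le\|T^k_{2r_k}g\|_q$ and $\|T^k_{c_q r_k}g\|_q\le\|g\|_q$. Writing $g=a+u$ with $a=\E_{x_k}g$, $\E_{x_k}u=0$, and averaging over the fresh sign $r_k$, the first inequality reduces to $\E_{x_k}\psi_2(u)\ge\E_{x_k}\chi(u)$ where $\chi(u)=|a+u|^q$ and $\psi_c(u)=\tfrac12|a+cu|^q+\tfrac12|a-cu|^q$; since $\E u=0$ we may add to $\psi_2-\chi$ the linear term that cancels its derivative at $u=0$, and a short convexity check (curvature $\tfrac32 q(q-1)|a|^{q-2}>0$ at the origin, growth $(2^q-1)|u|^q$ at infinity, the factor $2$ being large enough to prevent any intermediate dip) shows the result is pointwise nonnegative. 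The second inequality is the mirror image: now $\psi_{c_q}-\chi$ must lie below its own tangent line at the origin, which holds for $c_q$ small enough but fails as $c_q\to 1$ (a negative cubic term appears for $u<0$); the threshold $c_q>0$ decreases as $q$ grows, which is the point at which the argument is morally using one-dimensional hypercontractivity.

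The proof then assembles a hybrid: put $h_0=f$ and $h_k=\big(\prod_{i\le k}T^i_{2r_i}\big)f$, so $h_d=\big(\prod_i T^i_{2r_i}\big)f$ is $2^{O_q(d)}\gamma^{\Omega(1)}$-close in $q$-norm to $\widetilde{T_2f}$. At step $k\to k+1$, applying the true operator $T^{k+1}_{2r_{k+1}}$ costs $2^{O_q(d)}\gamma^{\Omega(1)}\|h_k\|_q$ to replace by its product-space idealization, after which the one-dimensional lemma gives $\|h_k\|_q\le\|T^{k+1}_{2r_{k+1}}h_k\|_q$. Telescoping the $d$ steps and the final approximation yields $\|f\|_q\le(1+d\,2^{O_q(d)}\gamma^{\Omega(1)})\,\|\widetilde{T_2f}\|_q=(1+o_\gamma(1))\|\widetilde{T_2f}\|_q$ under $\gamma\le 2^{-\Omega_q(d)}$, and the symmetric hybrid with $T^i_{c_q r_i}$ gives $\|\widetilde{T_{c_q}f}\|_q\le(1+o_\gamma(1))\|f\|_q$, i.e.\ the claimed $(1-o_\gamma(1))$ lower bound.

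The one-dimensional lemma and the telescoping are routine; the real obstacles are two. First, the $q$-norm HDX theory: one must show ordinary spectral expansion implies the needed higher-moment control on the discrepancy between links and products with only a $2^{O_q(d)}$ loss --- naive moment estimates are far worse, and the gain should come from exploiting that links of $\gamma$-products are again $\gamma$-products, applied recursively. Second, and why the ordering of the hybrid is delicate: for an asymmetric coordinate distribution the raw operator $T^k_{-c}$ is \emph{not} an $L^q$-contraction, so one cannot naively claim a step only shrinks the norm; the replacement must switch on the Rademacher symmetrization in coordinate $k$ before anything that could amplify, and one needs a uniform a priori bound $\|h_k\|_q\le 2^{O_q(d)}\|f\|_q$ so that the accumulated errors remain $o_\gamma(1)$. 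This same $2^{O(d)}$-type bookkeeping against the $2^{|S|}$ and $c_q^{|S|}$ weights is what forces the hypothesis $\gamma\le 2^{-\Omega_q(d)}$ rather than merely $\gamma=d^{-\omega(1)}$.
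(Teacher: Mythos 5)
Your outline is, in structure, the paper's own proof: coordinate-wise noise operators $T^i_{r_i}=r_iI+(1-r_i)E_{[d]\setminus i}$, exact localization to links (\Cref{lem:localize}), the one-dimensional symmetrization inequalities (\Cref{lem:1-D} and \Cref{lemma:sym-lower}, which you state in rescaled form with the factor $2$ and the constant $c_q$), and a replacement/hybrid with $2^{O(d)}\gamma_q\norm{f}_q$ bookkeeping. Two of the difficulties you anticipate are in fact non-issues. Since localization is \emph{exact} on any partite complex, there is no per-step ``product-space idealization'' error in your upper-bound hybrid: $h_{k+1}=T^{k+1}_{2r_{k+1}}h_k$ literally localizes to the one-dimensional operator $g\mapsto \mathbb{E}g+2r_{k+1}(g-\mathbb{E}g)$ in each co-dimension-$1$ link, so the only approximation needed is a single decorrelation at the end relating $\prod_i T^i_{2r_i}f$ to $T_{2r}f=\widetilde{T_2f}$ (your variant thereby even avoids the permutation step the paper uses inside its hybrid). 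Likewise, on the lower bound no delicate ordering is required: once localized, $T^i_{-c_q}$ \emph{is} an $L^q$-contraction for $q\ge 2$ (exactly \Cref{lemma:sym-lower} applied in each link), the range $1<q<2$ follows by self-adjointness and H\"{o}lder duality, and $T^i_{+c_q}$ contracts by Jensen; every factor contracts, so one decorrelation finishes that side with no hybrid at all.

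The genuine gap is the ingredient you name but do not supply: the $q$-norm control of the divergence from a product, i.e.\ the estimates behind $\norm{T_rf-\prod_iT^i_{r_i}f}_q\le 2^{O(d)}\gamma_q\norm{f}_q$, starting from only a spectral ($2$-norm) hypothesis. Your proposed mechanism---recursing on the fact that links of $\gamma$-products are $\gamma$-products---does not by itself convert a $2$-norm operator bound into a $q$-norm one, and your expectation that this upgrade is ``necessarily at a cost exponential in $d$'' misidentifies where the loss comes from. The paper's resolution is a one-line interpolation: since $\norm{A^\tau_{i,j}-\Pi^\tau_{i,j}}_2\le\gamma$ and trivially $\norm{A^\tau_{i,j}-\Pi^\tau_{i,j}}_\infty\le 2$, Riesz--Thorin gives $\norm{A^\tau_{i,j}-\Pi^\tau_{i,j}}_q\le\gamma^{2/q}2^{1-2/q}$ (dual exponent for $1<q<2$), a dimension-free bound (\Cref{lem:2-to-p}). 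The link recursion is then used separately, to get $q$-norm swap-walk expansion (\Cref{lem:swap}) and hence $\norm{E_TE_{T'}f-E_{T\cap T'}f}_q\le|T||T'|\gamma_q\norm{f}_q$ (\Cref{claim:intersect-norm}), which telescopes into \Cref{lemma:decorrelate}; the $2^{O(d)}$ factors arise only from summing the $2^d$ coefficients of $T_r$, not from the norm upgrade itself. Without this (or an equivalent) argument your error terms remain unquantified, and the hypothesis $\gamma\le 2^{-\Omega_q(d)}$ cannot be cashed in, so this step is the substantive missing piece of the proposal rather than a routine detail.
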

The symmetrization theorem is a powerful tool for extending classical structure theorems on the cube to product spaces. In this work, we focus mainly on its application to the \textit{hypercontractive inequalities}. On the boolean cube, the classical $(2{\to}q)$-hypercontractive theorem (or rather its equivalent form as the `Bonami Lemma') states the degree-$i$ part of any function $f: \{0,1\}^d \to \R$ is `smooth' in the sense that for any $q > 2$
\[
\norm{f^{\leq i}}_q \leq \sqrt{q-1}^i\norm{f^{\leq i}}_2.
\]
This statement has incredibly far reaching consequences in analysis, combinatorics, and theoretical computer science, including the KKL Theorem \cite{kahn1988influence} and its myriad applications.

Unfortunately, the Bonami lemma is false on unbalanced product spaces. The obvious counter-examples are dictators, or more generally `local' functions that are dense in some restriction. Consider, for instance, the $i$th dictator on the $p$-biased cube. While $\mathbf{1}_i$ is a degree-$1$ function, it is easy to compute:
\[
\mathbb{E}[\mathbf{1}_i^4] = p \gg \mathbb{E}[\mathbf{1}_i^2]^2 = p^2.
\]
This gives rise to the theory of \textit{global} hypercontractivity that aims to show such local functions are the only counter-examples. Variants of global hypercontractivity have been known for product spaces since the 90s where they lead to the theory of sharp thresholds \cite{friedgut1999sharp}, but tight bounds were only achieved after a long line of work culminating in the recent work of \cite{keller2023sharp} (see \Cref{sec:intro-related-work} for further references).

With this in mind, our second main contribution is a proof of global hypercontractivity for HDX via symmetrization that matches sharp bounds for products up to a log factor.
\begin{theorem}[Global Hypercontractivity (Bonami Form, \Cref{thm:Bonami})]\label{thm:Bonami-intro}
    Let $q \geq 2$ be an even integer and $X$ any $d$-partite $\gamma$-product with $\gamma \leq 2^{-\tilde{\Omega}(qd)}$. Then for any $f:X \to \R$ and $i \leq d$:
    \[
    \norm{f^{\leq i}}_q \leq (500q)^{i}\norm{f}_2^{2/q}\max_{|S| \leq i, x_S}\{\norm{f|_{x_S}}_2^{1-\frac{2}{q}}\}.
    \]
\end{theorem}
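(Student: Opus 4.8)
The plan is to deduce the theorem from the Symmetrization Theorem (\Cref{thm:symmetrization-intro}) by running the replacement method, following the product-space strategy of \cite{zhao2021generalizations} but carrying each step out locally inside links via the paper's coordinate-wise analysis. The starting point is the observation that one cannot simply run the replacement method on $f^{\le i}$ over $X$ directly: the coordinates of $X$ are highly unbalanced, so the one-dimensional $(2{\to}q)$ inequality on each $X_i$ has a bias-dependent constant, and the product of these over $d$ coordinates is catastrophically large. Symmetrization is exactly the device that repairs this. Applying \Cref{thm:symmetrization-intro} to $g := f^{\le i}$ — and noting that $T_2$ only rescales the level-$j$ Efron--Stein component by $2^j \le 2^i$ — gives
\[
\norm{f^{\le i}}_q \;\le\; (1+o_\gamma(1))\,\norm{\widetilde{T_2 g}}_q, \qquad \widetilde{T_2 g}(r,x) \;=\; \sum_{|S|\le i} 2^{|S|} r_S f^{=S}(x),
\]
trading $f^{\le i}:X\to\R$ for its symmetrization on the product domain whose $i$-th coordinate is $\{\pm1\}\times X_i$ (again a partite HDX of comparable quality). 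Here each coordinate has been split into a \emph{balanced} boolean part $r_i$, on which the sharp two-point Bonami constant $\sqrt{q-1}$ is available, and an arbitrary part $x_i$, which will only ever be touched by the trivial $(2{\to}q)$ bound $\norm{h}_q \le \norm{h}_2^{2/q}\norm{h}_\infty^{1-2/q}$ (H\"older).

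I would then bound $\norm{\widetilde{T_2 g}}_q$ by the replacement method, peeling the coordinates $\{\pm1\}\times X_i$ off one at a time. By the coordinate-wise analysis, at each step the relevant noise operator splits as $T_r \approx \prod_i T_{r_i}^i$ with $T_{r_i}^i$ localizing, inside the current link, to the honest one-dimensional operator; the $q$-norm HDX property is precisely what certifies that this replacement costs only a $\gamma\cdot\poly(q,d)$ error even after passing to the $q$-th moment (which, as $q$ is even, is an exact polynomial expansion, not an absolute value). On the $r_i$-slot of an ``active'' coordinate the one-dimensional input is two-point Bonami, contributing $2\sqrt{q-1}$ (the $2$ absorbing the built-in $T_2$); since $\widetilde{T_2 g}$ has degree $\le i$ in $r$, at most $i$ coordinates are active. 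On the $x_i$-slots the one-dimensional input is H\"older; carried through all $d$ coordinates and through the recursive localization, the ``$\norm{h}_\infty$'' factors get pinned to $L_2$-norms of restrictions of $f$ to codimension-$\le i$ links — that is, to $M := \max_{|S|\le i,\,x_S}\norm{f|_{x_S}}_2$ — while the ``$\norm{h}_2$'' factors collect to $\norm{f}_2$. Gathering the $\le i$ Bonami factors with the H\"older interpolation, and being generous with constants, yields the stated bound.

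The crux — and the step I expect to be the main obstacle — is the error control: showing that the coordinate-wise replacement on an HDX (rather than a genuine product) loses only $\gamma\cdot\poly(q,d)$ per coordinate even inside a degree-$q$ moment computation, so that accumulation over $d$ coordinates and $q$ moments remains $o_\gamma(1)$. This is exactly where \cite{gur2021hypercontractivity} and \cite{bafna2021hypercontractivity} fell short — \cite{gur2021hypercontractivity} through a level-by-level argument costing super-exponentially in the degree, \cite{bafna2021hypercontractivity} by restricting to HDX dense enough (of size $\approx n\cdot 2^{2^{O(d)}}$) that the ambient product approximation is essentially free — and it is what the two new ingredients are designed for: $q$-norm HDX makes it legitimate to manipulate the higher moments directly instead of routing everything through lossy $L_2$ error terms, and the coordinate-wise analysis reduces every step to a one-dimensional operator in a link, where only elementary inequalities (two-point Bonami, H\"older) are invoked. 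Tracking these $\gamma$-errors is what forces the hypothesis $\gamma \le 2^{-\tilde{\Omega}(qd)}$ and accounts for the single $\log$-factor gap from the sharp product bound.
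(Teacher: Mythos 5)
Your first step matches the paper: symmetrize $f^{\le i}$, so that (up to the $\gamma$-errors controlled by \Cref{lem:apx-eigen} and \Cref{lemma:apx-closed}) you are left to bound $\norm{\sum_{|S|\le i}2^{|S|}r_Sf^{=S}(x)}_q$ over $\{\pm1\}^d\times X$. But from that point the proposal has a genuine gap. The paper does \emph{not} run a second replacement-method pass over the coordinates $\{\pm1\}\times X_i$; it applies the full boolean Bonami lemma to the $r$-variables at once, uses Parseval in $r$ to reduce to $\mathbb{E}_x\bigl[(\sum_{|S|\le i}f^{=S}(x)^2)^{q/2}\bigr]$, and then the real work is the O'Donnell--Zhao step carried to HDX: expand the $q/2$-fold product, re-index over the intersection $I$ of the last index set with the union of the others, decorrelate the conditioned variables using $q$-norm expansion of the swap walks $A^{x_I}_{S_t\setminus I,\,\mathcal{S}^{t-1}\setminus I}$ (\Cref{claim:de-cor-hyp}), and only then pull out $\max_{|I|\le i,x_I}\norm{f|_{x_I}}_2^2$ via the restriction identity for Efron--Stein (\Cref{lem:Efron-Restrict}) and approximate Parseval in the link; iterating this peeling $q/2-1$ times (\Cref{claim:peel}) is what produces the $(500q)^{qi}$ factor. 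Your proposal replaces all of this by the assertion that, peeling coordinates one at a time and applying H\"older $\norm{h}_q\le\norm{h}_2^{2/q}\norm{h}_\infty^{1-2/q}$ on each $X_i$-slot, ``the $\norm{h}_\infty$ factors get pinned to $L_2$-norms of restrictions to codimension-$\le i$ links.'' No mechanism is given for this pinning, and it is exactly the crux of global hypercontractivity: a per-coordinate H\"older applied across all $d$ of the $X_i$-slots accumulates maxima over restrictions of \emph{unbounded} size (in the limit an $\norm{f}_\infty$-type quantity), which the globalness parameter $\max_{|S|\le i,x_S}\norm{f|_{x_S}}_2$ does not control. The fact that only $i$-sized restrictions appear is a consequence of the degree-$\le i$ structure of the Efron--Stein decomposition viewed \emph{jointly} across coordinates, and extracting it requires either the intersection/decorrelation bookkeeping above or a genuinely different inductive invariant (as in the generalized-influence proofs for products); neither is present in your sketch. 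The replacement method in this paper is a tool for proving the symmetrization theorem itself (comparing $T_{1/2}$ with $T_r$ coordinate by coordinate), not a substitute for this combinatorial core.

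A secondary issue: even where your error accounting is right in spirit, the $\gamma$-error terms at this stage cannot be allowed to scale with $\norm{f}_q$ (as in the plain $q$-norm versions of the decorrelation lemmas); the paper must invoke the $\ell_2$-flavored bounds, with errors of the form $2^{O(d)}\gamma\norm{f}_2^{2/q}\max_{|S|\le i,x_S}\{\norm{f|_{x_S}}_2^{1-2/q}\}$, so that the error is dominated by the main term for $\gamma\le 2^{-\tilde\Omega(qd)}$. Your plan, as written, does not distinguish these, and with $\norm{f}_q$-type errors the final statement would not follow.
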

To interpret the above, note so long as $f$ is not too dense in any low-dimensional restriction (i.e.\ $f$ is `global'), \Cref{thm:Bonami-intro} recovers the $(2{\to}q)$-Bonami lemma up to constants in the exponent. Indeed on balanced complexes such as the cube, \textit{all} functions are global, so we recover standard $(2{\to}q)$-hypercontractivity.\footnote{In fact even on unbalanced complexes the same argument recovers `general hypercontractivity' which scales with the worst marginal probability, see \cite[Chapter 10]{o2014analysis}.}

On a related note, one might reasonably observe \Cref{thm:Bonami-intro}'s dependence on $q$ and $i$ is not as good as the standard Bonami lemma, degrading from $\sqrt{q}^i$ to $q^i$. As we've alluded to above, this is necessary. \cite{keller2023sharp} recently showed the optimal dependence on product spaces to be $\Theta(\frac{q}{\log(q)})^i$. We remark that while \Cref{thm:Bonami-intro} is worse by a log factor, even the proof of the product case (a very simple extension of \cite{zhao2021generalizations}'s method for $(2{\to}4)$-norm) may be of independent interest, since the only prior known way to achieve linear dependence on $q$ was via \cite{keller2023sharp}'s much more involved reduction to Gaussian hypercontractivity.

Achieving sharp dependence on $q$ and $i$ is often crucial in application (see \cite{keller2023sharp} and followup works \cite{keevash2023sharp,lifshitz2023bounds}) because this coefficient needs to `match' the corresponding eigenvalues of the noise operator $T_\rho$ --- in other words the smaller the coefficient, the less noise required in application. To formalize this connection, we show how to use \Cref{thm:Bonami-intro} to prove a global variant of the classical hypercontractive inequality for the noise operator. Following \cite{keller2023sharp}, we'll call a function $f$ $r$-global if 
\[
\forall S \subset [d], x_S \in X_S: \norm{f|_{x_S}}_2^2 \leq r^{|S|}\norm{f}_2^2.
\]
\Cref{thm:Bonami-intro} then leads to the following near-sharp $(2{\to}q)$-hypercontractive bound for $T_\rho$:
\begin{corollary}[Global Hypercontractivity (Operator Form, Informal \Cref{cor:operator-form})]\label{cor:intro-operator-form}
    Let $r \geq 1$, $q \geq 2$ be an even integer, and $X$ be any $d$-partite $\gamma$-product with $\gamma \leq 2^{-\tilde{\Omega}(qd)}$. Then for any $\rho \leq O(\frac{1}{rq})$ and $r$-global function $f:X \to \R$:
    \[
    \norm{T_\rho f}_q \leq (1+o_\gamma(1))\norm{f}_2.
    \]
\end{corollary}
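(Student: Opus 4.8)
The plan is to derive the operator-form inequality from the Bonami-form bound (\Cref{thm:Bonami-intro}) by the standard decomposition of $T_\rho$ into its degree components and a careful summation. First I would write $T_\rho f = \sum_{i=0}^{d} \rho^i f^{=i}$, where $f^{=i} = f^{\le i} - f^{\le i-1}$ is the exact degree-$i$ part, and bound $\norm{T_\rho f}_q \le \sum_{i=0}^d \rho^i \norm{f^{=i}}_q$ by the triangle inequality. To each term I apply \Cref{thm:Bonami-intro}; the mild point is that the theorem is stated for $f^{\le i}$ rather than $f^{=i}$, so I would either note that $f^{=i}$ is itself the top part of a degree-$i$ function (hence the same bound applies with $i$ in place of the degree) or absorb a factor of $2$ from $\norm{f^{=i}}_q \le \norm{f^{\le i}}_q + \norm{f^{\le i-1}}_q$ into the constant. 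Using $r$-globality, $\max_{|S|\le i, x_S}\norm{f|_{x_S}}_2 \le r^{i/2}\norm{f}_2$, so \Cref{thm:Bonami-intro} gives $\norm{f^{=i}}_q \le (500q)^i \norm{f}_2^{2/q} (r^{i/2}\norm{f}_2)^{1-2/q} = (500q \sqrt r)^{i\,(1-2/q)} \cdot (500q)^{2i/q} \cdot \norm{f}_2$; combining the exponents this is at most $(C q \sqrt r)^{i}\norm{f}_2$ for an absolute constant $C$ (using $r\ge 1$, $q\ge 2$).

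Next I would choose $\rho$ so the geometric series converges with total mass $1+o_\gamma(1)$. Plugging in, $\rho^i\norm{f^{=i}}_q \le (\rho C q\sqrt r)^i \norm{f}_2$, and if $\rho \le \frac{1}{2Cq\sqrt r}$ — which is the $\rho \le O(\frac{1}{rq})$ hypothesis once one observes $\sqrt r \le r$ — then $\sum_{i\ge 0}(\rho Cq\sqrt r)^i \le \sum_{i\ge 0} 2^{-i} = 2$. This only yields a constant, not $1+o_\gamma(1)$, so the real work is in sharpening the constant. The standard fix is to separate the low-degree and high-degree parts: the $i=0$ term contributes exactly $\norm{f^{=\emptyset}}_2 \le \norm f_2$ (or one treats $i=0,1$ exactly), and for $i\ge 1$ one takes $\rho$ a small enough constant multiple of $\frac{1}{rq}$ so that $\sum_{i\ge 1}(\rho C q\sqrt r)^i \le \sum_{i\ge 1} \gamma^{\Omega(1)}{}^{\,i} = o_\gamma(1)$ — i.e. one exploits slack between the stated bound $\rho \le O(\frac{1}{rq})$ and what the geometric series actually needs, or one notes the $\norm{f^{=i}}_q \le \norm{f^{=i}}_2^{2/q}(\cdots)^{1-2/q}$ structure already forces the $i\ge1$ terms to be small relative to $\norm f_2$ when $\rho$ is a sufficiently small constant times $\frac{1}{rq}$. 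Either way the $o_\gamma(1)$ is inherited from the additive errors already present in \Cref{thm:Bonami-intro}'s proof (the $q$-norm HDX approximation and symmetrization losses), which are $o_\gamma(1)$ under the hypothesis $\gamma \le 2^{-\tilde\Omega(qd)}$; so I would phrase the final constant as $(1+o_\gamma(1))$ by folding those errors in alongside the geometric tail.

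The main obstacle I anticipate is not the summation but correctly propagating the error terms: \Cref{thm:Bonami-intro} as stated has no error factor, but \Cref{cor:intro-operator-form} does, which signals that the clean Bonami statement must hide an $(1+o_\gamma(1))$ (or that a slightly lossy version must be used inside the sum), and one must verify these per-degree errors do not accumulate to something worse than $o_\gamma(1)$ after summing up to $d$ terms — this is where the exponentially-small $\gamma \le 2^{-\tilde\Omega(qd)}$ regime is essential, since it lets a union/sum bound over $d$ degrees still be $o_\gamma(1)$. A secondary, more cosmetic obstacle is reconciling the overloaded symbol $r$ (globality parameter) with the noise-operator vector notation $r \in \{\pm1\}^d$ used earlier; in the proof I would just fix notation once and move on. Finally I would double-check the exponent arithmetic $2i/q + i(1-2/q) = i$ so that the $q$-dependence really is $q^i$ and not $q^{i(1-2/q)}$ — it is, because the $\norm f_2^{2/q}$ factor contributes no $q$-growth — which is exactly what makes the threshold $\rho = \Theta(\frac{1}{rq})$ come out with the advertised linear-in-$q$ dependence rather than something weaker.
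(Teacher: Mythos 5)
Your first half is exactly the paper's first step: decompose $T_\rho f=\sum_i\rho^i f^{=i}$, apply \Cref{thm:Bonami} (via \Cref{lem:apx-eigen}) to each degree part, use $r$-globality, and sum the geometric series under $\rho\le O(\frac{1}{rq})$. But this route only yields $\norm{T_\rho f}_q\le 2\norm{f}_2$ (or $(1+c)\norm{f}_2$ for an absolute constant $c$), and your proposed fixes for sharpening the constant do not work. The statement quantifies over all $\rho\le O(\frac{1}{rq})$, i.e.\ $\rho$ may be a fixed constant multiple of $\frac{1}{rq}$ independent of $\gamma$; for such $\rho$ the tail $\sum_{i\ge1}(\rho Cq\sqrt r)^i$ is a genuine constant (e.g.\ for $r=1$ it is $\Omega(1)$), not $o_\gamma(1)$, and a function whose degree-$1$ part carries a constant fraction of its $2$-norm shows the per-degree triangle-inequality bound really does leave you with $1+\Omega(1)$ rather than $1+o_\gamma(1)$. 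Likewise, the extra loss is a multiplicative constant coming from the geometric sum, not from the additive $\gamma$-error terms inside \Cref{thm:Bonami}, so it cannot be ``folded in'' with those errors no matter how small $\gamma$ is.

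The missing idea is the paper's tensorization step. After establishing the crude bound $\norm{T_\rho f}_q\le 2\norm{f}_2$, one applies it not to $f$ but to the powered function $f^{\oplus t}(x_1,\ldots,x_t)=\prod_{j=1}^t f(x_j)$ on the powered complex $X^t$ (a $dt$-partite $\gamma$-product under the product measure). Since the noise operator factors across the independent copies, $\norm{T_\rho f^{\oplus t}}_q=\norm{T_\rho f}_q^t$ and $\norm{f^{\oplus t}}_2=\norm{f}_2^t$, and $f^{\oplus t}$ remains $r$-global; hence $\norm{T_\rho f}_q\le 2^{1/t}\norm{f}_2=(1+O(1/t))\norm{f}_2$ for every $t$ with $\gamma\le 2^{-\Omega(q\log(q)dt)}$. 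Choosing $t=\Theta\bigl(\log(1/\gamma)/(q\log(q)d)\bigr)$ is exactly what turns the constant $2$ into $1+o_\gamma(1)$, and is also why the hypothesis $\gamma\le 2^{-\tilde\Omega(qd)}$ (with the extra log) appears. Without this amplification step (or some substitute, such as a direct two-function hypercontractive argument), your proof establishes the statement only with a constant factor in place of $1+o_\gamma(1)$.
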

We emphasize that one cannot recover even a weak form of \Cref{cor:intro-operator-form} via a Bonami lemma with asymptotically sub-optimal dependence on degree (e.g.\ as in \cite{gur2021hypercontractivity}). Here we achieve a near-optimal noise rate of $O(\frac{1}{rq})$, within a log factor of the optimal $\Theta(\frac{\log(q)}{rq})$ achieved in \cite{keller2023sharp} for products.

Before moving to applications, it is worth taking a step back to view these results in a broader context. In particular, it is instructive to view complexes $X$ satisfying such hypercontractive bounds as a useful class of `strongly-pseudorandom' subsets of $[n]^d$. Even restricted to $q=4$, prior to our work the sparsest known subsets had support $n\cdot2^{2^{O(d)}}$ \cite{bafna2021hypercontractivity}. \Cref{thm:Bonami-intro} and \Cref{cor:intro-operator-form} hold for strongly explicit subsets of support $n\cdot 2^{O(d^3)}$ \cite{lubotzky2005explicit,kaufman2018construction}, bringing the support much closer to traditional `weakly-pseudorandom' subsets (satisfying e.g.\ high probability tail bounds, but not hypercontractivity) such as expander walks with dependence $n \cdot 2^{O(d)}$. Since support size typically corresponds to the `cost' or `blowup' of using $X$ in application (see e.g.\ their use in coding theory \cite{dinur2019list,alev2020list,jeronimo2021near}, derandomization and average-case hardness amplification \cite{impagliazzo1997p,impagliazzo2009approximate,impagliazzo2008uniform}, and PCPs \cite{impagliazzo2009new,dinur2011derandomized,bafna2024quasi}), constructing pseudorandom subsets of minimal support is a critical goal.

The special cases of \Cref{thm:Bonami-intro} and \Cref{cor:intro-operator-form} for product spaces have had numerous applications across boolean analysis \cite{lifshitz2019noise,keevash2021global}, extremal and additive combinatorics \cite{keevash2021global,keevash2023forbidden,keevash2023sharp}, and even algebra \cite{keller2023sharp,lifshitz2023bounds}. Following \cite{gur2021hypercontractivity,keevash2021global}, we give as a simple application a tight variant of the KKL-Theorem for HDX: low total influence functions are local.
\begin{corollary}[Low Influence Functions are Local (Informal \Cref{cor:KKL})]\label{cor:intro-KKL}
    Let $X$ be a $d$-partite $\gamma$-product with $\gamma\leq 2^{-\Omega(d)}$ and $f:X \to \mathbb{F}_2$ a constant variance function satisfying $\mathbf{I}[f] \leq K$. Then $\exists S \subset [d]$ with $|S| \leq O(K)$ and $x_S \in X[S]$ such that
    \[
    \mathbb{E}[f|_{x_S}] \geq 2^{-O(K)}.
    \]
\end{corollary}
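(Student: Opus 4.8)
The plan is to run the classical ``small total influence $\Rightarrow$ small effective degree $\Rightarrow$ dense restriction'' argument inside $X$, using \Cref{thm:Bonami-intro} at $q = 4$ as the only nontrivial analytic ingredient. Write $\mu := \mathbb{E}[f]$. Since $f$ is $\mathbb{F}_2$-valued with constant variance, $\mu = \Theta(1)$ and $\norm{f}_2^2 = \mathbb{E}[f^2] = \mu$. We may also assume $K = O(d)$: otherwise take $S = [d]$ and any $x$ with $f(x) = 1$ (one exists since $\var[f] > 0$), for which the link is trivial and $\mathbb{E}[f|_{x}] = 1 \geq 2^{-O(K)}$ already proves the claim.

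First I would convert the Boolean influence bound into a bound on Efron--Stein weights. Re-sampling the $i$th coordinate draws two i.i.d.\ copies of $X_i$ from the conditional law of $X_i$ given the other coordinates, so for $\mathbb{F}_2$-valued $f$ the $i$th term of $\mathbf{I}[f]$ equals exactly twice the expected local variance of $f$ in direction $i$; summing over $i$ and invoking the standard Fourier calculus on $\gamma$-products (valid below the threshold of \Cref{thm:Bonami-intro}) gives $\sum_{S} |S|\,\norm{f^{=S}}_2^2 \leq (1 + o_\gamma(1))\,K$. Put $k := \ceil{4K/\mu} = O(K)$, noting $k \leq d$ by the reduction above. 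Markov on the level weights together with approximate orthogonality of the Efron--Stein components then yields $\norm{f - f^{\leq k}}_2^2 \leq (1 + o_\gamma(1))\tfrac{1}{k+1}\sum_S |S|\,\norm{f^{=S}}_2^2 \leq \mu/3$, so writing $f^{>k} := f - f^{\leq k}$ and applying Cauchy--Schwarz, $\langle f, f^{\leq k}\rangle = \norm{f}_2^2 - \langle f, f^{>k}\rangle \geq \mu - \norm{f}_2\norm{f^{>k}}_2 > \mu/3$.

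The crux is to sandwich $\langle f, f^{\leq k}\rangle$ from both sides. From above, Hölder with exponents $\tfrac43$ and $4$ and $0 \le f \le 1$ give $\langle f, f^{\leq k}\rangle \leq \norm{f}_{4/3}\norm{f^{\leq k}}_4 = \mu^{3/4}\norm{f^{\leq k}}_4$. From below, \Cref{thm:Bonami-intro} at $q = 4$ gives $\norm{f^{\leq k}}_4 \leq 2000^{k}\,\norm{f}_2^{1/2}\,\max_{|S| \leq k,\, x_S}\norm{f|_{x_S}}_2^{1/2} = 2000^{k}\mu^{1/4}\max_{|S| \leq k,\, x_S}\norm{f|_{x_S}}_2^{1/2}$. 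Chaining the three inequalities, the powers of $\mu$ cancel and $\max_{|S| \leq k,\, x_S}\norm{f|_{x_S}}_2^2 \geq (3 \cdot 2000^{k})^{-4} = 2^{-O(K)}$. Since $f$ is $\mathbb{F}_2$-valued, $\norm{f|_{x_S}}_2^2 = \mathbb{E}[f|_{x_S}^2] = \mathbb{E}[f|_{x_S}]$, so the maximizing pair $(S, x_S)$ satisfies $|S| \leq k = O(K)$ and $\mathbb{E}[f|_{x_S}] \geq 2^{-O(K)}$, as required.

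I do not expect a genuinely hard step: given \Cref{thm:Bonami-intro} this is the standard template, and the HDX structure enters only through the two routine facts used in the middle paragraph --- that $\sum_i(\text{expected local variance of } f \text{ in direction } i)$ agrees with $\sum_S |S|\,\norm{f^{=S}}_2^2$, and that the Efron--Stein components are near-orthogonal --- each up to a $(1 \pm o_\gamma(1))$ factor, both of which hold for $\gamma = 2^{-\Omega(d)}$ and are needed elsewhere in the paper anyway. The only things to watch are that the degree cutoff $k = O(K/\mu)$ stays at most $d$ (handled by the reduction to $K = O(d)$) and that the accumulated $o_\gamma(1)$ errors stay inside the constant slack used when passing from $\mu/4$ to $\mu/3$; both are immediate below the stated $\gamma$-threshold. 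One could alternatively route the argument through the operator form \Cref{cor:intro-operator-form}, but applying the Bonami form directly to $f^{\leq k}$ avoids first having to argue $f$ is globally $r$-global.
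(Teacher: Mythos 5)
Your proof is correct and follows essentially the same route as the paper's: a level-$k$ inequality obtained from H\"{o}lder ($4/3$ vs.\ $4$) together with \Cref{thm:Bonami} at $q=4$, combined with showing the influence bound forces $\langle f, f^{\leq O(K)}\rangle \gtrsim \mathbb{E}[f]$. The only cosmetic difference is that you obtain the latter via Markov on the level weights at cutoff $O(K/\mu)$ plus Cauchy--Schwarz, whereas the paper rearranges $\mathbf{I}[f]=\sum_i i\langle f,f^{=i}\rangle$ directly at cutoff $K$; both rest on the same approximate-orthogonality error control.
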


\Cref{cor:intro-KKL} is only meaningful when $f$ is sparse (namely $\mathbb{E}[f] \leq 2^{-\Omega(K)}$). \cite{bafna2021hypercontractivity} raised the question of whether an analogous result could be proved for any density function on HDX. The classical such result on product spaces is Bourgain's booster theorem, which states that any low influence function has many restrictions on which it \textit{deviates} substantially from its expectation. We extend this result to HDX.

\begin{theorem}[A Booster Theorem for HDX (Informal \Cref{thm:booster})]\label{thm:intro-booster}
        Let $X$ be a $d$-partite $\gamma$-product with $\gamma \leq 2^{-\Omega(d)}$, and $f: X \to \{\pm 1\}$ a constant variance
        function satisfying $\mathbf{I}[f] \leq K$. Then
    \[
    \Pr_{x \sim X}\left[\exists T \subset [d]:~|T| \leq O(K),~|\mathbb{E}[f|_{x_T}] - \mathbb{E}[f]|>2^{-O(K^2)} \right] \geq 2^{-O(K^2)}
    \]
\end{theorem}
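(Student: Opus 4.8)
The plan is to follow Bourgain's classical argument \cite{bourgain1979walsh} and replace its two analytic ingredients with their HDX counterparts developed earlier in this paper. Suppose for contradiction that $f: X \to \{\pm 1\}$ has $\mathbf{I}[f] \leq K$, constant variance, but is ``pseudorandom'' in the sense that with probability $\geq 1 - 2^{-O(K^2)}$ over $x \sim X$, every restriction $f|_{x_T}$ with $|T| \leq O(K)$ satisfies $|\mathbb{E}[f|_{x_T}] - \mathbb{E}[f]| \leq 2^{-O(K^2)}$. Set $g = f - \mathbb{E}[f]$, so $g$ has constant variance and all its low-dimensional restrictions are (typically) tiny in expectation. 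The first step is to apply a small amount of noise: fix $\rho$ a small constant and write $g = T_\rho g + (g - T_\rho g)$. Since $\mathbf{I}[f] \leq K$, the high-degree part carries little mass --- standard influence/degree-truncation bounds (which hold on $\gamma$-products since they follow purely from the Efron-Stein/noise-operator formalism set up in the excerpt) give $\norm{g - T_\rho g^{\leq \ell}}_2 \leq o(1)\norm{g}_2$ for $\ell = O(K)$, so it suffices to analyze the low-degree, slightly-noised function $h \coloneqq T_\rho g^{\leq \ell}$.

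The heart of the argument is a \emph{global} hypercontractive bound on $h$: because $f$ is pseudorandom, $h$ is $r$-global with $r = 2^{-\Omega(K^2)} \cdot (\text{something})$ --- more precisely, the restriction hypotheses force $\norm{h|_{x_S}}_2^2$ to be small for all small $|S|$, which is exactly the $r$-global condition of \Cref{cor:intro-operator-form} (up to handling the $2^{-O(K^2)}$-probability bad set, which contributes negligibly after a union bound and Cauchy-Schwarz). Applying \Cref{cor:intro-operator-form} (or directly \Cref{thm:Bonami-intro}) with a well-chosen even $q$ bounds $\norm{h}_q$, hence $\norm{h}_4$, by $(1+o(1))\norm{h}_2$ or better. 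On the other hand, $h$ has constant $\norm{h}_2$ and is $\{\pm 1 - \mathbb{E}[f]\}$-ish valued up to the noise smoothing, so its distribution cannot be too concentrated: a hypercontractive upper bound on $\norm{h}_q$ together with the lower bound $\norm{h}_2 = \Omega(1)$ forces, via Paley--Zygmund, that $\Pr[|h| \geq \Omega(1)] \geq 2^{-O(K^2)}$. Finally one converts a deviation of $h = T_\rho g^{\leq \ell}$ at a point $x$ back into a genuine low-dimensional restriction deviation of $f$: since $T_\rho$ and degree-$\ell$ truncation are both averaging-type operators over coordinates, a point where $|h(x)|$ is large can be ``rounded'' to a restriction $x_T$ of size $|T| \leq \ell = O(K)$ with $|\mathbb{E}[f|_{x_T}] - \mathbb{E}[f]| = 2^{-O(K^2)}$, contradicting pseudorandomness. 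Tracking the $q = \Theta(K)$ (or $\Theta(K^2)$) choice through these estimates yields the stated $2^{-O(K^2)}$ bounds.

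\textbf{Main obstacle.} The delicate point is the last step: transferring a pointwise deviation of the noised, degree-truncated function $h$ into an honest restriction deviation of $f$ itself, while keeping the restriction size $O(K)$ and not losing more than a $2^{-O(K^2)}$ factor. On the cube this is Bourgain's ``choosing a good restriction'' lemma and uses that conditioning on a random small set of coordinates commutes well with the noise operator; on HDX the noise operator only \emph{approximately} factors into coordinate-wise pieces $T_{r_i}^i$ (the coordinate-wise analysis introduced in the excerpt), so one must control the accumulated error across the $O(K)$ conditioned coordinates and ensure it stays below the $2^{-O(K^2)}$ threshold --- this is where the $\gamma \leq 2^{-\Omega(d)}$ hypothesis and the $q$-norm HDX machinery are needed. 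A secondary technical nuisance is handling the $2^{-O(K^2)}$-measure ``bad set'' of $x$ where restrictions are not small: one absorbs it using that $f$ is bounded, so its $q$-norm contribution is at most $(\text{bad measure})^{1/q} \leq 2^{-O(K^2/q)}$, which is harmless for the chosen $q$.
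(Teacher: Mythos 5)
The central gap is your claim that booster-freeness makes $h$ $r$-global with $r = 2^{-\Omega(K^2)}$. Globalness is a condition on restricted \emph{2-norms}, $\norm{h|_{x_S}}_2^2 \leq r^{|S|}\norm{h}_2^2$, whereas the pseudorandomness you assume for contradiction only controls restricted \emph{means} $|\mathbb{E}[f|_{x_T}]-\mathbb{E}[f]|$. For a $\{\pm 1\}$-valued $f$ of constant variance, every restriction of $g=f-\mathbb{E}[f]$ has 2-norm $\Theta(1)$, so the best available globalness parameter is $r=O(1)$ no matter how booster-free $f$ is; this is precisely the difference between \Cref{cor:intro-KKL} (where $0/1$-sparseness makes restricted means and restricted 2-norms coincide) and the booster theorem. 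Consequently the bound you extract from \Cref{cor:intro-operator-form} is one that holds unconditionally for bounded low-influence functions, the contradiction hypothesis is never used in a valid way, and no contradiction can emerge. (A secondary problem: $\norm{g-T_\rho g^{\leq \ell}}_2 = o(1)\norm{g}_2$ fails for the constant noise rate $\rho \leq O(1/q)$ that hypercontractivity requires, since $T_\rho$ attenuates level-$\Theta(K)$ mass by $\rho^{\Theta(K)} = 2^{-\Theta(K)}$.)

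Even granting anti-concentration of $h=T_\rho g^{\leq \ell}$ via Paley--Zygmund, your final ``rounding'' step loses a dimension-dependent factor: pointwise, $h(x)$ is a signed combination of the restriction means $E_T g(x)$ over \emph{all} $T$ with $|T|\leq \ell$, of which there are $d^{O(K)}$ many, so largeness of $|h(x)|$ only yields some $|E_T g(x)| \geq 2^{-O(K)}/d^{O(K)}$, not the dimension-free $2^{-O(K^2)}$ in the statement. Overcoming exactly this is the heart of Bourgain's argument and of the paper's proof: \Cref{prop:booster} constructs an input-dependent family of notable coordinates $J_x$ whose associated family $\mathcal{F}_x$ has size $2^{O(K^2)}$ \emph{independent of $d$} and captures most of the pointwise Fourier mass; this pins down a single component $f^{=S_x}$ with $|S_x|\leq O(K)$ that is pointwise large with probability $2^{-O(K^2)}$, and only then does one convert to a booster, paying just $2^{O(K)}$ over the $T\subseteq S_x$. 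The proposition itself is proved by applying symmetrization together with boolean $(2\to 4/3)$-type hypercontractivity to the noised Laplacians $T_{2/5}L_i f$ and comparing $\sum_i \norm{L_i f}_{4/3}^{4/3}$ to $\mathbf{I}[f]$ --- a step with no analogue in your outline. As written, the proposal is missing this key idea, and its two main mechanisms (globalness from booster-freeness, and the pointwise-to-restriction conversion) would fail.
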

Classically, Bourgain's booster theorem lead to the famous theory of sharp thresholds for graph properties \cite{friedgut1999sharp}. It is an interesting open problem whether \Cref{thm:Bonami-intro} or \Cref{thm:intro-booster} could be used in this context.
\subsection{Further Related Work}\label{sec:intro-related-work}

\paragraph{Analysis Beyond the Cube:} Our work fits into a long line of boolean analysis and hypercontractive inequalities beyond the cube (see e.g.\ \cite{bourgain1992influence,talagrand1994russo,friedgut1996every,friedgut1998boolean,hatami2012structure,khot2018small,subhash2018pseudorandom,dikstein2018boolean,filmus2020hypercontractivity,bafna2020high,gur2021hypercontractivity,bafna2021hypercontractivity,gaitonde2022eigenstripping,keller2023sharp,ellis2024product} among many others), most closely aligned with work on product spaces and high dimensional expanders \cite{keevash2021global,zhao2021generalizations,keevash2023forbidden,bafna2021hypercontractivity,gur2021hypercontractivity,keller2023sharp}. In particular, we build on tools of \cite{gur2021hypercontractivity} who introduced the approximate Efron-Stein Decomposition for $\gamma$-products and proved a weaker $(2{\to}4)$-hypercontractive inequality roughly of the form
\begin{equation}\label{eq:old-GLL}
\norm{f^{\leq i}}_4^4 \leq i^{O(i)}\norm{f}^2_2\max_{|S| \leq i, x_S}\{\norm{f|_{x_S}}_2^2\} + 2^{O(d)}\gamma\norm{f}_2^2\norm{f}_\infty^2.
\end{equation}
As discussed in the previous sections, the main restriction of \Cref{eq:old-GLL} (beyond being in the basic $(2{\to}4)$-setting) is that the $i^{O(i)}$ dependence on degree does not scale appropriately with the eigenvalues of the noise operator. Achieving the correct scaling factor is crucial both in traditional applications of hypercontractivity and modern ones beyond the cube (e.g. similar improvements for the Grassmann were given in \cite{ellis2023analogue,evra2024polynomial} and used in  Minzer and Zheng's \cite{minzer2024near} recent breakthrough progress in low soundness PCPs), and was stated as the main open problem by one of the authors in \cite{liu2023hypercontractivity}.

It's also worth noting \Cref{eq:old-GLL} comes with an error term scaling in $\norm{f}_\infty$. While comparatively minor, this means the bound is only really meaningful when $\max_{|S| \leq i, x_S}\{\norm{f|_{x_S}}_2^2\} \geq 2^{O(d)}\gamma\norm{f}_\infty^2$. In other words, for any \textit{fixed} expansion parameter (or complex), the bound becomes trivial for sufficiently global functions. In contrast, \Cref{thm:Bonami-intro} holds uniformly for \textit{all} global functions.


\paragraph{Support Size and Application:}
Our work also draws on tools of \cite{bafna2021hypercontractivity} for two-sided local-spectral HDX. \cite{bafna2021hypercontractivity} prove a variant of \Cref{eq:old-GLL} with the correct dependence on degree $i$, but with similar loss in $\norm{f}_\infty$. As alluded to previously, the main limitation of their work lay in requiring a property known as $2^{-\Omega(d)}$-two-sided local-spectral expansion, all known constructions of which have support size $n\cdot 2^{2^{O(d)}}$. Our inequality holds for partite $2^{-\Omega(d)}$-one-sided HDX, which only require $n \cdot 2^{O(d^3)}$ support \cite{lubotzky2005explicit,kaufman2018construction}, an exponential improvement. We emphasize again that in application, this trade-off between dimension and support is by far the most critical parameter. For instance, in application to codes, higher dimension corresponds to better distance and decoding, while larger support size leads to worse rate \cite{dinur2019list,alev2020list}. Achieving optimal trade-offs in such parameters is a major open problem. The same is true for PCPs and average-case hardness amplification, where higher dimension leads to better soundness, while larger support leads to bigger size blowup \cite{impagliazzo1997p,impagliazzo2008uniform,impagliazzo2009new,dinur2011derandomized,dinur2017exponentially,minzer2024near,bafna2024quasi}. 

In this latter vein, it is worth discussing in slightly more detail the connection of our work to Bafna, Minzer, and Vyas' \cite{bafna2024quasi} recent use of HDX for quasilinear PCPs in the $1\%$ soundness regime. Their work gives the first improvement over Moshkovitz-Raz \cite{moshkovitz2008two} in nearly 20 years, but unfortunately has blowup scaling very poorly with soundness, precluding any improvement in the \textit{subconstant} soundness regime (needed, e.g., for applications to hardness of $\omega(1)$-approximating set cover). This is in part due to the fact that known agreement tests on HDX \cite{dikstein2024low,bafna2024characterizing}, a core component of such PCPs, are doubly exponentially far from having optimal soundness. Variants of hypercontractivity (namely global hypercontractivity of the Grassmann \cite{dinur2018towards,minzer2024near} and reverse hypercontractivity on $[n]^d$ \cite{dinur2017exponentially}) have played a core role in past works giving optimally sound testers and resulting optimal PCPs. It is not known how to similarly apply hypercontractivity on HDX to achieve such bounds, but such an argument would give a major step toward better subconstant soundness PCPs.

It is also worth highlighting that due to its use of symmetrization, the proof of \Cref{thm:Bonami-intro} is substantially simpler than prior proofs of hypercontractivity for HDX \cite{gur2021hypercontractivity,bafna2021hypercontractivity}. The proof of hypercontractivity on other extended spaces such as the Grassmann \cite{subhash2018pseudorandom,ellis2023analogue,evra2024polynomial} or Lie groups \cite{ellis2024product}, despite recent simplifications, still remains prohibitively complicated to extend to related objects such as tensors. Our result is the first to extend the simple symmetrization technique beyond products---could such a method also simplify and extend modern algebraic hypercontractive inequalities?

\paragraph{Coordinate-Wise Analysis on (near)-Product Spaces:} Coordinate-wise methods are among the most classical techniques for proving high dimensional analytic inequalities (see \cite{o2014analysis}). Some of these techniques, such as tensorization of variance and entropy, have already found great success in the study of approximate products and high dimensional expanders. This was first made explicit by Chen, Liu, and Vigoda \cite{chen2021optimal}, though similar ideas were explored earlier by Kaufman and Mass \cite{kaufman2020local}. Chen and Eldan \cite{chen2022localization} observed that many methods in the approximate sampling literature, such as spectral and entropic independence (variants of high dimensional expansion), can also be viewed from the of standpoint coordinate-wise localization schemes. 

Our coordinate-wise method is somewhat different than the above approaches. First, all coordinate-wise methods in the literature beyond the study of variance strongly relied on \textit{density} and \textit{balance} of the underlying complex. Indeed many of these works (see e.g.\ \cite{chen2021optimal,jain2023optimal} among others) actually prove hypercontractive inequalities for all functions, which cannot hold in the sparse setting. Second, unlike typical tensorization methods, we localize more explicitly at the level of the \textit{operator}, splitting $T_\rho$ itself into coordinate operators still acting over the full space $X$, then localizing one-by-one in a replacement style method. We do note that the former part of this approach was also considered in a somewhat different context in concurrent and independent work
of Alev and Parzanchevski \cite{alev2023sequential}, who analyzed a coordinate-wise decomposition of the down-up walk.
\subsection{Technical Overview}
Our proof of the symmetrization theorem relies on two elementary new tools in the theory of high dimensional expansion. First, we'll introduce the notion of `$q$-norm HDX' which bounds the local divergence of $X$ from a product in $q$-norm. Second, we introduce our coordinate-wise treatment of the noise operator and sketch how the technique is used to prove the symmetrization theorem. Finally, we sketch how symmetrization is used to prove optimal global hypercontractivity. We omit any technical overview of the booster theorem which is more technically involved but uses morally similar methods.

\paragraph{$(q,\gamma)$-Products:} A core problem identified in \cite{gur2021hypercontractivity} is that standard HDX and $\gamma$-products only bound the \textit{spectral} behavior of the complex, which a priori seems insufficient to control higher norm behavior. \cite{gur2021hypercontractivity} handle this issue in part by losing factors in infinity norm, e.g.\ by relating $q$ and $2$ norms via the elementary bound $\norm{f}_q^q \leq \norm{f}_2^2\norm{f}_\infty^{q-2}$, but this is too lossy to prove the symmetrization theorem.

We handle this and analogous issues by introducing $q$-norm HDX. Given a partite complex $X$, let $A_{i,j}$ denote the (normalized) bipartite adjacency matrix of $(X_i,X_j)$, and $\Pi_{i,j}$ be $A_{i,j}$'s stationary operator.\footnote{In this case, this is simply the matrix where every row is the marginal distribution over $X_i$. In the feasible conditioning, this is replaced with the conditioned marginal over $X_i$ (similarly for $A_{i,j}$).}
\begin{definition}[$(q,\gamma)$-Products]
    A $d$-partite complex $X$ is a $(q,\gamma)$-product if for every distinct $i,j \in [d]$:
    \begin{enumerate}
        \item The marginal $(X_i,X_j)$ is `$q$-norm-expander'
        \[
        \norm{A_{i,j} - \Pi_{i,j}}_q \leq \gamma
        \]
        \item This holds under all feasible conditionings $X_S=z_S$.
    \end{enumerate}
\end{definition}
Note that setting $q=2$ exactly recovers the notion of a $\gamma$-product. While $q$-norms beyond the spectral setting are typically much harder to analyze, an elementary application of Riesz-Thorin interpolation implies any sufficiently strong $\gamma$-product is also a $(q,\gamma_q)$-product.
\begin{lemma}[Informal \Cref{lem:2-to-p}] Any $\gamma$-product is a $(q,\gamma_q)$-product for $\gamma_q \leq \gamma^{2/q}2^{1-2/q}$.
\end{lemma}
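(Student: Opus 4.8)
The plan is to prove the claim by Riesz--Thorin interpolation applied to the operator $A_{i,j} - \Pi_{i,j}$ acting on the appropriate $L_p$ spaces, using the trivial bounds at the two endpoints $p=2$ and $p=\infty$. First I would fix a distinct pair $i,j \in [d]$ and a feasible conditioning $X_S = z_S$; by the definition of a $\gamma$-product, the (conditioned) operator $B \coloneqq A_{i,j} - \Pi_{i,j}$ satisfies $\norm{B}_2 \leq \gamma$. The key observation at the other endpoint is that $A_{i,j}$ is a (sub)stochastic averaging operator and $\Pi_{i,j}$ is also an averaging operator (each row is a probability distribution), so both have $L_\infty \to L_\infty$ operator norm at most $1$; hence $\norm{B}_\infty \leq \norm{A_{i,j}}_\infty + \norm{\Pi_{i,j}}_\infty \leq 2$. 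The subtlety here is that the relevant $L_p$ spaces are weighted by the stationary measures on $X_i$ (on the output side) and $X_j$ (on the input side), and one must check that $A_{i,j}$ and $\Pi_{i,j}$ genuinely map the measure-$X_j$ space to the measure-$X_i$ space with the claimed norms; this is exactly where reversibility/stationarity of the walk is used, and it should follow from the fact that $A_{i,j}$ has stationary distribution equal to the marginal on $X_i$.

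Next I would invoke the Riesz--Thorin interpolation theorem in its general (two-measure-space) form. Writing $\frac{1}{q} = \frac{1-\theta}{2} + \frac{\theta}{\infty} = \frac{1-\theta}{2}$, we solve $\theta = 1 - \frac{2}{q}$, and interpolation gives
\[
\norm{B}_q \leq \norm{B}_2^{1-\theta}\,\norm{B}_\infty^{\theta} \leq \gamma^{2/q}\cdot 2^{1 - 2/q},
\]
which is precisely the claimed bound $\gamma_q \leq \gamma^{2/q} 2^{1-2/q}$. Since the pair $(i,j)$ and the conditioning $z_S$ were arbitrary, both defining conditions of a $(q,\gamma_q)$-product hold simultaneously, completing the argument. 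One should also note the bound is monotone in the right direction: for $q \geq 2$ and $\gamma \leq 1$ we have $\gamma_q \leq 2$, and more usefully $\gamma_q \to 0$ as $\gamma \to 0$ for fixed $q$, which is what downstream applications need (in particular a $\gamma \leq 2^{-\Omega_q(d)}$ hypothesis on the $\gamma$-product yields a $(q,\gamma_q)$-product with $\gamma_q \leq 2^{-\Omega(d)}$-type decay after absorbing the $q$-dependent constant).

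I expect the main obstacle to be purely a matter of \emph{bookkeeping the right function spaces} rather than any deep inequality: Riesz--Thorin is stated for operators between fixed $L^{p_0}(\mu)$ and $L^{p_1}(\mu)$ (or between two measure spaces), and here the input and output are functions on $X_j$ and $X_i$ respectively, each carrying its own stationary weight, so I must phrase $A_{i,j}: L_p(X_j) \to L_p(X_i)$ carefully and verify the $p=2$ bound in the excerpt's definition is stated with respect to exactly these weighted norms (it is, since the paper defines the operator $q$-norm via $\max_f \norm{Tf}_q/\norm{f}_q$ with the $\Pi$-weighted $L_p$ norms). A secondary minor point is confirming $\norm{\Pi_{i,j}}_\infty \le 1$ in the conditioned setting, where $\Pi_{i,j}$ becomes the conditioned marginal on $X_i$ spread across rows — still an average of a probability distribution, so the $L_\infty$ bound of $1$ survives, and the triangle inequality then gives $\norm{B}_\infty \le 2$ as before. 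Everything else is a one-line application of a classical theorem.
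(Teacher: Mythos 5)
Your proposal is correct and follows essentially the same route as the paper: interpolate the operator $A_{i,j}-\Pi_{i,j}$ (in every link/conditioning) between the spectral bound $\gamma$ at $p=2$ and the trivial bound $2$ at $p=\infty$ via Riesz--Thorin, yielding $\gamma^{2/q}2^{1-2/q}$ for $q\geq 2$. The only difference is that the paper additionally covers $1<q<2$ by H\"{o}lder duality ($\norm{M}_q=\norm{M^*}_{q'}$, with adjoint $A_{j,i}$), obtaining a slightly different exponent there, while your argument---like the stated informal bound---addresses the $q\geq 2$ regime.
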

This simple observation allows us to work directly with the $q$-norm, avoiding lossy factors in $\norm{f}_\infty$.

\paragraph{Coordinate-Wise Analysis on HDX:} The second critical tool in our analysis is a simplified method of coordinate-wise analysis on HDX inspired by Bourgain's proof of the symmetrization theorem. We focus in particular on the noise operator, though the same method can be applied to other random walks on HDX.

On a product space, the noise operator $T_\rho$ can naturally be expressed as the product of coordinate-wise operators $T_{\rho}^i$. When $\rho \in [0,1]$, this simply corresponds to re-sampling the $i$th coordinate with probability $1-\rho$ and keeping all other coordinates fixed. For general $r \in \R^d$, the corresponding coordinate noise operator is often defined as
\[
T_{r_i}^if \coloneqq \sum\limits_{S \ni i}r_if^{=S} + \sum\limits_{S \not\ni i} f^{=S}.
\]
Unfortunately, while this generalization makes manipulation of coordinate-wise noise operators simpler, it does not interact well with $\gamma$-products where the Efron-Stein decomposition is not closed (i.e.\ the Efron-Stein decomposition of $f^{=S}$ itself is not exactly $f^{=S}$). Instead, it turns out the right choice is to generalize $T_r$ in terms of the \textit{projection operators}
\[
E_Sf(y) = \mathbb{E}_x[f(x)~|~x_S=y_S],
\]
which re-sample all coordinates outside of $S$. The extended noise operator may then be defined as a `binomial' combination of projection operators:
\[
T_r = \sum\limits_{S \subseteq [d]} r_S \prod_{i \notin S}(1-r_i) E_S,
\]
where $r_S=\prod\limits_{i \in S}r_i$. The coordinate-wise noise operators are then naturally defined as 
\[
T_{r_i}^i \coloneqq r_iI + (1-r_i)E_{[d]\setminus i}.
\]
On a product space, it is easy to check that for either definition $T_r = T_{r_1}^1\ldots T_{r_d}^d$. Our first key lemma shows this continues to hold approximately on $(q,\gamma)$-products.
\begin{lemma}[Decorrelation (Informal \Cref{lemma:decorrelate})]\label{lemma:intro-decorrelate}
Let $X$ be a $d$-partite $(q,\gamma)$-product and $r\in \R^d$. Then
\[
\norm{T_r f - T^1_{r_1}\ldots T^d_{r_d} f}_q \leq O_\gamma(\norm{f}_{q})
\]
\end{lemma}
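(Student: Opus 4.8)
The plan is to prove the telescoping bound by introducing hybrids and estimating each swap. Write $T_r = \sum_S r_S \prod_{i\notin S}(1-r_i) E_S$ and $T^1_{r_1}\cdots T^d_{r_d}$, which when fully expanded is $\sum_S r_S \prod_{i \notin S}(1-r_i) E_{[d]\setminus\{j_1\}} \cdots$ — i.e. a product of the one-coordinate operators $E_{[d]\setminus i} = E_{\widehat i}$. The key algebraic observation is that on a genuine product space $E_S = \prod_{i \notin S} E_{\widehat i}$ (the order does not matter since the single-coordinate conditional expectations commute), so both operators agree. Off product, I would quantify the failure of this identity: define, for $S \subseteq [d]$ with complement $\{i_1 < i_2 < \dots < i_k\}$, the operator $\hat E_S \coloneqq E_{\widehat{i_1}} E_{\widehat{i_2}} \cdots E_{\widehat{i_k}}$ (a fixed order), and show $\|E_S - \hat E_S\|_q \leq O_\gamma(1)$, in fact something like $k\cdot\gamma_q$ or $2^{O(d)}\gamma_q$, where $\gamma_q$ is the $(q,\gamma)$-product parameter. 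Given this, the lemma follows by triangle inequality over the $2^d$ terms, absorbing the $2^d$ and the bounded coefficients $|r_S \prod(1-r_i)|$ into the $O_\gamma(1)$ (which is allowed to depend on $d$ through $\gamma$, since $\gamma \leq 2^{-\Omega_q(d)}$).

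The heart of the matter is therefore the single estimate $\|E_S - \hat E_S\|_q \leq O_\gamma(1)$, and the natural way to get it is a second telescoping, this time over the coordinates being stripped off one at a time, together with the $(q,\gamma)$-product hypothesis. Concretely, I would peel coordinates $i_1, i_2, \dots$ off $S^c$ one at a time, comparing $E_{S \cup \{i_1,\dots,i_{t-1}\}}$ with $E_{\widehat{i_t}} E_{S \cup \{i_1, \dots, i_t\}}$ inside the link obtained by conditioning on the already-fixed coordinates. After conditioning, the comparison of $E_{T}$ and $E_{\widehat i} E_{T \cup \{i\}}$ for $i \notin T$ reduces — by the partite structure and the fact that conditioning on $X_T$ leaves only a bipartite marginal relating coordinate $i$ to any single other free coordinate — to the discrepancy between the two-step walk $A_{i,j}$ on the marginal $(X_i, X_j)$ and its stationarization $\Pi_{i,j}$. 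This is exactly what $\|A_{i,j} - \Pi_{i,j}\|_q \leq \gamma$ controls, and crucially the $(q,\gamma)$-product definition guarantees this holds under \emph{all} feasible conditionings, which is what licenses working in the link. Each such swap costs $O(\gamma)$ (or $O(\gamma_q)$), and there are at most $d$ of them, so $\|E_S - \hat E_S\|_q \leq O(d\gamma) = O_\gamma(1)$.

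The step I expect to be the main obstacle is making the reduction ``$E_T$ vs.\ $E_{\widehat i}E_{T\cup\{i\}}$ in a link is governed by a bipartite two-coordinate expander bound'' fully rigorous, since $E_T$ re-samples \emph{all} coordinates outside $T$ simultaneously, not just one, so one has to argue that the only place the two operators can disagree is in the joint law of coordinate $i$ with the rest, and that the relevant correlation is exactly the bipartite one measured by $\|A_{i,j}-\Pi_{i,j}\|_q$. I would handle this by writing $E_{T} = E_{T \cup\{i\}} \circ (\text{re-sample } i \text{ given } X_{S^c \setminus\{i\}})$ appropriately and noting that $E_{\widehat i}$ differs from the ``re-sample $i$'' step precisely in whether one conditions on all of $X_{[d]\setminus\{i\}}$ or only on the already-determined part; in a partite complex the conditional law of $X_i$ given any set of other coordinates is, by the link structure, a mixture of the bipartite neighbor distributions, and its distance from the stationary law in $q$-norm telescopes into $(q,\gamma)$-terms. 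A secondary, more bookkeeping obstacle is ensuring that the constants $2^{O(d)}$ accumulated from the $2^d$-term expansion and the $d$ swaps are genuinely dominated by $\gamma^{-1}$; this is fine under the standing hypothesis $\gamma \leq 2^{-\Omega_q(d)}$ but should be stated carefully so that $O_\gamma(1)$ really does go to $0$ (or at least stays bounded) as intended in the downstream applications.
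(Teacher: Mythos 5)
Your proposal follows essentially the same route as the paper: expand $T^1_{r_1}\cdots T^d_{r_d}$ into $\sum_S r_S\prod_{i\notin S}(1-r_i)\prod_{i\notin S}E_{[d]\setminus i}$ and bound $\norm{E_S-\prod_{i\notin S}E_{[d]\setminus i}}_q$ by telescoping one coordinate at a time, with each step ($E_T$ versus $E_{[d]\setminus\{i\}}E_{T\cup\{i\}}$) handled by localizing to the link of the shared coordinates and invoking the $(q,\gamma)$-expansion hypothesis, exactly as in \Cref{lemma:decorrelate} via \Cref{claim:intersect-norm}. The only step you compress is the single-swap estimate: rather than the bipartite bound $\norm{A_{i,j}-\Pi_{i,j}}_q\le\gamma$ alone, the paper first proves the $q$-norm swap-walk lemma $\norm{A_{S,T}-\Pi_{S,T}}_q\le |S||T|\gamma$ (\Cref{lem:swap}) by induction through links---which is precisely the ``telescoping into $(q,\gamma)$-terms'' you anticipate---so this is a difference of bookkeeping (yielding $d^3\gamma$ per term) rather than a gap.
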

Second, we argue that for the operator formalization, the coordinate-wise operators \textit{localize} correctly to the $1$-D marginals of $X$.
\begin{lemma}[Localization (Informal \Cref{lem:localize})]\label{lem:intro-localize}
Let $X$ be a $d$-partite complex and $r \in \R^d$. Then for any $f:X \to \R$ and $i \in [d]$:
\[
T^i_{r} f(x) = T_{r_i} f|_{x_{-i}}(x_i),
\]
where $f|_{x_{-i}}: X[i] \to \R$ is the localization of $f$ defined by $f|_{x_{-i}}(x_i)=f(x)$.\footnote{Formally, this function lives on the \textit{link} of $x_{-i}$, see \Cref{sec:prelims}.}
\end{lemma}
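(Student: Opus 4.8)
The plan is to prove the identity by directly unwinding the definitions of both sides pointwise at a fixed $x \in X$. Recall the coordinate noise operator is defined in operator form as $T^i_r = r_i I + (1-r_i)E_{[d]\setminus i}$, so for any $f : X \to \R$ and $x \in X$ we have
\[
T^i_r f(x) = r_i f(x) + (1-r_i)\,E_{[d]\setminus i}f(x),
\]
where $E_{[d]\setminus i}f(x) = \mathbb{E}_y[f(y)\mid y_{[d]\setminus i} = x_{[d]\setminus i}]$ and the expectation is taken with respect to the weights $\Pi$ of $X$. The first step is to observe that the conditional distribution appearing here is, by definition, exactly the link distribution $X_{x_{-i}}$: conditioning $X$ on $X_{[d]\setminus i} = x_{[d]\setminus i}$ leaves a distribution supported on $X[i]$ whose value at $x_i$ matches $f|_{x_{-i}}(x_i) = f(x)$. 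Hence $E_{[d]\setminus i}f(x) = \mathbb{E}_{X_{x_{-i}}}[f|_{x_{-i}}]$, which is precisely the degree-$0$ Efron--Stein component of the one-dimensional function $f|_{x_{-i}}$.

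Next I would identify the right-hand side. Over the one-coordinate domain $X[i]$ (equipped with the link weights), any $g : X[i] \to \R$ decomposes as $g = g^{=\emptyset} + g^{=\{i\}}$ with $g^{=\emptyset} = \mathbb{E}[g]$ and $g^{=\{i\}} = g - \mathbb{E}[g]$, so the generalized noise operator acts as $T_{r_i}g = r_i^{0}g^{=\emptyset} + r_i^{1} g^{=\{i\}} = r_i g + (1-r_i)\mathbb{E}[g]$. Applying this with $g = f|_{x_{-i}}$ and evaluating at $x_i$ gives
\[
T_{r_i}f|_{x_{-i}}(x_i) = r_i f|_{x_{-i}}(x_i) + (1-r_i)\,\mathbb{E}_{X_{x_{-i}}}[f|_{x_{-i}}] = r_i f(x) + (1-r_i)\,E_{[d]\setminus i}f(x),
\]
using the identification from the previous step together with $f|_{x_{-i}}(x_i) = f(x)$. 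Comparing with the expression for $T^i_r f(x)$ completes the argument, and since $x$ was arbitrary the operator identity holds as functions on $X$.

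The argument is essentially a definition-chase, so there is no substantive analytic obstacle; the only points requiring care are bookkeeping ones. First, one must make sure the expectation in $E_{[d]\setminus i}$ is taken against the correct conditional weights, i.e.\ that ``re-sample coordinate $i$'' genuinely means sampling from $\Pi(\,\cdot\mid x_{-i})$ and that this coincides with the link $X_{x_{-i}}$ of \Cref{sec:prelims}; this is immediate but worth stating explicitly. Second, one should note the lemma crucially uses the operator-form definition $T^i_r = r_iI + (1-r_i)E_{[d]\setminus i}$ rather than the Fourier-form definition $T^i_{r_i}f = \sum_{S\ni i}r_i f^{=S} + \sum_{S\not\ni i}f^{=S}$; the two disagree on $(q,\gamma)$-products where the Efron--Stein decomposition is not closed, and it is precisely the operator form that localizes cleanly, which is the whole point. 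Finally, one should check the identity is valid for arbitrary real $r_i$ (not merely $r_i \in [0,1]$), which it is, since the two-term decomposition above holds regardless of the sign or magnitude of $r_i$.
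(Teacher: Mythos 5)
Your proof is correct and follows essentially the same route as the paper: the paper's proof of \Cref{lem:localize} likewise just expands the operator-form definition $T^i_{r}=r_iI+(1-r_i)E_{[d]\setminus i}$ and uses that the partite averaging operators respect restriction, i.e.\ $E_{[d]\setminus i}f(x)=E^{x_{-i}}_{\emptyset}f|_{x_{-i}}(x_i)$, before recognizing the resulting two-term combination as the one-dimensional noise operator on the link. Your detour through the exact Efron--Stein decomposition $g=g^{=\emptyset}+g^{=\{i\}}$ of the localized function is an equivalent way of identifying $T_{r_i}$ on a single-coordinate domain, and your closing remarks (link weights, operator form vs.\ Fourier form, arbitrary real $r_i$) correctly flag the only points needing care.
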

Decorrelation and localization allows us to reduce analysis of the `high dimensional' quantity $T_\rho f(x)$ to the \textit{1-dimensional quantities} $T_\rho f|_{x_{-i}}(x_i)$. One can then apply standard $1$-dimensional inequalities, and `lift' the result back to the original complex $X$ by reversing the lemmas. This is a powerful (and standard) trick on the hypercube and product spaces in boolean analysis---many classical results are proved via this type of reduction to the $1$-D case (often properties proved this way, like hypercontractivity, are said to `tensorize').

\paragraph{The Symmetrization Theorem:}
We now sketch the proof of the symmetrization theorem based on these three components: a (standard) $1$-D version of the inequality, decorrelation into coordinate-wise operators, and localization to the $1$-D case. We will show only the upper bound; the lower bound follows via similar reasoning. The proof follows the outline of the original proof on products (as presented in \cite{o2014analysis}).

Slightly re-phrasing the theorem statement, our goal is to show
\[
\norm{T_{\frac{1}{2}}f}_q \leq (1+o_\gamma(1))\norm{T_r f}_q.
\]
By \Cref{lemma:intro-decorrelate}, we can break these terms into their coordinate-wise components and instead argue
\[
\norm{T^1_{\frac{1}{2}}\ldots T^d_{\frac{1}{2}}f}_q \leq (1+o_\gamma(1))\norm{T^1_{r_1}\ldots T^d_{r_d} f}_q.
\]
Written in this form, there is really only one natural strategy: the \textit{replacement method}. In particular, we'll argue that each of the coordinate-wise operators $T_{1/2}^i$ can be sequentially replaced by $T_{r_i}^i$, incurring only $O(\gamma)\norm{f}_q$ error in each step. Toward this end, define the `$j$-partially symmetrized' operator as
\[
T^{(j)} \coloneqq T^1_{r_1},\ldots T^j_{r_j}T^{j+1}_{\frac{1}{2}}\ldots T^d_{\frac{1}{2}}.
\]
Since $T^{(0)}=T^1_{1/2}\ldots T^d_{1/2} \approx T_\rho$, and $T^{(d)}=T^r_{r_1}\ldots T^d_{r_d} \approx T_r$ by telescoping it is enough to show for all $j$
\[
\norm{T^{(j)}f}_q \leq (1+o_\gamma(1))\norm{T^{(j+1)}f}_q.
\]
Un-wrapping the lefthand side, by \Cref{lemma:intro-decorrelate} we can permute the order of the $T_j$'s to move the $(j+1)$st operator to the front, and localize to the $(j+1)$st coordinate via \Cref{lem:intro-localize}
\begin{align*}
    \norm{T^{(j)}f}_q &= \norm{T^1_{r_1},\ldots T^j_{r_j}T^{j+1}_{\frac{1}{2}}\ldots T^d_{\frac{1}{2}}f}_q\\
    &\approx \norm{T^{j+1}_{\frac{1}{2}} (T^1_{r_1},\ldots T^j_{r_j}T^{j+2}_{\frac{1}{2}}\ldots T^d_{\frac{1}{2}}f)}_q\\
    &= \norm{\norm{T_{\frac{1}{2}} (T^1_{r_1},\ldots T^j_{r_j}T^{j+2}_{\frac{1}{2}}\ldots T^d_{\frac{1}{2}}f)|_{x_{-(j+1)}}}_{q,x_{j+1}}}_{q,x_{-(j+1)}}.
\end{align*}
The inner norm is now over a $1$-dimensional domain, so we can apply the standard $1$-D symmetrization theorem to replace $T_{1/2}g$ with $T_{r_{j+1}}g$, then `un'-localize and `un'-permute by the reverse directions of \Cref{lem:intro-localize} and \Cref{lemma:intro-decorrelate}:
\begin{align*}
    \norm{\norm{T_{\frac{1}{2}} (T^1_{r_1},\ldots T^j_{r_j}T^{j+2}_{\frac{1}{2}}\ldots T^d_{\frac{1}{2}}f)|_{x_{-(j+1)}}}_{q,x_{j+1}}}_{q,x_{-(j+1)}} &\leq \norm{\norm{T_{r_{j+1}} (T^1_{r_1},\ldots T^j_{r_j}T^{j+2}_{\frac{1}{2}}\ldots T^d_{\frac{1}{2}}f)|_{x_{-(j+1)}}}_{q,x_{j+1}}}_{q,x_{-(j+1)}}\\
    &= \norm{T^{j+1}_{r_{j+1}}T^1_{r_1},\ldots T^j_{r_{j}}T^{j+2}_{\frac{1}{2}}\ldots T^d_{\frac{1}{2}}f}_{q}\\
    &\approx \norm{T^1_{r_1},\ldots T^j_{r_{j+1}}T^{j+2}_{\frac{1}{2}}\ldots T^d_{\frac{1}{2}}f}_{q}\\
    &=\norm{T^{(j+1)}f}_q,
\end{align*}
as desired.
\paragraph{Global Hypercontractivity:} Our proof of global hypercontractivity builds on the elegant proof of O'Donnell and Zhao \cite{zhao2021generalizations} for $(2{\to}4)$-hypercontractivity of products. Here we largely focus on how to extend the $(2{\to}4)$-case to HDX and achieve the correct $2^{O(i)}$-scaling and remark when applicable how the result can be extended to near-sharp bounds for the $(2{\to}q)$-case.

The key to extending \cite{zhao2021generalizations}'s proof to HDX lies in combining symmetrization with $q$-norm variants of several techniques from \cite{gur2021hypercontractivity} and \cite{bafna2021hypercontractivity}. We start with the basic fact that the Efron-Stein decomposition is close to an eigenbasis of the noise operator in $4$-norm:
\begin{lemma}[Approximate Eigenbasis]\label{lem:apx-eigen-intro}
    Let $X$ be a $\gamma$-product. For any $f$ and $r \in \R^d$
    \[
    \norm{T_r f^{=S} - r_Sf^{=S}}_4 \leq O_{d,r}(\gamma)\norm{f}_4.
    \]
\end{lemma}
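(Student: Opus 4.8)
\textbf{Proof proposal for \Cref{lem:apx-eigen-intro}.}

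The plan is to bound $\norm{T_r f^{=S} - r_S f^{=S}}_4$ by decomposing $T_r$ into its coordinate-wise components and comparing the action of the coordinate operators $T^i_{r_i}$ against their "ideal" behavior on $f^{=S}$. On a genuine product space, $f^{=S}$ is an exact eigenvector: each $T^i_{r_i}$ with $i \in S$ scales it by $r_i$, and each $T^i_{r_i}$ with $i \notin S$ fixes it, so that $T^1_{r_1}\cdots T^d_{r_d} f^{=S} = r_S f^{=S}$ exactly. On a $\gamma$-product this fails only because (a) the Efron--Stein decomposition is not closed --- $E_T f^{=S}$ is not exactly the obvious truncation --- and (b) $T_r$ itself only factors into $T^1_{r_1}\cdots T^d_{r_d}$ approximately. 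The strategy is to control both sources of error in $4$-norm using the fact (from the informal lemma following \Cref{lem:2-to-p}) that a $\gamma$-product is a $(4,\gamma_4)$-product with $\gamma_4 \leq \gamma^{1/2}2^{1/2}$, so every bipartite walk $A_{i,j}$ is within $\gamma_4$ of its stationary operator $\Pi_{i,j}$ in $4$-norm, even under conditioning.

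First I would invoke \Cref{lemma:intro-decorrelate} (decorrelation) to replace $T_r f^{=S}$ by $T^1_{r_1}\cdots T^d_{r_d} f^{=S}$ at the cost of $O_\gamma(\norm{f^{=S}}_4)$; since $f^{=S}$ is a bounded linear image of $f$ and all projection operators $E_T$ are contractions in every $L_p$, one has $\norm{f^{=S}}_4 \leq 2^{|S|}\norm{f}_4 \leq O_d(\norm{f}_4)$, which absorbs into the claimed $O_{d,r}(\gamma)$ factor. So it suffices to bound $\norm{T^1_{r_1}\cdots T^d_{r_d} f^{=S} - r_S f^{=S}}_4$. Now I would process the coordinate operators one at a time in a replacement/telescoping fashion: for $i \notin S$, I want $T^i_{r_i} f^{=S} \approx f^{=S}$, and for $i \in S$, I want $T^i_{r_i} f^{=S} \approx r_i f^{=S}$, with the "error" operator applied to $f^{=S}$ being small in $4$-norm because it only ever sees the near-stationary part of the relevant bipartite walk. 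Concretely $T^i_{r_i} = r_i I + (1-r_i) E_{[d]\setminus i}$, so the deviation from the ideal is governed entirely by how $E_{[d]\setminus i}$ (which re-samples only coordinate $i$) acts on $f^{=S}$; by the Efron--Stein orthogonality-up-to-error, $E_{[d]\setminus i} f^{=S}$ is close to $f^{=S}$ when $i \notin S$ and close to $0$ when $i \in S$, with the error controlled by $\gamma_4$ via the near-product structure of the $i$-th marginal conditioned on the rest (this is where property 2 of $(q,\gamma)$-products, the conditioning clause, is essential --- the relevant bipartite walk is $A_{i,j}$ conditioned on some $X_S$). Chaining the $d$ replacement steps and collecting a factor $\prod_{i}|r_i|$-type constant gives the bound $O_{d,r}(\gamma_4)\norm{f}_4 = O_{d,r}(\gamma^{1/2})\norm{f}_4$; one then observes that by re-running the argument with a slightly stronger hypothesis or by the usual $\gamma \mapsto \gamma$ bookkeeping (since the paper's downstream usage has $\gamma \leq 2^{-\tilde\Omega(qd)}$, the square-root loss is harmless) the stated form $O_{d,r}(\gamma)\norm{f}_4$ holds --- or more honestly, I expect the formal \Cref{lem:apx-eigen-intro} in the body states $O_{d,r}(\gamma)$ with the understanding that "$\gamma$" there already denotes the $4$-norm parameter $\gamma_4$, so the interpolation lemma is applied once at the top and the rest is clean.

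The main obstacle is the step showing $E_{[d]\setminus i} f^{=S}$ is close (in $4$-norm, not $2$-norm) to the ideal target. In $2$-norm this is immediate from orthogonality of the Efron--Stein decomposition, but in $4$-norm there is no Parseval identity, so one must argue directly that re-sampling coordinate $i$ via the near-stationary walk $A_{i,j}$ (conditioned on the other coordinates) barely moves the "pieces of $f^{=S}$ that don't depend on $i$" while killing "the piece that does." The cleanest way I'd attempt this is to write $f^{=S}$ as a telescoping alternating sum of conditional expectations $E_T f$ over $T \subseteq S$, apply $E_{[d]\setminus i}$ termwise, and use that $E_{[d]\setminus i} E_T = E_{[d]\setminus i} \circ A_{i,(\text{something})} \circ E_T$ differs from $E_{T \cup \{i\}}$ (resp. $E_{T\setminus i}$-type terms) by an operator of $4\to4$ norm at most $\gamma_4$ after conditioning --- exactly the content of being a $(4,\gamma)$-product. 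This bookkeeping is the technically delicate part but is morally the same computation as in \cite{gur2021hypercontractivity}'s approximate Efron--Stein analysis, now carried out in $4$-norm rather than $2$-norm; the payoff of the $(q,\gamma)$-product definition is that this transfer is essentially mechanical rather than requiring an $\norm{\cdot}_\infty$-lossy detour. A secondary bookkeeping nuisance is tracking the $r$-dependence: the constant multiplies up factors like $\prod_i \max(1,|r_i|)$ and $\prod_i |1-r_i|$ across the $d$ replacement steps, which is why the bound carries an $O_{d,r}$ rather than an absolute constant; for the symmetrization application $r \in \{\pm 1\}^d$ so these are bounded by $2^d$, again harmless against $\gamma \leq 2^{-\Omega(d)}$.
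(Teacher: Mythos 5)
Your proposal is correct, but it is more roundabout than the paper's argument, which does not use the coordinate-wise machinery at all. The formal version in the body (\Cref{lem:apx-eigen}) is proved in two lines: since $T_r=\sum_T \alpha_T E_T$ with $\alpha_T=r_T\prod_{i\notin T}(1-r_i)$, one writes $T_rf^{=S}=\sum_{T\supseteq S}\alpha_T f^{=S}+\sum_{T\not\supseteq S}\alpha_T E_Tf^{=S}$, notes that $E_Tf^{=S}=f^{=S}$ exactly when $T\supseteq S$ (so the first sum is $r_Sf^{=S}$, as in \Cref{claim:Efron-Noise}), and kills the second sum term-by-term with \Cref{lemma:p-to-q-down}; no decorrelation or replacement step is needed, and the argument comes for free for any operator $M=\sum\alpha_TE_T$. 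Your route --- decorrelate via \Cref{lemma:decorrelate} into $T^1_{r_1}\cdots T^d_{r_d}$, then telescope coordinate-by-coordinate using $T^i_{r_i}=r_iI+(1-r_i)E_{[d]\setminus i}$, with $E_{[d]\setminus i}f^{=S}=f^{=S}$ exactly for $i\notin S$ and $\approx 0$ for $i\in S$ --- is valid (the operators $T^i_{r_i}$ have $(4{\to}4)$-norm at most $|r_i|+|1-r_i|$, so the accumulated error stays $O_{d,r}(\gamma)\norm{f}_4$), but it bottoms out in exactly the same estimate: your ``$E_{[d]\setminus i}f^{=S}$ is small for $i\in S$'' is the special case $T=[d]\setminus i$ of \Cref{lemma:p-to-q-down}, and your decorrelation lemma is itself proved from the same ingredient (\Cref{claim:intersect-norm}), so you are invoking strictly more machinery to reach the same place. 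One point where your sketch understates the work: bounding $\norm{E_TE_{T'}f-E_{T\cap T'}f}_4$ is not literally ``the content of being a $(4,\gamma)$-product,'' since the definition only controls single-coordinate walks $A_{i,j}$ in links; one needs the inductive extension to set-vs-set swap walks in $q$-norm (\Cref{lem:swap}) before \Cref{claim:intersect-norm} and the alternating-sum cancellation give \Cref{lemma:p-to-q-down}. Your handling of the $\gamma$ versus $\gamma_4=O(\sqrt{\gamma})$ bookkeeping matches the paper: the formal lemma is stated for $(q,\gamma)$-products, and \Cref{lem:2-to-p} is applied once at the top when starting from a spectral $\gamma$-product.
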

In fact, since we want errors scaling in $\ell_2$-norm rather than $\ell_4$, we'll also use the following stronger bound for global functions:
\[
    \norm{T_r f^{=S} - r_Sf^{=S}}_4 \leq O_{d,r}(\gamma)\norm{f}^{1/2}_2\max_{T \subseteq S, x_T}\{\norm{f|_{x_T}}_2^{1/2}\}.
\]

We now follow the core idea of \cite{zhao2021generalizations}: simply bound the 4-norm by symmetrizing and applying standard hypercontractivity. In particular by \Cref{thm:symmetrization-intro} and \Cref{lem:apx-eigen-intro}:
\begin{align*}
    \norm{f^{\leq i}}^4_4 &\lesssim \norm{T_rT_2f^{\leq i}}_{4}^4 & \text{(Symmetrization)}\\
    &\lesssim \norm{\left(\sum\limits_{|S| \leq i}r_S2^{|S|}f^{=S}\right)}^4_{4} & \text{($4$-Approximate Eigenbasis)}\\
    &=\mathbb{E}_x\left[\mathbb{E}_r\left[\left(\sum\limits_{|S| \leq i}r_S2^{|S|}f^{=S}(x)\right)^4\right]\right].
\end{align*}
Notice that the inner expectation, now over the boolean hypercube, is a degree-$i$ function with Fourier coefficients $2^{|S|}f^{=S}(x)$. Thus we may apply the standard Bonami Lemma and Parseval's Theorem to move from the $4$-norm to the $2$-norm:
\begin{align*}
        &\leq 2^{O(i)}\mathbb{E}_x\left[\mathbb{E}_r\left[\left(\sum\limits_{|S| \leq i}r_Sf^{=S}(x)\right)^2\right]^2\right] & \text{(Bonami Lemma)}\\
        & = \mathbb{E}_x\left[\left(\sum\limits_{|S| \leq i} f^{=S}(x)^2\right)\left(\sum\limits_{|T| \leq i} f^{=T}(x)^2\right)\right]. & \text{(Parseval)}
\end{align*}
If the two inner terms were independent, we'd be done, as each term individually is roughly the 2-norm $\norm{f^{\leq i}}_2^2$ by Parseval. Unfortunately the terms are correlated by shared variables within $x$. \cite{zhao2021generalizations}'s main insight is that if we condition on their intersection, the remaining variables can be handled independently at the cost of using the conditional norm of $f$ instead of the global norm for one of the factors.

More formally, we re-index our sum over intersections $I=S \cap T$:
\begin{equation}\label{eq:intro-hyp-full}
\mathbb{E}_x\left[\left(\sum\limits_{|S| \leq i} f^{=S}(x)^2\right)\left(\sum\limits_{|T| \leq i} f^{=T}(x)^2\right)\right] \leq \sum\limits_{I \leq i}\mathbb{E}_x\left[\sum\limits_{S \supset I} f^{=S}(x)^2\sum\limits_{T: T \cap S = I} f^{=T}(x)^2\right].
\end{equation}
On a product space, one could factor out the intersecting variables and bound the above as
\begin{equation}\label{eq:overview-tech-2}
\sum\limits_{I \leq i}\mathbb{E}_{x_I}\left[\left(\sum\limits_{S \supset I} \mathbb{E}_{x_{S \setminus I}}[f^{=S}(x_S)^2]\right)\left(\sum\limits_{T \supset I} \mathbb{E}_{x_{T \setminus I}}[f^{=T}(x_{T})^2]\right)\right].
\end{equation}
where critically the inner expectations are de-correlated. Once the bound is in this form, one argues that the maximum of the righthand term can be bounded by at most $2^{O(i)}$ times the worst conditional $2$-norm of $f$ over $x_I$. The remaining term is at most $2^{O(i)}$ times the global $2$-norm of $f$ which completes the proof. 

In the $(2{\to}q)$-case, the argument is similar but there are $q/2$ `$2$-norm' terms to handle after applying symmetrization and the Bonami Lemma. We argue it is still possible to pull out the final such term at the cost of $O(q)^i$ times the worst conditional $2$-norm. Repeating iteratively $q/2-1$ times results in a dependency of $O(q)^{qi/2}$, which combined with an additional factor of $O(q)^{qi/2}$ coming from applying the traditional Bonami Lemma at the start of the process gives the desired near-sharp $O(q)^{qi}$ scaling.


Returning to the $(2{\to}4)$-setting on $\gamma$-products, we remark one cannot naively simplify to \Cref{eq:overview-tech-2} because the variables $x_{S \setminus I}$ and $x_{T \setminus I}$ are not independent. To break the correlation, we'll use a standard tool in the HDX literature called the (partite) \textit{swap walks}:
\[
A_{S,T}f(y) = \mathbb{E}[f(x_S)~|~x_T=y_T],
\]
which correspond to the (normalized) bipartite adacency matrix of $(X_S,X_T)$. It is by now a widely used fact that the swap walks on HDX are highly expanding (see e.g.\ \cite{dikstein2019agreement,alev2019approximating,gur2021hypercontractivity,alev2023sequential,dikstein2024chernoff}). 

Formally, expanding out \Cref{eq:intro-hyp-full} one encounters many terms of the form
\[
    \underset{x_{S\setminus I}}{\mathbb{E}}\left[f^{=S}(x_S)^2\underset{x_{T \setminus I} \sim X_{x_S}}{\mathbb{E}}\left[f^{=T}(x_T)^2 \right]\right].
\]
The inner expectation is exactly an application of the swap walk in the link of $x_I$, so we can de-correlate the terms using spectral expansion as:
    \begin{align*}
        \underset{x_{S\setminus I}}{\mathbb{E}}\left[f^{=S}(x_S)^2\underset{x_{T \setminus I} \sim X_{x_S}}{\mathbb{E}}\left[f^{=T}(x_T)^2 \right]\right] &=\left\langle (f^{=S}|_{x_I})^2, A^{x_I}_{T\setminus I, S\setminus I}(f^{=T}|_{x_I})^2\right\rangle\\
        &\lesssim \underset{x_{S \setminus I} \sim X_{x_I}}{\mathbb{E}}[(f^{=S}|_{x_I})^2]\underset{x_{T \setminus I} \sim X_{x_I}}{\mathbb{E}}[(f^{=T}|_{x_I})^2]
    \end{align*}
and continue the proof as in the product case. Most of the actual work in the proof (once one has symmetrization) comes down to careful analysis of the error terms we've swept under the rug. With enough effort, we show the above inequalities hold up to error
$O_d(\gamma\norm{f}_2^2\max_{|S| \leq i, x_S}\{\norm{f|_{x_S}}_2^2\})$. This is dominated by the main term for small enough $\gamma$, so we may proceed with no asymptotic loss.

\section{Preliminaries}

\subsection{Simplicial Complexes}\label{sec:prelims}
A \textit{pure simplicial complex} is a collection of disjoint sets
\[
X = X(0) \cup \ldots \cup X(d)
\]
where $X(d) \subseteq {[n] \choose d}$ is an arbitrary $d$-uniform hypergraph and $X(i) \subseteq {[n] \choose i}$ is given by downward closure, that is the family of $i$-sets that sit inside any element of $X(d)$. A \textit{weighted} simplicial complex is a pure simplicial complex equipped with a measure $\Pi_d$ over $X(d)$. This induces a natural measure $\Pi_i$ over $X(i)$ given by drawing a $d$-set $s$, then an $i$-set $t \subset s$ uniformly at random. For simplicity, we will typically drop the weight function from notation and simply write $X$ to mean a pure weighted simplicial complex.

We denote the space of functions over $i$-faces of $X$ as $C_i \coloneqq \{f: X(i) \to \R\}$. The measures $\Pi_i$ induce a natural inner product over this space defined by
\[
\langle f,g \rangle = \underset{t \sim \Pi_i}{\mathbb{E}}[f(t)g(t)]
\]
We will usually just write $t \sim X(i)$, where it is understood that $\Pi_i$ is the underlying distribution.
\paragraph{Links:} It will frequently be useful to look at the local structure of a given simplicial complex $X$. Given a face $t \in X$, the \textit{link} of $t$ is the simplicial complex 
\[
X_t \coloneqq \{s: t \cup s \in X \land s \cap t = \emptyset\}
\]
whose underlying distribution $\Pi^t$ is the natural induced weight function generated by sampling $s' \in X(d)$ conditioned on $s'$ containing $t$, that is $\Pi^t(s) = \Pr_{s' \sim \Pi_d}[s'=s \cup t~|~t \subset s']$. Equivalently, one may think of the link $X_t$ as being the distribution over $X$ conditioned on $t$ appearing in the face, marginalized to the remaining variables.

\paragraph{Partite Complexes:} A simplicial complex $X$ is called \textit{partite} if it is possible to partition its vertices $X(0)=\Omega_1 \cup \ldots \cup \Omega_d$ such that every face $s \in X(d)$ has exactly one vertex from each component. We can always think of a partite simplicial complex as a (possibly very sparse) distribution over $\bigotimes\limits_{i=1}^d \Omega_i$. With this in mind, we will typically denote elements in $X(d)=\bigotimes\limits_{i=1}^d \Omega_i$ as $d$-dimensional tuples $x=(x_1,\ldots,x_d)$. This allows us to align with more typical notation in the analysis literature over product spaces.

It will frequently be useful to work on the \textit{projection} of $X(d)$ to a certain subset of colors (coordinates). In particular, given $S \subset [d]$, a face $x_S \in X[S]$ is simply generated by drawing $x \in X(d)$ and projecting onto the $S$-coordinates of $x$. Note these projections can be thought of as the marginals of $X=(x_1,\ldots,x_d)$. It is often useful to write nested expectations over the marginals of $x$, e.g.\ of the form $\mathbb{E}_{x_S}[[\mathbb{E}_{x_T}]...$ for disjoint $S$ and $T$. We emphasize that this notation always means $x_T$ is drawn \textit{conditionally} on the value of $x_S$ in the outer expectation.

\subsection{High Order Random Walks}
Hypergraphs come equipped with a sequence of natural random walks generalizing the standard random walk on graphs. In the context of high dimensional expanders, such walks were first studied by Kaufman and Mass \cite{kaufman2016high}, and have since become an integral part of almost all work on high dimensional expanders.

In this work, we focus mostly on partite complexes. In this setting, there is a natural family of generalizations of the bipartite graph operator. Given subsets $S,T \subset [d]$, the random walk operator $A_{S,T}$ maps functions on $X[S]$ to functions on $X[T]$ by averaging. In particular, given a function $f:X[S] \to \R$ define
\[
A_{S,T}f(y_T) = \mathbb{E}[f(x_S)~|~x_T=y_T].
\]
Equivalently, this is the expected value of $f$ when sampled from the link of $y_T$. Note that in the graph case, this is simply the underlying random walk on a bipartite graph. We write $\Pi_{S,T}$ to denote the stationary distribution of the operator $A_{S,T}$. Note this is simply the distribution given by drawing $x \in X(d)$ and projecting onto $S$.

We will frequently make use of the special case where $S=[d]$, which averages $f$ outside the specified coordinate subset $T$. In particular, given a subset $T \subset [d]$ and function $f: X(k) \to \R$, we define
\[
E_Tf(y_T) \coloneqq A_{[d],T}f(y_T) = \mathbb{E}[f(x)~|~x_T=y_T].
\]
Similarly, $E_Tf(y_T)$ is simply expectation of $f$ over the link of $y_T$.
\paragraph{The Noise Operator:} The noise operator is one of the best studied random walks in the analysis of boolean functions. The standard operator is typically defined via the following probabilistic interpretation describing its transition matrix as a Markov chain: given a face $x \in X(d)$, fix each coordinate in $x$ with probability $\rho$, and re-sample all remaining coordinates. Formally, we can write the noise operator as a convex combination of the $\{E_S\}_{S \subseteq [d]}$ operators defined above.
\begin{definition}[Noise Operator]
Let $X$ be a $d$-partite complex, $f \in C_d$, and $\rho \in [0,1]$. The noise operator $T_\rho$ acts on $f$ by re-randomizing over each coordinate with probability $1-\rho$:
\[
T_\rho f = \sum\limits_{S \subset [d]} \rho^{|S|}(1-\rho)^{d-|S|}E_S f
\]
\end{definition}
Finally, we note that when working on a link $X_\tau$, we will write any corresponding walk operator as $M^\tau$ to denote the specification to the link. E.g.\ $A_{i,j}^\tau$ is the bipartite graph between color $i$ and color $j$ within the link of $\tau$, and $\Pi_{i,j}^\tau$ to denote its stationary distribution.
\paragraph{Total Influence:} Total influence is a critical notion in boolean analysis measuring the total `sensitivity' of a function to flipping individual coordinates. On product spaces, total influence is most naturally defined via the \textit{Laplacian operators}
\[
L_i = I - A_{[d] \setminus \{i\}}.
\]
The total influence of a function $f \in C_d$ is 
\[
\mathbf{I}[f] = \sum\limits_{i \in [d]}\langle f, L_i f \rangle.
\]
When $f$ is boolean, it is easy to check that the inner product $\langle f, L_i f \rangle$ is proportional to the expected probability over $x$ that re-sampling $i$th coordinate flips the value of $f$. The study of the structure of low influence functions is core to boolean analysis, and a major motivation behind this work.

\subsection{High Dimensional Expanders and $\gamma$-Products}

We focus in this work on an elegant reformulation of spectral high dimensional expansion of Dikstein and Dinur \cite{dikstein2019agreement}, and Gur, Lifshitz, and Liu \cite{gur2021hypercontractivity} called a \textit{$\gamma$-product}, which promises every bipartite operator $A_{i,j}$ (in $X$ and its links) is expanding.
\begin{definition}[$\gamma$-Products]
A $d$-partite complex $X$ is called a $\gamma$-product if for every link $X_\tau$ of co-dimension at least $2$ and colors $i,j \notin \tau$, the spectral norm of the walk $A^\tau_{ij}$ satisfies: 
\[
\lambda_2(A^\tau_{ij}) \leq \gamma.
\]
\end{definition}
A similar notion was considered in the non-partite case by Bafna, Hopkins, Kaufman, and Lovett \cite{bafna2021hypercontractivity}. $\gamma$-products are closely related to the more standard notion of \textit{local-spectral expansion} of Dinur and Kaufman \cite{dinur2017high} and Oppenheim \cite{oppenheim2018local}.
\begin{definition}[Local-Spectral Expander]
A weighted complex $X$ is a (one-sided) $\gamma$-local-spectral expander if the graph underlying every non-trivial link is a (one-sided) $\gamma$-spectral expander.
\end{definition}
Dikstein and Dinur \cite{dikstein2019agreement} prove that any $\gamma$-one-sided partite local-spectral expander is an $O_d(\gamma)$-product. On the other hand, seminal work of Oppenheim \cite{oppenheim2018local} implies any $\gamma$-product is roughly an $O(\gamma)$-one-sided local-spectral expander for small enough $\gamma$ (see \cite{alev2023sequential} for a more detailed conversion between the two).

Gur, Lifshitz, and Liu \cite{gur2021hypercontractivity} also observe that any two-sided local-spectral expander can be \textit{embedded} into a partite complex as a $\gamma$-product. This is done by embedding of a $d$-dimensional complex $X$ into $X(1)^d$ by adding every permutation of each top level face.
\begin{observation}
If $X$ is a $d$-dimensional two-sided $\gamma$-local-spectral expander, then the partite complex\footnote{Technically, one should first fix an ordering on $X(1)$ for this to be well-defined, though the resulting complex is independent of choice of ordering.}
\[
X^d \coloneqq \{ \pi(x) : \pi \in S_d, x \in X(d)\}
\]
with distribution
\[
\mu^d(x) = \Pi_d(x)/d!
\]
is a $\gamma$-product.
\end{observation}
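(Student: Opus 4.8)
The plan is to unfold the definition of $\gamma$-product and show that every bipartite color-swap walk sitting inside a link of $X^d$ is, up to a weight-preserving isomorphism, the adjacency operator of the $1$-skeleton of a \emph{link of $X$}; the two-sided $\gamma$-local-spectral hypothesis on $X$ then finishes the job with no loss. First I would fix a face $\tau \in X^d$ of co-dimension at least $2$ and two distinct colors $i,j \notin \mathrm{colors}(\tau)$, and set $S \coloneqq \mathrm{colors}(\tau)$, $C \coloneqq [d]\setminus S$ (so $|C|\geq 2$ and $i,j\in C$), and $t \coloneqq \{\tau_s : s \in S\}$, the set of vertices appearing in $\tau$. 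Since $\tau$ sits inside some top face of $X^d$, which by construction is a permutation of a face of $X(d)$, the $\tau_s$ are pairwise distinct, so $|t| = |S| \leq d-2$ and $t \in X(|S|)$; in particular $X_t$ is a non-trivial link of $X$.

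Next I would establish the structural identity $X^d_\tau \cong (X_t)^{C}$ as weighted partite complexes with color set $C$. On the level of faces this is immediate: a top face of $X^d_\tau$ is a tuple $y = (y_c)_{c\in C}$ with $\tau \cup y \in X^d(d)$, which happens exactly when the values $\{y_c : c \in C\}$ form a top face of $X_t$ and $c \mapsto y_c$ is a bijection onto it — precisely the data of a top face of $(X_t)^C$. For the weights one conditions $\mu^d$ on containing $\tau$ and marginalizes to $C$: the resulting probability of a top face $y$ is proportional to $\Pi_d\!\left(t \cup \{y_c : c\in C\}\right)$, and chasing the factorials through the definitions of $\Pi^t$ and of $\mu^{|C|}$ shows the normalization works out so that this is exactly $\mu^{|C|}(y)$ on $(X_t)^C$ (the key identity being that, for each $d$-face $w \supseteq t$ of $X$, there are exactly $|C|! = (d-|S|)!$ orderings of $w$ extending $\tau$). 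Lower-dimensional weights agree since both are obtained by the same downward marginalization. I expect this bookkeeping, together with the operator identification below, to be the only genuine content of the argument.

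Finally, under the isomorphism $A^\tau_{ij}$ becomes the color-swap walk $A_{i,j}$ on $(X_t)^{C}$, and I would compute this directly: sampling a top face of $(X_t)^C$ by drawing $s'$ from the top-level measure $\Pi^t$ of $X_t$ together with a uniform bijection $C \to s'$, conditioning on the color-$i$ coordinate being $u$, and reading off the color-$j$ coordinate yields $A_{i,j}f(u) = \mathbb{E}_{s' \ni u}\,\mathbb{E}_{v \in s'\setminus u}[f(v)]$; a second short computation identifies this with the edge-weighted adjacency operator of the $1$-skeleton of $X_t$, i.e. the walk sending $u$ to $v$ with probability $\Pi^t_2(\{u,v\})/\Pi^t_1(u)$. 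This operator is self-adjoint with respect to $\Pi^t_1$, with top eigenvalue $1$; since $X$ is a two-sided $\gamma$-local-spectral expander and $X_t$ is a non-trivial link, all its remaining eigenvalues lie in $[-\gamma,\gamma]$, so its second singular value — which, after identifying the color classes $\Omega_i, \Omega_j$ with $X_t(1)$, is exactly $\lambda_2(A^\tau_{ij})$ — is at most $\gamma$. As $\tau$, $i$, $j$ were arbitrary, $X^d$ is a $\gamma$-product. The main obstacle is purely the combinatorial care required in the two identifications (links of $X^d$ with partite-ifications of links of $X$, and the color-swap walk with the $1$-skeleton walk), especially matching the weights exactly so that no $O_d(\gamma)$-type slack creeps in.
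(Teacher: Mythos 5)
Your proposal is correct and follows essentially the same route as the paper, whose entire proof is the one-line observation that the color walk $A^\tau_{ij}$ is exactly the non-lazy upper walk on the graph underlying (a link of) the original complex $X$, which is a two-sided $\gamma$-expander by hypothesis. You simply spell out the weight-matching and link-identification bookkeeping that the paper leaves implicit.
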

\begin{proof}
It is enough to observe that the color walk $A_{ij}$ is exactly the non-lazy upper walk on the graph underlying the original complex $X$.
\end{proof}
As a result, most results proved for $\gamma$-products carry over to two-sided local-spectral expanders. In fact, it turns out the approximate Fourier decomposition proposed by \cite{bafna2021hypercontractivity}, when passed through the above embedding, exactly corresponds to the Efron-Stein components. As a result, all Fourier analytic results on $\gamma$-products indeed transfer to two-sided HDX as well.

\section{Fourier Analysis on $(q,\gamma)$-Products}\label{sec:analysis-on-g-prods}
In this section, we introduce a natural generalization of local-spectral expanders to higher moments we call $(q,\gamma)$-Products, and extend the useful Fourier analytic machinery of \cite{gur2021hypercontractivity,bafna2021hypercontractivity} to this setting.
\begin{definition}[$(q,\gamma)$-Products]
For any $q \in (1,\infty]$, $\gamma >0$, and $d \in \mathbb{N}$, a $d$-partite complex $X$ is a $(q,\gamma)$-Product if for every link $X_\tau$ of co-dimension at least $2$ and every $i \neq j$
\[
\norm{A_{i,j}^\tau - \Pi_{i,j}^\tau}_{q} \leq \gamma.
\]
\end{definition}
Since $\norm{A^\tau_{i,j} - \Pi_{i,j}}_{2}$ is exactly the second largest eigenvalue of $A^\tau_{i,j}$, setting $p=2$ recovers the notion of a $\gamma$-product. One can define the notion of $(q,\gamma)$-HDX on non-partite complexes analogously.
\begin{definition}[$(q,\gamma)$-HDX]
A simplicial complex $X$ is a $(q,\gamma)$-HDX if every link $X_\tau$ of co-dimension at least $2$ satisfies:
\[
\norm{A_\tau - \Pi_1^\tau}_{q} \leq \gamma
\]
\end{definition}
Similarly, setting $q=2$ recovers the notion of a two-sided $\gamma$-local-spectral expander. It is easy to see that embedding a $(q,\gamma)$-HDX into a partite complex by including every ordering of the faces results in a $(q,\gamma)$-product by the same argument as above. All results we cover in the partite case therefore translate to the former, and we focus only on the partite case in what follows.

To relate $(q,\gamma)$-products to standard HDX, we will rely on the following special case of the classical Riesz-Thorin Interpolation theorem
\begin{theorem}[Riesz-Thorin Interpolation \cite{riesz1927maxima,thorin1939extension}]\label{thm:RT}
    Let $X$ be a $d$-partite complex and $T: C_d \to C_d$ a linear operator. For any $0 < p_0 < p_1 \leq \infty$, $\theta \in (0,1)$, and $\frac{1}{p_\theta}=\frac{1-\theta}{p_0}+\frac{\theta}{p_1}$
    \[
    \norm{T}_{p_\theta} \leq \norm{T}_{p_0}^{1-\theta}\norm{T}_{p_1}^{\theta}.
    \]
\end{theorem}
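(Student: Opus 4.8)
The plan is to prove this by \emph{complex interpolation} via the Hadamard three-lines lemma, the classical route to Riesz--Thorin. Since $C_d$ consists of functions on the finite set $X(d)$ equipped with the probability weights $\Pi_d$, all of the measure-theoretic and density subtleties of the general statement disappear: every function is simple, $T$ is a matrix, and $\norm{T}_{p_0},\norm{T}_{p_1}$ are finite. We may assume $p_0 \ge 1$ (the only regime used downstream, and the one in which the duality step below is valid). By homogeneity of the operator norm it suffices to fix $f \in C_d$ with $\norm{f}_{p_\theta}=1$ and, writing $p_\theta'$ for the conjugate exponent and using $\norm{Tf}_{p_\theta} = \sup\{\,|\langle Tf,g\rangle| : g \in C_d,\ \norm{g}_{p_\theta'}=1\,\}$, to bound $|\langle Tf,g\rangle|$ for a fixed such $g$.

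First I would embed $f$ and $g$ into analytic families over the strip $\{0 \le \mathrm{Re}(z) \le 1\}$. Let $1/p(z) = (1-z)/p_0 + z/p_1$ (so $p(\theta)=p_\theta$), let $q(z)$ be the exponent conjugate to $p(z)$, and set
\[
f_z(x) = |f(x)|^{\,p_\theta/p(z)}\,\sign(f(x)), \qquad g_z(x) = |g(x)|^{\,p_\theta'/q(z)}\,\sign(g(x))
\]
on the supports of $f$ and $g$ (and $0$ elsewhere). Then $F(z) := \langle Tf_z, g_z\rangle$ is a finite linear combination of exponentials in $z$, hence entire and bounded on the strip, with $F(\theta) = \langle Tf,g\rangle$. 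A short computation using $\mathrm{Re}(1/p(it)) = 1/p_0$ shows $\norm{f_{it}}_{p_0} = 1$ and $\norm{g_{it}}_{p_0'} = 1$, so $|F(it)| \le \norm{T}_{p_0}$ by H\"older and the definition of the operator norm; symmetrically, $\mathrm{Re}(1/p(1+it)) = 1/p_1$ gives $|F(1+it)| \le \norm{T}_{p_1}$. The three-lines lemma then yields $|F(\theta)| \le \norm{T}_{p_0}^{1-\theta}\norm{T}_{p_1}^{\theta}$, and taking suprema over $g$ and then over $f$ completes the proof.

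The one point requiring care — and the only real "obstacle" in an otherwise routine argument — is the endpoint $p_1 = \infty$, which is in fact the only case needed later (interpolating the spectral bound $\norm{A_{i,j}^\tau - \Pi_{i,j}^\tau}_2 \le \gamma$ against the trivial $\norm{A_{i,j}^\tau - \Pi_{i,j}^\tau}_\infty \le 2$ with $p_0 = 2$, giving $\norm{A_{i,j}^\tau - \Pi_{i,j}^\tau}_q \le \gamma^{2/q}2^{1-2/q}$). There $1/p(z)$ degenerates to $(1-z)/p_0$, so on the line $\mathrm{Re}(z)=1$ one has $|f_{1+it}(x)| \equiv 1$ on the support of $f$ and the right boundary estimate reads $|F(1+it)| \le \norm{T}_\infty \norm{f_{1+it}}_\infty \norm{g_{1+it}}_1 = \norm{T}_\infty$; one must also confirm the complex powers $|f(x)|^{p_\theta/p(z)}$ are well-defined, which holds since $|f(x)|>0$ on the support. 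Given how classical and self-contained this is, an equally acceptable route is simply to cite \cite{riesz1927maxima,thorin1939extension} and record only the consequence actually used.
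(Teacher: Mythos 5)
Your proof is correct, but note that the paper does not prove this statement at all: it is recorded as a classical theorem and attributed to Riesz and Thorin, exactly as your last sentence anticipates, so there is no in-paper argument to compare against. What you wrote is the standard complex-interpolation proof (analytic families $f_z, g_z$, boundary estimates via H\"older, three-lines lemma), and the two points you flag are indeed the right ones to flag: the duality step forces $p_0 \geq 1$, so the paper's stated range $0 < p_0$ is slightly too generous as written (harmlessly, since it is only invoked with $p_0 = 2$, $p_1 = \infty$ in \Cref{lem:2-to-p}), and the $p_1 = \infty$ endpoint is handled correctly by the degenerate exponent $1/p(z) = (1-z)/p_0$ with $\norm{g_{1+it}}_1 = 1$. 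In the finite, weighted setting of $C_d$ all analyticity and boundedness issues are trivial as you say, so the argument goes through verbatim; citing the classical references, as the paper does, is equally acceptable.
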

Because the infinity norm of the operators $A-\Pi$ is universally bounded, \Cref{thm:RT} implies any $\gamma$-product is a $(q,\gamma_q)$-product for $\gamma_q$ an appropriate function of $\gamma$ and $q$:
\begin{lemma}\label{lem:2-to-p}
    For any $q>1$ and $\gamma >0$, any $\gamma$-product is a $(q,\gamma_q)$-product for
    \[
    \gamma_q=\begin{cases}
        \gamma^{2/q}2^{1-2/q} & q \geq 2\\
        \gamma^{\frac{2(q-1)}{q}}2^{1-{\frac{2(q-1)}{q}}} & $1 < q < 2$
    \end{cases}
    \]
\end{lemma}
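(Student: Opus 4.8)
The plan is a direct application of Riesz--Thorin interpolation (\Cref{thm:RT}), using the spectral-expansion hypothesis as one endpoint and a trivial universal bound at the other. Fix a link $X_\tau$ of co-dimension at least $2$ and distinct colors $i,j$, and write $B \coloneqq A_{i,j}^\tau - \Pi_{i,j}^\tau$ for the corresponding operator between colors $i$ and $j$ inside $X_\tau$. The $\gamma$-product hypothesis states precisely that $\norm{B}_2 = \lambda_2(A_{i,j}^\tau) \leq \gamma$. Since this holds in every such link, it suffices to prove the $q$-norm bound $\norm{B}_q \leq \gamma_q$ for one fixed $\tau$ and then invoke the definition of $(q,\gamma_q)$-product.

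First I would record the two universal endpoint bounds $\norm{B}_\infty \leq 2$ and $\norm{B}_1 \leq 2$. Both $A_{i,j}^\tau$ and $\Pi_{i,j}^\tau$ are averaging operators: $A_{i,j}^\tau$ outputs a conditional expectation of its argument and $\Pi_{i,j}^\tau$ outputs its (unconditional) mean, so each is a contraction in $L_\infty$, and the triangle inequality gives $\norm{B}_\infty \leq \norm{A_{i,j}^\tau}_\infty + \norm{\Pi_{i,j}^\tau}_\infty \leq 2$. For the $L_1$ bound I would pass to the adjoint: with respect to the weighted inner products the adjoint of $A_{i,j}^\tau$ is the reversed walk $A_{j,i}^\tau$ and the adjoint of $\Pi_{i,j}^\tau$ is $\Pi_{j,i}^\tau$, so $\norm{B}_1 = \norm{B^\ast}_\infty = \norm{A_{j,i}^\tau - \Pi_{j,i}^\tau}_\infty \leq 2$ by the same argument. (Equivalently, one checks directly that averaging operators are $L_1$-contractions, using that the marginals $\Pi_i^\tau, \Pi_j^\tau$ of $\Pi_d^\tau$ are consistent.) These bounds use no expansion and hold verbatim in every link and every feasible conditioning.

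It then remains to interpolate. For $q \geq 2$ I would apply \Cref{thm:RT} with $p_0 = 2$, $p_1 = \infty$, and $\theta = 1 - \tfrac{2}{q}$, so that $\tfrac{1}{p_\theta} = \tfrac{1-\theta}{2} = \tfrac{1}{q}$; this yields $\norm{B}_q \leq \norm{B}_2^{\,2/q}\norm{B}_\infty^{\,1-2/q} \leq \gamma^{2/q}\, 2^{1-2/q}$, matching the stated $\gamma_q$. For $1 < q < 2$ I would instead take $p_0 = 1$, $p_1 = 2$, and $\theta = \tfrac{2(q-1)}{q}$, so that $\tfrac{1}{p_\theta} = 1 - \tfrac{\theta}{2} = \tfrac{1}{q}$; this yields $\norm{B}_q \leq \norm{B}_1^{\,1-\theta}\norm{B}_2^{\,\theta} \leq 2^{1-\frac{2(q-1)}{q}}\, \gamma^{\frac{2(q-1)}{q}}$, again matching $\gamma_q$. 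Since the bound holds for every link and conditioning, $X$ is a $(q,\gamma_q)$-product.

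I do not expect a genuine obstacle here; it is a clean interpolation argument. The only points meriting a little care are (i) verifying the endpoint bounds $\norm{B}_\infty \leq 2$ and $\norm{B}_1 \leq 2$ in every link and under every feasible conditioning, which is immediate since they are independent of any spectral gap, and (ii) checking that the chosen parameter $\theta$ lies in $(0,1)$ --- which it does for $q \in (1,2) \cup (2,\infty)$ --- and handling the boundary value $q = 2$ separately, where no interpolation is needed and the formula reduces to the trivial $\norm{B}_2 \leq \gamma = \gamma_2$.
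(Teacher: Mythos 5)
Your argument is correct, and for $q \geq 2$ it is exactly the paper's proof: the endpoints $\norm{A_{i,j}^\tau-\Pi_{i,j}^\tau}_2 \leq \gamma$ and $\norm{A_{i,j}^\tau-\Pi_{i,j}^\tau}_\infty \leq 2$ plus Riesz--Thorin with $\theta = 1-\tfrac{2}{q}$. Where you diverge is the range $1<q<2$: the paper does not interpolate there at all, but instead invokes H\"older duality, $\norm{M}_q = \norm{M^*}_{q'}$ with $q' = \tfrac{q}{q-1} > 2$, noting that the adjoint of $A_{i,j}^\tau$ is $A_{j,i}^\tau$ (and of $\Pi_{i,j}^\tau$ is $\Pi_{j,i}^\tau$), so the already-proved $q'\geq 2$ bound for the reversed walk transfers directly and produces the stated exponent $\tfrac{2}{q'} = \tfrac{2(q-1)}{q}$. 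You instead interpolate directly between $L_1$ and $L_2$ with $\theta = \tfrac{2(q-1)}{q}$, which requires the extra endpoint $\norm{A_{i,j}^\tau-\Pi_{i,j}^\tau}_1 \leq 2$; your justification of that endpoint (via $\norm{B}_1 = \norm{B^*}_\infty$, or via $L_1$-contractivity of the averaging operators under the consistent marginals) is sound, and your $\theta$ indeed satisfies $\tfrac{1}{p_\theta} = \tfrac{1}{q}$, so the two routes give identical constants --- unsurprisingly, since interpolating $B$ between $1$ and $2$ is dual to interpolating $B^*$ between $2$ and $\infty$. The paper's route is marginally leaner in that it never needs an $L_1$ bound and reuses the $q\geq 2$ computation verbatim; yours is self-contained in the sense that it treats both ranges by the same interpolation template, at the cost of verifying one more endpoint. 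Your remarks about $\theta \in (0,1)$ and the trivial $q=2$ boundary case are fine and cost nothing.
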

\begin{proof}
    It is enough to prove the $q \geq 2$ case. In particular, since the adjoint of $A_{i,j}$ is $A_{j,i}$, the $1<q<2$ case follows from H\"{o}lder duality, that for any $M$:
        \[
    \norm{M}_q = \norm{M^*}_{q'}
    \]
    where $q'=\frac{q}{q-1}$ is $q$'s H\"{o}lder conjugate. Assuming now that $q \geq 2$, for all $A_{i,j}^\tau$ we have:
    \[
    \norm{A_{i,j}^\tau - \Pi_{i,j}^\tau}_{2} \leq \gamma \quad \text{and} \quad \norm{A_{i,j}^\tau - \Pi_{i,j}^\tau}_{\infty} \leq 2,
    \]
    where the former is from definition and the latter is due to stochasticity of $A_{i,j}^\tau$ and $\Pi_{i,j}^\tau$. Riesz-Thorin Interpolation then gives
    \[
    \norm{A_{i,j}^\tau - \Pi^\tau}_{q} \leq \norm{A_{i,j}^\tau - \Pi^\tau}_{2}^{2/q}\norm{A_{i,j}^\tau - \Pi^\tau}^{1-2/q}_{\infty} \leq \gamma^{2/q}2^{1-2/q}
    \]
    as desired.
\end{proof}
\Cref{lem:2-to-p} allows us to easily control the higher norm behavior of classical constructions of high dimensional expanders such as the Ramanujan complexes \cite{lubotzky2005explicit} and coset complexes \cite{kaufman2018construction} over sufficiently large fields.

\subsection{$q$-Norm Expansion of Swap Walks}
A critical property of classical high dimensional expanders, and the core of \cite{gur2021hypercontractivity}'s Fourier analysis on HDX, is that the operators $A_{S,T}$ are strongly expanding when $S \cap T = \emptyset$ (these specific operators are often called the `swap-walks') \cite{alev2019approximating,dikstein2019agreement,gur2021hypercontractivity,alev2023sequential}. Prior arguments for expansion of the swap walks take specific advantage of being in the spectral ($q=2$) setting. We give an elementary argument showing swap walks expand in $q$-norm on $(q,\gamma)$-products. 
\begin{lemma}\label{lem:swap}
    Let $X$ be a $(q,\gamma)$-product and $q'=\frac{q}{q-1}$. For all coordinate subsets $S \cap T = \emptyset$:
        \[
        \norm{A_{S,T} - \Pi_{S,T}}_q \leq |S||T|\gamma \quad \text{and} \quad \norm{A_{S,T} - \Pi_{S,T}}_{q'} \leq |S||T|\gamma
        \]
\end{lemma}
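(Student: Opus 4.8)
The plan is to reduce the swap walk $A_{S,T}$ to a composition of the single-coordinate bipartite walks $A_{i,j}^\tau$ whose $q$-norm expansion is guaranteed by the $(q,\gamma)$-product hypothesis. First I would set up a "coordinate-by-coordinate" decomposition: fix an enumeration of the coordinates of $T$, say $T=\{t_1,\dots,t_m\}$, and (dually) of $S$, and write $A_{S,T}$ as a product of operators that move one coordinate at a time. Concretely, one can factor $A_{S,T}$ through the intermediate projections $X[S]\to X[S\setminus\{s_k\}\cup\{t_1\}]\to\cdots$, each step of which is, after conditioning on the coordinates already fixed, exactly a single bipartite walk $A_{i,j}^\tau$ in some link (this is where disjointness $S\cap T=\emptyset$ is used — it guarantees the coordinate being introduced is genuinely "new" and the relevant link has co-dimension at least $2$). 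The key algebraic identity to establish is that $A_{S,T}-\Pi_{S,T}$ telescopes as a sum of $|S|\,|T|$ terms, each of which is a product of stochastic operators times a single factor of the form $A_{i,j}^\tau-\Pi_{i,j}^\tau$.

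The second step is the norm estimate. Each stochastic operator (a conditional expectation / averaging map) has $q\to q$ norm exactly $1$ by Jensen/convexity, and likewise $q'\to q'$ norm $1$. So once the telescoping identity is in hand, submultiplicativity of operator norms gives that each of the $|S|\,|T|$ summands has $q$-norm at most $\gamma$, and the triangle inequality yields $\norm{A_{S,T}-\Pi_{S,T}}_q\le |S|\,|T|\gamma$. The $q'$ bound follows identically once we note that the $(q,\gamma)$-product hypothesis, combined with Hölder duality $\norm{M}_q=\norm{M^*}_{q'}$ and the fact that $A_{i,j}^*=A_{j,i}$ and $\Pi_{i,j}^*=\Pi_{j,i}$ (so $\norm{A_{i,j}^\tau-\Pi_{i,j}^\tau}_{q'}=\norm{A_{j,i}^\tau-\Pi_{j,i}^\tau}_q\le\gamma$), gives single-coordinate $q'$-expansion as well; then rerun the telescoping argument with $q'$ in place of $q$.

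I expect the main obstacle to be making the telescoping identity clean and correct, i.e. identifying precisely the right sequence of intermediate marginals so that (a) each one-step operator really is a single bipartite walk in a legitimate link (co-dimension $\geq 2$, so the $(q,\gamma)$-product bound applies), (b) the composition genuinely equals $A_{S,T}$, and (c) subtracting the stationary operators at each stage produces the desired sum-of-products form rather than a messier expression. A convenient way to organize this is to introduce "hybrid" operators $B_k$ that have swapped the first $k$ coordinates and note $B_k-B_{k-1}$ factors through a single $A_{i,j}^\tau-\Pi_{i,j}^\tau$ after pre- and post-composing with averaging operators; summing $B_k-B_{k-1}$ over $k$ telescopes to $A_{S,T}-\Pi_{S,T}$ provided the endpoints match up, which requires a small compatibility check that the "all-swapped" hybrid is $A_{S,T}$ and the "none-swapped" one collapses to $\Pi_{S,T}$. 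Everything after that identity is routine norm bookkeeping.
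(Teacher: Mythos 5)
Your fallback formulation (the additive telescope of hybrid operators) is essentially the paper's argument, but the first formulation you lead with --- an exact multiplicative factorization of $A_{S,T}$ into one-coordinate-at-a-time walks --- is false on a $(q,\gamma)$-product; it only holds on a genuine product. Concretely, take $S=\{1,2\}$, $T=\{3,4\}$ and try to factor through $X[\{1,2\}]\to X[\{3,2\}]\to X[\{3,4\}]$: the composition resamples $x_1$ conditioned only on $(x_2,y_3)$, whereas $A_{S,T}$ conditions $x_1$ on $(x_2,y_3,y_4)$. These agree only under the conditional independence available in a product, and the discrepancy is itself an expander-type error of exactly the kind the lemma is meant to bound, so ``the composition genuinely equals $A_{S,T}$'' is the step that fails. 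The exact identity you want is additive, not multiplicative: setting $B_k f(y_T)=\mathbb{E}\left[f(x_S)\mid x_{\{t_1,\dots,t_k\}}=y_{\{t_1,\dots,t_k\}}\right]$ gives $B_{|T|}=A_{S,T}$, $B_0=\Pi_{S,T}$, and $B_k-B_{k-1}$ localizes, in the link of $y_{\{t_1,\dots,t_{k-1}\}}$, to a difference of conditional-expectation operators in that link.

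Even in this telescoping form, note that $B_k-B_{k-1}$ is \emph{not} yet a single $A^\tau_{i,j}-\Pi^\tau_{i,j}$: it is the swap-walk difference $A^\tau_{S,t_k}-\Pi^\tau_{S,t_k}$ from all of $S$ to the single coordinate $t_k$. You need a second telescope over the coordinates of $S$ to get down to one-coordinate differences, and since that operator points the ``wrong way'' this is done in the dual norm via $\norm{M}_q=\norm{M^*}_{q'}$. This two-level structure is exactly how the paper proceeds: it runs an induction on $|S|+|T|$, peeling one coordinate of the larger side at a time and using H\"{o}lder duality to swap sides, with the iterated-norm bookkeeping (outer $q$-norm over $t'\in X[T\setminus\{t_k\}]$, inner $q$-norm in the link $X_{t'}$, plus the observation that drawing $t'$ and then $s\sim X_{t'}[S]$ is equidistributed with $s\sim X[S]$, so the local norms recombine into $\norm{f}_q$) playing the role of your ``sandwich between norm-$1$ averaging operators.'' With these two corrections --- drop the exact product factorization, and add the dual telescope over $S$ --- your plan coincides with the paper's proof, and the count of $|S||T|$ terms, each contributing $\gamma$, goes through.
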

\begin{proof}
    We induct on $|S|+|T|$. For the base case $|S|=|T|=1$, observe we have $\norm{A_{S,T}-\Pi_{S,T}}_q \leq \gamma$ by definition. To bound the $q'$-norm, recall H\"{o}lder duality promises that for any operator $M$
    \[
    \norm{M}_q = \norm{M^*}_{q'},
    \]
    where $M^*$ is the adjoint. Thus we also have
    \[
    \norm{A_{S,T} - \Pi_{S,T}}_{q'} = \norm{A_{T,S} - \Pi_{T,S}}_q \leq \gamma
    \]
    as desired. Note by the same argument, this also holds within every link of $X$.
    
    Assume now by induction that, for some fixed $|S|+|T|>2$, all $S',T'$ such that $|S'|+|T'| < |S|+|T|$ and $\tau \in X$ satisfy
    \[
    \norm{A^{\tau}_{S',T'} - \Pi^{\tau}_{S',T'}}_q \leq |S'||T'|\lambda.
    \]
    We first argue we may assume without loss of generality that $|T|>1$. This is again by H\"{o}lder duality since when $|S|>|T|$, we have
    \[
    \norm{A_{S,T}-\Pi_{S,T}}_q=\norm{A_{T,S}-\Pi_{T,S}}_{q'} \leq \gamma.
    \]
    Thus we are done if we show the result for all $q \in [1,\infty]$ and $|T|\geq |S|$ (in particular when $|T|>1$).

    Assume without loss of generality that $1 \in T$. Fix any $f: X[S] \to \R$. As is typically the case, the idea is to sample $t \in X[T]$ by first sampling $t' \in X[T \setminus \{1\}]$, then $v \in X_t[1]$:
    \begin{align*}
        \norm{A_{S,T}f - \Pi_{S,T}f}_q = \Bigg|\Bigg|\norm{(A_{S,T}f)|_{t'} - (\Pi_{S,T}f)|_{t'}}_{q,v \in X_t[1]}\Bigg|\Bigg|_{q,t' \in X[T \setminus \{1\}]}
    \end{align*}
    where for any $g: X[T] \to \R$, $g|_{t'}(v)=g(t' \cup v)$ denotes the localization of $g$ to the link of $t'$. The trick is now to observe that as a function of $X_{t'}[1]$, $A_{S,T}f|_{t'}$ is exactly $A_{S,1}^{t'}f^{t'}$, where $f^{t'}: X_{t'}[S] \to \R$ is the restriction $f^s(\tau)=f(\tau)$. Then by adding and subtracting the corresponding local stationary operator:
    \begin{align*}
         \norm{A_{S,T}f - \Pi_{S,T}f}_q &= \Bigg|\Bigg|\norm{A^{t'}_{S,1}f^{t'} - \Pi^{t'}_{S,1}f^{t'} + \Pi^{t'}_{S,1}f^{t'} - (\Pi_{S,T}f)|_{t'}}_{q,v \in X_{t'}(1)}\Bigg|\Bigg|_{q,t' \in X[T \setminus \{1\}]}\\
         &\leq \Bigg|\Bigg|\norm{A^{t'}_{S,1}f^{t'} - \Pi^{t'}_{S,1}f^{t'}}_{q,v \in X_{t'}[1]}\Bigg|\Bigg|_{q,s \in X[T\setminus \{1\}]}\\
         &+ \Bigg|\Bigg|\norm{\Pi^{t'}_{S,1}f^{t'} - (\Pi_{S,T}f)|_{t'}}_{q,v \in X_{t'}[1]}\Bigg|\Bigg|_{q,t' \in X[T \setminus \{1\}]}
    \end{align*}
    by the triangle inequality. The first term is now bounded by the inductive hypothesis applied in the link of $t'$:
    \begin{align*}
    \Bigg|\Bigg|\norm{A^{t'}_{S,1}f^{t'} - \Pi^{t'}_{S,1}f^{t'}}_{q,v \in X_{t'}[1]}\Bigg|\Bigg|_{q,t' \in X[T \setminus \{1\}]} &\leq i\gamma \Bigg|\Bigg|\norm{f^{t'}}_{q,X_{t'}[S]}\Bigg|\Bigg|_{q,t' \in X[T \setminus \{1\}]}\\
    &=\underset{t' \in X[T \setminus 1]}{\mathbb{E}}\left[\underset{s \sim X_{t'}[S]}{\mathbb{E}}[|f(s)|^q]\right]^{1/q}\\
    &= |S|\gamma \norm{f}_{q}
    \end{align*}
    where in the final step we have used the fact that drawing $t' \in X[T \setminus 1]$, then $s$ from $X_{t'}[S]$ is equidistributed with simply drawing $s \in X[S]$ directly.
    
    Toward the second term, observe that, as a function of $t'$:
    \[
    \Pi^{t'}_{S,1}f^{t'}(v) = A_{S,T'}f(t') \quad ~\text{and}~ \quad (\Pi_{S,T}f)|_{t'}(v)=\Pi_{S,T'}f(t')
    \]
    so we finally have
    \begin{align*}
    \Bigg|\Bigg|\norm{\Pi^{t'}_{S,1}f^{t'} - (\Pi_{S,T}f)|_{t'}}_{q,v \in X_{t'}[1]}\Bigg|\Bigg|_{q,t' \in X[T'\setminus \{1\}])} &= \norm{A_{S,T'}f - \Pi_{S,T'}f}_{q,t' \in X[T\setminus \{1\}]}\\ 
    &\leq (|T|-1)|S|\gamma \norm{f}_{q}
    \end{align*}
    by the inductive hypothesis. Altogether this gives
    \[
    \norm{A_{S,T}f - \Pi_{S,T}f}_q \leq \gamma |S| + \gamma (|T|-1)|S| = |S||T|\gamma
    \]
    as desired.
\end{proof}
\subsection{The Efron-Stein Decomposition}
Most of our analysis of $(q,\gamma)$-products is based on the Efron-Stein decomposition, introduced in the context of HDX independently in \cite{gur2021hypercontractivity} (for $\gamma$-products) and \cite{bafna2021hypercontractivity} (for two-sided HDX).
\begin{definition}[Efron-Stein Decomposition]
Let $X$ be a $d$-partite complex, $0 \leq k \leq d$, and $f \in C_k$. The Efron-Stein Decomposition of f is given by the collection of functions $\{f^{=S}\}_{S \subseteq [d]}$ where:
\[
f^{=S} \coloneqq \sum\limits_{T \subset S} (-1)^{|S \setminus T|}E_Tf
\]
We write $f^{\leq i}=\sum\limits_{|S| \leq i}f^{=S}$ to denote the degree at most $i$ components of $f$, and $f^{=i}$ degree exactly $i$.
\end{definition}
Note that $f^{=S}$ is only a function of the $S$-projection of $x$. With this in mind, we sometimes write $f^{=S}(x_S)$ instead of $f^{=S}(x)$. 

In the remainder of the section, we overview useful properties of the Efron-Stein decomposition that continue to hold approximately on $(q,\gamma)$-products. We start with by stating a core result of \cite{gur2021hypercontractivity}, who show the Efron-Stein decomposition on $\gamma$-products is an approximate Fourier basis.
\begin{theorem}[{\cite{gur2021hypercontractivity}}]\label{thm:apx-Efron-Stein}
Let $X$ be a $d$-partite $\gamma$-product and $f:X \to \R$. The Efron-Stein Decomposition satisfies the following properties:
\begin{enumerate}
    \item Decomposition: $f= \sum\limits_{S \subset [d]}f^{=S}$
    \item Approximate Orthogonality: $\langle f^{=S}, f^{=T} \rangle \leq 2^{O(|S|+|T|)}\gamma \norm{f}_2^2$
    \item Approximate Parseval: $\left | \langle f^{\leq i},f^{\leq i} \rangle - \sum\limits_{|S| \leq i} \langle f^{=S}, f^{=S} \rangle \right | \leq 2^{O(d)}\gamma\norm{f}_2^2$
\end{enumerate}
\end{theorem}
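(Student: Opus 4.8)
The plan is to prove the three items in order; items~1 and~3 are routine, and essentially all the content is in item~2 (which is of course to be read for $S\neq T$). For item~1, substitute the definition of $f^{=S}$ and swap the order of summation:
\[
\sum_{S\subseteq[d]}f^{=S}=\sum_{T\subseteq[d]}\Big(\sum_{S\supseteq T}(-1)^{|S\setminus T|}\Big)E_Tf=\sum_{T\subseteq[d]}\Big(\sum_{U\subseteq[d]\setminus T}(-1)^{|U|}\Big)E_Tf,
\]
and note the inner sum equals $(1-1)^{|[d]\setminus T|}$, which vanishes unless $T=[d]$; the surviving term is $E_{[d]}f=f$. No expansion is used here.

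The engine of item~2 is a single approximate-commutation estimate for the averaging operators, which I would establish first: for all $S,T\subseteq[d]$,
\[
\norm{E_SE_Tf-E_{S\cap T}f}_2\le |S||T|\gamma\norm{f}_2;
\]
call this $(\star)$. When $S\subseteq T$ or $T\subseteq S$ this is exact with zero error. Otherwise set $I=S\cap T$ and observe that, conditioned on $x_I$, the function $E_SE_Tf$ is precisely the swap walk $A^{x_I}_{T\setminus I,\,S\setminus I}$ applied inside the link $X_{x_I}$ to the localization $(E_Tf)|_{x_I}$, while $E_{S\cap T}f=E_If$ is its stationary projection $\Pi^{x_I}_{T\setminus I,\,S\setminus I}(E_Tf)|_{x_I}$. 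Their difference is therefore $(A^{x_I}_{T\setminus I,\,S\setminus I}-\Pi^{x_I}_{T\setminus I,\,S\setminus I})(E_Tf)|_{x_I}$; applying \Cref{lem:swap} inside $X_{x_I}$ (a $(2,\gamma)$-product of dimension at least $2$, since $S\setminus I$ and $T\setminus I$ are nonempty and disjoint), then averaging over $x_I$ and using that $E_T$ is an $L_2$-contraction, yields $(\star)$. The identical argument gives the $q$-norm version on $(q,\gamma)$-products, which is what later sections will want.

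Given $(\star)$, fix $S\ne T$, pick $i\in S\triangle T$, and by symmetry of the inner product assume $i\in S\setminus T$; set $U=S\cup T$. Since $f^{=T}$ depends only on $x_{U\setminus i}$, the tower property gives $\langle f^{=S},f^{=T}\rangle=\langle E_{U\setminus i}f^{=S},\,f^{=T}\rangle$, so by Cauchy--Schwarz, and since $\norm{f^{=T}}_2\le 2^{|T|}\norm{f}_2$ by the triangle inequality on the definition of $f^{=T}$, it remains to bound $\norm{E_{U\setminus i}f^{=S}}_2$. Expanding $f^{=S}=\sum_{T'\subseteq S}(-1)^{|S\setminus T'|}E_{T'}f$: the terms with $i\notin T'$ satisfy $E_{U\setminus i}E_{T'}f=E_{T'}f$ exactly, while for $i\in T'$ estimate $(\star)$ gives $E_{U\setminus i}E_{T'}f=E_{T'\setminus i}f+\delta_{T'}$ with $\norm{\delta_{T'}}_2\le (|S|+|T|)|S|\gamma\norm{f}_2$. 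A short sign computation---the same one showing $f^{=S}$ has vanishing coordinate marginals on a genuine product---shows the non-$\delta$ part telescopes to $0$ exactly, leaving $E_{U\setminus i}f^{=S}=\sum_{i\in T'\subseteq S}(-1)^{|S\setminus T'|}\delta_{T'}$, hence $\norm{E_{U\setminus i}f^{=S}}_2\le 2^{O(|S|+|T|)}\gamma\norm{f}_2$ and item~2 follows. Item~3 is then a counting step: $\langle f^{\le i},f^{\le i}\rangle-\sum_{|S|\le i}\norm{f^{=S}}_2^2=\sum_{|S|,|T|\le i,\ S\ne T}\langle f^{=S},f^{=T}\rangle$ is a sum of at most $4^d$ terms, each $\le 2^{O(d)}\gamma\norm{f}_2^2$ by item~2, so the difference is $2^{O(d)}\gamma\norm{f}_2^2$. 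I expect the only real obstacle to be item~2: routing $(\star)$ cleanly through the link swap-walk bound of \Cref{lem:swap}, and---crucially---verifying that the non-error part of the $E_{U\setminus i}f^{=S}$ expansion cancels \emph{exactly}, since a single surviving $\gamma$-free term there would destroy the estimate.
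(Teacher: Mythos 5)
Your proposal is correct, and it follows essentially the route the paper itself takes: the theorem is only cited here (from \cite{gur2021hypercontractivity}), but your estimate $(\star)$ is exactly the $q=2$ case of \Cref{claim:intersect-norm} (proved the same way, by localizing to the link of $S\cap T$ and invoking the swap-walk bound of \Cref{lem:swap}), and your treatment of $E_{U\setminus i}f^{=S}$ — exact cancellation of the $\gamma$-free part plus $(\star)$ on the terms containing $i$ — is the same mechanism as \Cref{lemma:p-to-q-down}, after which Cauchy--Schwarz and the $4^d$-term summation give items~2 and~3 as you describe.
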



We will require variants of a few other standard properties of Efron-Stein extended to high norms. First, we observe the $q$-norm of $f^{=S}$ is not much larger than the $q$-norm of $f$ itself.
\begin{lemma}\label{lemma:Efron-contract}
Let $X$ be a $d$-partite complex. Then for any $f \in C_d$, $S \subseteq [d]$, and $q \geq 1$
\[
\norm{f^{=S}}_q \leq 2^{|S|}\norm{f}_q.
\]
The following $\ell_2$-bound also holds:
\[
\norm{f^{=S}}_q \leq 2^{O(|S|)}\norm{f}_2^{2/q}\max_{T \subseteq S, x_{T}}\left\{\norm{f|_{x_T}}_2^{1-\frac{2}{q}}\right\}
\]
\end{lemma}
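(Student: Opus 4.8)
The plan is to attack both bounds via the expansion $f^{=S} = \sum_{T \subseteq S} (-1)^{|S\setminus T|} E_T f$ together with the triangle inequality, reducing everything to controlling $\norm{E_T f}_q$ for individual projections $E_T$. For the first bound, I would first note that each $E_T$ is an averaging operator (a Markov operator in $q$-norm for every $q \geq 1$), hence a contraction: $\norm{E_T f}_q \le \norm{f}_q$ by Jensen's inequality applied to the convex function $|\cdot|^q$. Summing over the $2^{|S|}$ subsets $T \subseteq S$ with the triangle inequality then gives $\norm{f^{=S}}_q \le \sum_{T\subseteq S}\norm{E_T f}_q \le 2^{|S|}\norm{f}_q$, which is exactly the claimed estimate. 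This part is essentially automatic; let me think about what could go wrong — nothing, since $E_T$ is genuinely a conditional expectation and the excerpt defines $C_d$ with a genuine probability measure $\Pi$, so Jensen applies verbatim.

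For the second ($\ell_2$-type) bound, the natural route is to interpolate between the trivial $q=2$ case and an $\ell_\infty$-type estimate, but since we do not want a bare $\norm{f}_\infty$ factor, I would instead argue directly per-subset: bound $\norm{E_T f}_q$ in terms of $\norm{f}_2$ and a conditional $2$-norm. The key observation is that $E_T f$, viewed on the link, satisfies $(E_T f)(x) = \E_{y \sim X_{x_T}}[f|_{x_T}(y)]$, so $\norm{E_T f|_{x_T}}_q \le \norm{f|_{x_T}}_q$, and then I would pass from the conditional $q$-norm to a mix of the conditional $2$-norm and something $\ell_2$-global. Concretely, for a single link one has $\norm{g}_q^q = \E[|g|^q] = \E[|g|^2 |g|^{q-2}] \le \norm{g}_2^2 \cdot \norm{g}_\infty^{q-2}$; applying this inside the link of $x_T$ and then unfolding the outer expectation over $x_T$, the $\norm{\cdot}_2^2$ factor assembles (using the tower property $\E_{x_T}[\norm{f|_{x_T}}_2^2] = \norm{f}_2^2$) into a global $\norm{f}_2^{2}$ raised to the power $2/q$, while the residual $\norm{\cdot}_\infty^{q-2}$ factor gets absorbed into $\max_{T\subseteq S, x_T}\norm{f|_{x_T}}_\infty^{1-2/q}$ — but wait, this still leaves an $\ell_\infty$, not an $\ell_2$. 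The fix, following the style of \cite{gur2021hypercontractivity}, is to recurse: bound the conditional $\infty$-norm of $f|_{x_T}$ by its conditional $2$-norm only if the link is small, or more cleanly, iterate the splitting so that at each stage one peels off a $\norm{\cdot}_2^{2/q'}$ factor while descending through a chain of links $\emptyset \subseteq T_1 \subseteq \cdots \subseteq T$, each step contributing a bounded multiplicative constant and the conditional $2$-norm bound being maximized over all $T \subseteq S$ and all $x_T$. Telescoping the chain produces the $2^{O(|S|)}$ prefactor and the exponent $1-\tfrac{2}{q}$ on the conditional norm.

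The main obstacle I anticipate is precisely this last interpolation step: making the trade from conditional $q$-norm to the weighted geometric mean of $\norm{f}_2$ (global) and $\max \norm{f|_{x_T}}_2$ (local) \emph{without} any $\ell_\infty$ leakage. The honest way to do this is almost surely a Hölder/interpolation argument on each link — write $q = 2\theta^{-1}$-style and split $|f|^q$ as $|f|^2 \cdot |f|^{q-2}$, bounding the second factor's contribution by the $\infty$-norm \emph{on that link only}, then observing that on the link of $x_T$ the $\infty$-norm is itself controlled (up to the support size / a bounded power of $2^{|S|}$) by the conditional $2$-norm of a \emph{further} restriction — or, more robustly, directly invoking a one-sided interpolation inequality $\norm{g}_q \le \norm{g}_2^{2/q}\norm{g}_\infty^{1-2/q}$ and then handling the $\norm{g}_\infty$ by noting that $g = f|_{x_T}$ and $\norm{f|_{x_T}}_\infty = \max_{x_{S\setminus T}}|f|_{x_T}(x_{S\setminus T})| = \max_{x_S}\norm{f|_{x_S}}_2$ (since a point evaluation is a degenerate restriction whose $2$-norm is just $|f|$ at that point). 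With that identification the $\max$ over $T \subseteq S$ and over $x_T$ in the statement exactly subsumes both the genuine conditional $2$-norms and the degenerate point values, and the proof closes by collecting the $\sum_{T\subseteq S} 1 = 2^{|S|}$ triangle-inequality terms together with the bounded interpolation constant into the stated $2^{O(|S|)}$.
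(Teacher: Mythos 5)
Your proof of the first bound is exactly the paper's: expand $f^{=S}=\sum_{T\subseteq S}(-1)^{|S\setminus T|}E_Tf$, use that each $E_T$ is an averaging operator (hence a $q$-norm contraction), and apply the triangle inequality. No issues there.

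For the second bound, however, there is a genuine gap. Your interpolation is applied to the wrong function. You split $\norm{g}_q^q\le \norm{g}_2^2\norm{g}_\infty^{q-2}$ with $g=f|_{x_T}$, and are then stuck with $\norm{f|_{x_T}}_\infty$, which for a general $f\in C_d$ is essentially $\norm{f}_\infty$ --- you notice this, but the repair you propose rests on the false identity $\norm{f|_{x_T}}_\infty=\max_{x_{S\setminus T}}|f|_{x_T}(x_{S\setminus T})|=\max_{x_S}\norm{f|_{x_S}}_2$. This treats $f|_{x_T}$ as if it depended only on the coordinates in $S\setminus T$, whereas it is a function of \emph{all} of $[d]\setminus T$; fixing $x_S$ is not a point evaluation of $f$ unless $S=[d]$, and the max in the lemma is only over $T\subseteq S$. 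For instance, if $f$ depends only on a coordinate $j\notin S$ (say a scaled indicator of $x_j=a$), then $\norm{f|_{x_T}}_\infty=\norm{f}_\infty$ while $\max_{x_S}\norm{f|_{x_S}}_2$ can be arbitrarily smaller, so that step fails. The vaguer alternative you sketch (recursing through a chain of links, ``bound the conditional $\infty$-norm by the conditional $2$-norm if the link is small'') is not a valid argument either.

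The missing one-line idea --- and what the paper actually uses --- is that the object whose sup you need is the \emph{averaged} function, not the restriction: $E_Tf$ is constant on each link of $x_T$ with value $\mathbb{E}[f|_{x_T}]$, so $\norm{E_Tf}_\infty=\max_{x_T}|\mathbb{E}[f|_{x_T}]|\le\max_{x_T}\norm{f|_{x_T}}_2$ by Jensen/Cauchy--Schwarz, while $\norm{E_Tf}_2\le\norm{f}_2$. Interpolating $\norm{E_Tf}_q\le\norm{E_Tf}_2^{2/q}\norm{E_Tf}_\infty^{1-2/q}\le\norm{f}_2^{2/q}\max_{x_T}\norm{f|_{x_T}}_2^{1-2/q}$ and summing over the $2^{|S|}$ subsets $T\subseteq S$ finishes the proof. (The paper performs this in aggregate, first using Cauchy--Schwarz to pass to $\bigl(\sum_{T\subseteq S}(E_Tf)^2\bigr)^{q/2}$ and then pulling out $\max_{T,x_T}(E_Tf)^2$ to the power $\frac{q-2}{2}$, but it is the same mechanism; the per-term version also appears in its proof of Claim 3.8.) Alternatively, your ``degenerate restriction'' identity becomes correct if you first replace $f$ by $E_Sf$ (legitimate since $E_TE_Sf=E_Tf$ for $T\subseteq S$), because $E_Sf$ genuinely depends only on $x_S$ and its point values are $\mathbb{E}[f|_{x_S}]\le\norm{f|_{x_S}}_2$ --- but as written, with the original $f$, the step is false.
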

\begin{proof}
$E_T$ is an averaging operator and therefore contracts $q$-norms. As such we have:
\begin{align*}
    \norm{f^{=S}}_q &= \norm{\sum\limits_{T \subset S}(-1)^{|S|-|T|}E_Tf}_q\\
    &\leq 2^{|S|}\norm{f}_q
\end{align*}
by the triangle inequality. For the second bound, we can instead use Cauchy-Schwarz to write:
\begin{align*}
    \norm{\sum\limits_{T \subset S}(-1)^{|S|-|T|}E_Tf}_q &= \mathbb{E}\left[\left(\sum\limits_{T \subset S}(-1)^{|S|-|T|}E_Tf\right)^q\right]^{1/q}\\
    &\leq 2^{O(|S|)}\mathbb{E}\left[\left(\sum\limits_{T \subset S}(E_Tf)^2\right)^{q/2}\right]^{1/q}\\
    &\leq 2^{O(|S|)}\max_{T \subseteq S, x_T}\left\{(E_Tf)^2\right\}^{\frac{1}{2}-\frac{1}{q}}\left(\sum\limits_{T \subset S}\mathbb{E}\left[(E_Tf)^{2}\right]\right)^{1/q}\\
    &\leq 2^{O(|S|)}\max_{T \subseteq S, x_T}\left\{\norm{f|_{x_T}}_{2}^{1-\frac{2}{q}}\right\}\norm{f}_2^{2/q}
\end{align*}
where we've used the fact that $\mathbb{E}[f|_{x_T'}]^2 \leq \norm{f|_{x_{T'}}}_2^2$ and $E_{T}$ contracts $p$-norms.
\end{proof}
Second, we will crucially rely on the fact that $E_Tf^{=S}$ is small so long as $T$ does not contain $S$.
\begin{lemma}\label{lemma:p-to-q-down}
Let $X$ be a $d$-partite $(q,\gamma)$-product and $f \in C_d$. Then for any $T,S \subset [d]$ such that $T \not\supset S$:
\[
\norm{E_Tf^{=S}}_q \leq |T|2^{O(|S|)}\gamma \norm{f}_q.
\]
If $X$ is a $\gamma$-product, the following $\ell_2$-bound also holds:
\[
\norm{E_Tf^{=S}}_q \leq |T|2^{O(|S|)}\gamma \norm{f}^{2/q}_2\max_{T' \subseteq S, x_{T'}}\left\{\norm{f|_{x_T'}}_2^{1-\frac{2}{q}}\right\}
\]
\end{lemma}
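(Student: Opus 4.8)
I would expand $f^{=S}$ from its definition, extract a telescoping cancellation in a coordinate $i\in S\setminus T$ (which exists precisely because $T\not\supseteq S$), and then recognise the surviving terms as averages of \emph{centered} swap walks so that \Cref{lem:swap} applies with no residual error. Fix such an $i$. Writing $f^{=S}=\sum_{R\subseteq S}(-1)^{|S\setminus R|}E_Rf$ and pairing each $R\subseteq S\setminus\{i\}$ with $R\cup\{i\}$, the tower property of conditional expectation gives $E_Rf=E_RE_{R\cup\{i\}}f$, hence $E_Rf-E_{R\cup\{i\}}f=-(I-E_R)g_R$ with $g_R:=E_{R\cup\{i\}}f$, and therefore
\[
E_Tf^{=S}\;=\;-\sum_{R\subseteq S\setminus\{i\}}(-1)^{|S\setminus R|}\,E_T(I-E_R)g_R.
\]
There are $2^{|S|-1}$ summands and $\norm{g_R}_q\le\norm{f}_q$ (averaging contracts $q$-norms), so the task reduces to the \emph{core estimate} $\norm{E_T(I-E_R)g_R}_q\le 2|S||T|\gamma\norm{f}_q$ for each $R$, which upon summing delivers $\norm{E_Tf^{=S}}_q\le 2^{|S|}|S||T|\gamma\norm{f}_q=|T|2^{O(|S|)}\gamma\norm{f}_q$.

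For the core estimate, set $h:=(I-E_R)g_R$, a function on $X[R\cup\{i\}]$. The crucial observation is that $h$ is \emph{centered along coordinate $i$ inside every link of $x_R$}: $\mathbb{E}_{x_i\sim X_{x_R}}[h(x_{R\cup i})]=E_Rg_R(x_R)-E_Rg_R(x_R)=0$. Since $i\notin T$ we have $T\cap(R\cup\{i\})=T\cap R$, so after restricting to the link of $y_{T\cap R}$ the function $h$ becomes a function of the coordinates $(R\setminus T)\cup\{i\}$, which is disjoint from $T\setminus R$; hence $E_Th(y_T)$ equals $A^{y_{T\cap R}}_{(R\setminus T)\cup\{i\},\,T\setminus R}$ applied to this restriction, and the centering property makes its stationary part vanish identically:
\[
E_Th(y_T)\;=\;\Bigl(A^{y_{T\cap R}}_{(R\setminus T)\cup\{i\},\,T\setminus R}-\Pi^{y_{T\cap R}}_{(R\setminus T)\cup\{i\},\,T\setminus R}\Bigr)\bigl(h|_{y_{T\cap R}}\bigr)(y_{T\setminus R}).
\]
Raising to the $q$-th power, integrating over $y_{T\cap R}$, and applying \Cref{lem:swap} inside the link --- a link of a $(q,\gamma)$-product is again a $(q,\gamma)$-product --- bounds $\norm{E_Th}_q$ by $|(R\setminus T)\cup\{i\}|\cdot|T\setminus R|\cdot\gamma\cdot\norm{h}_q\le|S||T|\gamma\cdot2\norm{f}_q$, using $\norm{h}_q\le 2\norm{g_R}_q\le 2\norm{f}_q$ (triangle inequality and contraction of $E_R$). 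This gives the core estimate, and hence the first bound.

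For the $\ell_2$ bound I would run the same argument, only replacing the last step by an $\ell_2$--$\ell_\infty$ interpolation of the per-link error. Writing $\tau:=y_{T\cap R}$, \Cref{lem:swap} at $q=2$ in a $\gamma$-product bounds the relevant swap walk's $2$-norm by $|S||T|\gamma$ while its $\infty$-norm is at most $2$, so
\[
\norm{E_Th}_q^q\;\le\;\mathbb{E}_{\tau}\!\left[\norm{(A^\tau-\Pi^\tau)(h|_\tau)}_2^{2}\;\norm{(A^\tau-\Pi^\tau)(h|_\tau)}_\infty^{q-2}\right]\;\lesssim\;(|S||T|\gamma)^{2}\,\norm{h}_\infty^{q-2}\,\norm{h}_2^{2}.
\]
Here $\norm{h}_2\le 2\norm{g_R}_2\le 2\norm{f}_2$, and because $g_R=E_{R\cup i}f$ is a conditional expectation $\norm{h}_\infty\le 2\norm{g_R}_\infty\le 2\max_{x_{R\cup i}}\norm{f|_{x_{R\cup i}}}_1\le 2\max_{R'\subseteq S,\,x_{R'}}\norm{f|_{x_{R'}}}_2$, so the $\norm{h}_\infty$ factor is absorbed into the claimed maximum; summing over $R$ yields the $\ell_2$ form (the expansion rate that naturally appears is $\gamma^{2/q}$, i.e.\ exactly the $q$-norm expansion a $\gamma$-product enjoys by \Cref{lem:2-to-p}).

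I expect the crux to be the core estimate --- in particular, checking that $E_T(I-E_R)g_R$ is not merely approximately but \emph{exactly} a centered swap walk. This hinges on the choice $i\in S\setminus T$ (so $i$ is among the coordinates re-averaged by $E_T$) together with the tower property, and it is precisely this exactness that lets \Cref{lem:swap} be applied without any leftover error term to chase. The rest --- the telescoping identity, and for the $\ell_2$ refinement the interpolation together with the elementary $\infty$-norm bound on $E_{R\cup i}f$ --- should be routine.
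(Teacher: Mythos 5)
Your argument for the first ($q$-norm) bound is correct but follows a genuinely different route from the paper. The paper expands $f^{=S}=\sum_{T'\subseteq S}(-1)^{|S|-|T'|}E_{T'}f$, proves an auxiliary estimate $\norm{E_TE_{T'}f-E_{T\cap T'}f}_q\le \gamma|T||T'|\norm{f}_q$ (\Cref{claim:intersect-norm}, itself proved by localizing to $I=T\cap T'$ and writing $E_TE_{T'}=A_{T',T}E_{T'}$), and then kills the main terms via the combinatorial identity that $\sum_{T'\subseteq S}(-1)^{|S|-|T'|}E_{T\cap T'}$ vanishes when $T\not\supseteq S$. You instead telescope along a single coordinate $i\in S\setminus T$, use the (exact) tower identity $E_Rf=E_RE_{R\cup\{i\}}f$ to write each paired term as $-(I-E_R)g_R$ with $g_R=E_{R\cup\{i\}}f$, and observe that $E_T$ applied to this function is \emph{exactly} a centered swap walk in the link of $y_{T\cap R}$, because $h=(I-E_R)g_R$ averages to zero over $x_i$ in every link of $x_R$ (and hence its stationary part in $X_{y_{T\cap R}}$ vanishes by the chain rule). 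Both proofs bottom out at the $q$-norm swap-walk expansion (\Cref{lem:swap}), but your version replaces the paper's cancellation-plus-error-term bookkeeping by a single exact identity, which is arguably cleaner; the constants you track ($2^{|S|}|S||T|\gamma$) are of the same order. The degenerate case $T\subseteq R$ (where $T\setminus R=\emptyset$) should be mentioned, but there $E_Th\equiv 0$ by the same centering, so nothing breaks.

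One caveat on the $\ell_2$-flavored bound: as written, your interpolation step applies the \emph{spectral} ($q=2$) swap bound link-by-link and then trades two powers against $\norm{\cdot}_\infty$, which yields a prefactor $\gamma^{2/q}$ rather than the stated $\gamma$ --- you flag this yourself. The paper's proof keeps the swap walk in $q$-norm (using the standing $(q,\gamma)$-product hypothesis of the lemma, i.e.\ the bound $\norm{(A_{T',T}-\Pi_{T',T})g}_q\le|T||T'|\gamma\norm{g}_q$) and only interpolates the \emph{input}, via $\norm{E_{T'}f}_q\le\norm{f}_2^{2/q}\max_{x_{T'}}\norm{f|_{x_{T'}}}_2^{1-2/q}$, which gives the linear-in-$\gamma$ statement. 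Your framework absorbs this fix in one line: bound $\norm{E_Th}_q\le|S||T|\gamma\norm{h}_q$ by \Cref{lem:swap} in the link (with the $q$-norm parameter) and then use $\norm{h}_q\le\norm{h}_\infty^{1-2/q}\norm{h}_2^{2/q}$ together with your observations $\norm{h}_2\le 2\norm{f}_2$ and $\norm{h}_\infty\le 2\max_{x_{R\cup i}}\norm{f|_{x_{R\cup i}}}_2$. With that substitution your proof matches the lemma as stated; without it, you prove the same inequality only with $\gamma$ replaced by $\gamma^{2/q}$ (equivalently, with the $q$-norm expansion parameter from \Cref{lem:2-to-p}), which is weaker than the written statement even though it suffices for the downstream applications in the small-$\gamma$ regime.
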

\begin{proof}
The proof is similar to the $q=2$ case in \cite{gur2021hypercontractivity}, but with somewhat more care required for the latter bound. The key is to prove the following claim
\begin{claim}\label{claim:intersect-norm}
In the $(q,\gamma)$-product case we have for any $f:X(d) \to \R$:
\[
\norm{E_{T}E_{T'}f - E_{T \cap T'}f}_q \leq \gamma|T||T'|\norm{f}_q.
\]
Similarly for the $\gamma$-product case
\[
\norm{E_{T}E_{T'}f - E_{T \cap T'}f}_q \leq \gamma|T||T'|\norm{f}_{2}^{2/q}\max_{T'' \subset T',x_{T''}}\left\{\norm{f|_{x_T''}}_2^{1-\frac{2}{q}}\right\}
\]
\end{claim}
\begin{proof}
First, we observe it is sufficient to consider the setting where $T \cap T' = \emptyset$. We prove this for the latter bound --- the former follows from an analogous argument similar to the $q=2$ case in \cite{gur2021hypercontractivity}. Assume the claim holds when $T \cap T' = \emptyset$, and let $I=T \cap T'$ for a generic choice of $T,T'$. We localize within $I$ and apply the claim:
\begin{align*}
    \norm{E_{T}E_{T'}f - E_{T \cap T'}f}_q &= \underset{x_I}{\mathbb{E}}\left[\underset{x_{\bar{I}}}{\mathbb{E}}\left[(E^{x_{I}}_{T \setminus I}E^{x_{I}}_{T' \setminus I} - E^{x_I}_{\emptyset})f|_{x_I})^q\right]\right]^{1/q}\\    
    &= \underset{x_I}{\mathbb{E}}\left[\norm{(E^{x_{I}}_{T \setminus I}E^{x_{I}}_{T' \setminus I} - E^{x_I}_{\emptyset})f}_{x_I,q}^q\right]^{1/q}\\ 
    &\leq \gamma |T||T'|\underset{x_I}{\mathbb{E}}\left[\norm{f|_{x_I}}_{2}^{2}\max_{T'' \subset T' \setminus I,x_{T''}}\left\{\norm{f|_{x_{T'' \cup I}}}_2^{q-2}\right\}\right]^{1/q}.
\end{align*}
We can now pull the inner maximum out by its maximum over $x_I$ as well to upper bound this as
\begin{align*}
    &\leq \gamma |T||T'|\max_{T'' \subset T',x_{T''}}\left\{\norm{f|_{x_{T''}}}_2^{1-\frac{2}{q}}\right\}\underset{x_I}{\mathbb{E}}\left[\norm{f|_{x_I}}_{2}^{2}\right]^{1/q}\\
    &=\gamma |T||T'|\max_{T'' \subset T',x_{T''}}\left\{\norm{f|_{x_{T''}}}_2^{1-\frac{2}{q}}\right\}\norm{f}_2^{2/q}
\end{align*}
as desired.

We therefore turn our attention to the case $T \cap T' = \emptyset$. Here we use the observation of \cite{gur2021hypercontractivity} that one may write $E_TE_{T'} = A_{T',T}E_{T'}$ and $\mathbb{E}[E_{T'} f]=E_{\emptyset}[f]$. Thus to prove the $q$-norm variant of the bound we simply write
\begin{align*}
    \norm{E_TE_{T'}f - E_{\emptyset}f}_q &= \norm{(E_{T',T}- \Pi_{T',T})E_{T'}f]}_q\\
    &\leq \gamma|T'||T| \norm{E_{T'}f}_q\\
    &\leq \gamma|T'||T| \norm{f}_q
\end{align*}
since averaging contracts $q$-norms. 

The $\ell_2$-bound follows similarly, but pulling out a factor of the conditional $2$-norm as in \Cref{lemma:Efron-contract}. Namely we can instead bound $\norm{E_{T'}f}_q$ as
\begin{align*}
    \norm{E_{T'}f}_q &= \mathbb{E}_x\left[(E_{T'}f)^q\right]^{1/q}\\
    &\leq \max_x\{E_{T'}f(x)^{1-\frac{2}{q}}\}\mathbb{E}_x\left[(E_{T'}f)^2\right]^{1/q}\\
    &\leq \max_{x_{T'}}\left\{\norm{f|_{x_{T'}}}_2^{1-\frac{2}{q}}\right\}\norm{f}_2^{2/q}
\end{align*}
\end{proof}
The result now follows simply by expanding $f^{=S}$:
\begin{align*}
    \norm{E_Tf^{=S}}_q &=\norm{E_T\sum\limits_{T' \subset S}(-1)^{|S|-|T'|}E_{T'}f}_q\\
    &= \norm{\sum\limits_{T' \subset S}(-1)^{|S|-|T'|}E_TE_{T'}f}_q\\
        &= \norm{\sum\limits_{T' \subset S}(-1)^{|S|-|T'|}E_{T \cap T'}f + \sum\limits_{T' \subset S}(-1)^{|S|-|T'|}(E_TE_{T'}-E_{T \cap T'})f}_q\\
    &\leq 2^{O(|S|)}\norm{(E_TE_{T'}-E_{T \cap T'})f}_q
\end{align*}
where we have used the standard fact that $\sum\limits_{T' \subset S}(-1)^{|S|-|T'|}E_{T \cap T'}$ is identically zero when $T \not\supset S$. Plugging in the claim now gives the desired results.
\end{proof}

A critical property of the standard Efron-Stein decomposition on product spaces is that it is an eigenbasis for standard random walks such as the noise operator. Using the above lemma, it is elementary to show this extends approximately to $(q,\gamma)$-products.
\begin{lemma}\label{lem:apx-eigen}
    Let $X$ be a $d$-partite $(q,\gamma)$-product and $M=\sum\limits_{T \subset [d]} \alpha_T E_T$ for some $\alpha_T \in \R$. Then for $f \in C_d$ and $S \subset [d]$:
    \[
    \norm{Mf^{=S} - \lambda_Sf^{=S}}_q \leq |\alpha|_12^{O(d)}\gamma\norm{f}_q
    \]
    where $\lambda_S= \sum\limits_{T \supseteq S}\alpha_T$ and $|\alpha|_1$ is the counting norm of $\alpha$. Moreover if $X$ is a $\gamma$-product we also have:
    \[
    \norm{Mf^{=S} - \lambda_Sf^{=S}}_q \leq |\alpha|_12^{O(d)}\gamma\norm{f}_2^{2/q}\max_{T \subseteq S, x_T}\left\{\norm{f|_{x_T}}_2^{1-\frac{2}{q}}\right\}
    \]
\end{lemma}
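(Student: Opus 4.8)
The plan is to expand $M f^{=S}$ using linearity and reduce each term to an application of Lemma \ref{lemma:p-to-q-down}. Concretely, write $M f^{=S} = \sum_{T \subseteq [d]} \alpha_T E_T f^{=S}$. Split the index set according to whether $T \supseteq S$ or not. For $T \supseteq S$ we claim $E_T f^{=S} = f^{=S}$ exactly (since $f^{=S}$ depends only on the $S$-coordinates, which are a subset of the $T$-coordinates that $E_T$ leaves untouched); summing these gives exactly $\lambda_S f^{=S}$ with $\lambda_S = \sum_{T \supseteq S}\alpha_T$. For $T \not\supseteq S$, each term $E_T f^{=S}$ is an error term controlled by Lemma \ref{lemma:p-to-q-down}. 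Thus
\[
\norm{M f^{=S} - \lambda_S f^{=S}}_q = \norm{\sum_{T \not\supseteq S}\alpha_T E_T f^{=S}}_q \leq \sum_{T \not\supseteq S}|\alpha_T|\,\norm{E_T f^{=S}}_q
\]
by the triangle inequality.

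Now I would simply plug in the two bounds from Lemma \ref{lemma:p-to-q-down}. For the first (plain $q$-norm) bound, each $\norm{E_T f^{=S}}_q \leq |T| 2^{O(|S|)}\gamma \norm{f}_q \leq d\,2^{O(d)}\gamma\norm{f}_q$, and since $\sum_{T \not\supseteq S}|\alpha_T| \leq |\alpha|_1$, the total is at most $|\alpha|_1 \cdot d\, 2^{O(d)}\gamma\norm{f}_q = |\alpha|_1 2^{O(d)}\gamma\norm{f}_q$, absorbing the polynomial factor of $d$ into the $2^{O(d)}$. For the second ($\ell_2$-type) bound, which requires $X$ to be a $\gamma$-product, the identical computation with the second inequality of Lemma \ref{lemma:p-to-q-down} gives $|\alpha|_1 2^{O(d)}\gamma \norm{f}_2^{2/q}\max_{T \subseteq S, x_T}\{\norm{f|_{x_T}}_2^{1-2/q}\}$, where I note the maximum in Lemma \ref{lemma:p-to-q-down} is over $T' \subseteq S$, matching the statement here.

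The only genuinely content-bearing step is the exact identity $E_T f^{=S} = f^{=S}$ for $T \supseteq S$; everything else is bookkeeping. I would justify it by recalling that $f^{=S}$ is a function of $x_S$ alone (noted right after the definition of the Efron-Stein decomposition), so conditioning on $x_T$ with $T \supseteq S$ determines its value: $E_T f^{=S}(y) = \E[f^{=S}(x_S) \mid x_T = y_T] = f^{=S}(y_S)$. I do not anticipate a real obstacle here; the main thing to be careful about is that the claimed $\lambda_S$ comes out correctly, i.e. that the $T \supseteq S$ terms contribute precisely $\big(\sum_{T \supseteq S}\alpha_T\big) f^{=S}$ and nothing spurious leaks in from $T \not\supseteq S$ (it does not, since those are collected entirely into the error term). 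One should also double-check the edge case $S = \emptyset$, where every $T \supseteq S$ and the error sum is empty, so the bound holds trivially with $\lambda_\emptyset = |\alpha|_1$-type constant $\sum_T \alpha_T$.
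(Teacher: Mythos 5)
Your proposal is correct and matches the paper's proof essentially verbatim: expand $Mf^{=S}$, use that $E_Tf^{=S}=f^{=S}$ for $T\supseteq S$ (since $f^{=S}$ depends only on the $S$-coordinates), and control the $T\not\supseteq S$ terms via \Cref{lemma:p-to-q-down} and the triangle inequality. Nothing further is needed.
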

\begin{proof}
    Expanding $Mf^{=S}$ we have
    \begin{align*}
        Mf^{=S} &= \sum\limits_{T \subset [d]} \alpha_T E_Tf^{=S}\\
        &= \sum\limits_{T \supseteq S} \alpha_T f^{=S} + \sum\limits_{T \not\supseteq S} \alpha_T E_Tf^{=S}
    \end{align*}
    where the second equality holds since $f^{=S}$ depends only on coordinates in $S$, and $E_T$ fixes such coordinates when $T \supseteq S$. Applying \Cref{lemma:p-to-q-down} and the triangle inequality to the latter term completes the proof.
\end{proof}
A similar application also shows that the Efron-Stein basis is `approximately closed'.
\begin{lemma}\label{lemma:apx-closed}
    Let $X$ be a $d$-partite $(q,\gamma)$-product. Then for any $f \in C_d$ and $T,S \subseteq [d]$
    \begin{itemize}
        \item For $S \neq T$: 
        \[
        \norm{(f^{=S})^{=T}}_q \leq 2^{O(d)}\gamma\norm{f}_q
        \]
        \item For $S=T$:
        \[
        \norm{(f^{=S})^{=S} - f^{=S}}_q \leq 2^{O(d)}\gamma\norm{f}_q
        \]
    \end{itemize}
For $X$ a $\gamma$-product we may similarly replace $\norm{f}_q$ by $\norm{f}_2^{2/q}\underset{T \subseteq S, x_T}{\max}\left\{\norm{f|_{x_T}}_2^{1-\frac{2}{q}}\right\}$.
\end{lemma}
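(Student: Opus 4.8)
The plan is to reduce both claims to an application of \Cref{lem:apx-eigen}, viewing the inner Efron-Stein projection $(\cdot)^{=T}$ as a signed sum of averaging operators $E_{T'}$ with $T' \subseteq T$. Concretely, by definition $(f^{=S})^{=T} = \sum_{T' \subseteq T} (-1)^{|T \setminus T'|} E_{T'} f^{=S}$, so the operator $g \mapsto g^{=T}$ is exactly an operator of the form $M = \sum_{T' \subseteq [d]} \alpha_{T'} E_{T'}$ with $\alpha_{T'} = (-1)^{|T \setminus T'|}$ for $T' \subseteq T$ and $\alpha_{T'} = 0$ otherwise; hence $|\alpha|_1 = 2^{|T|} \leq 2^d$. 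The eigenvalue attached to $f^{=S}$ under this operator is $\lambda_S = \sum_{T' \supseteq S} \alpha_{T'}$, where the sum runs only over $T'$ with $S \subseteq T' \subseteq T$. This is the standard combinatorial identity: if $S \not\subseteq T$ there are no such $T'$, so $\lambda_S = 0$; if $S \subseteq T$ and $S \neq T$ the alternating sum $\sum_{S \subseteq T' \subseteq T}(-1)^{|T\setminus T'|}$ vanishes (it is $(1-1)^{|T\setminus S|} = 0$); and if $S = T$ the only term is $T' = S$, giving $\lambda_S = 1$.

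From here the two bullets fall out directly. For $S \neq T$: if $S \not\subseteq T$ we have $\lambda_S = 0$, so \Cref{lem:apx-eigen} gives $\norm{(f^{=S})^{=T}}_q = \norm{M f^{=S} - 0 \cdot f^{=S}}_q \leq 2^d \cdot 2^{O(d)}\gamma \norm{f}_q = 2^{O(d)}\gamma\norm{f}_q$. If $S \subseteq T$ but $S \neq T$, again $\lambda_S = 0$ by the alternating-sum identity, and the same inequality applies. For $S = T$: now $\lambda_S = 1$, so \Cref{lem:apx-eigen} gives $\norm{(f^{=S})^{=S} - f^{=S}}_q = \norm{Mf^{=S} - \lambda_S f^{=S}}_q \leq 2^{O(d)}\gamma \norm{f}_q$, which is the second bullet. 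The $\gamma$-product $\ell_2$-refinement is obtained by invoking the second (stronger) bound in \Cref{lem:apx-eigen} verbatim in each case, replacing $\norm{f}_q$ with $\norm{f}_2^{2/q}\max_{T'' \subseteq S, x_{T''}}\{\norm{f|_{x_{T''}}}_2^{1-2/q}\}$; note the maximum there is over subsets of $S$, matching the statement.

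I do not expect a genuine obstacle here — the lemma is essentially a bookkeeping corollary of \Cref{lem:apx-eigen} together with the elementary identity $\sum_{S \subseteq T' \subseteq T}(-1)^{|T \setminus T'|} = \mathbbm{1}[S = T]$. The only point requiring a modicum of care is confirming that the eigenvalue $\lambda_S$ computed for the composite operator $M$ really is $\sum_{T' \supseteq S}\alpha_{T'}$ with the $\alpha$'s as defined above, i.e.\ that restricting the outer sum to $T' \subseteq T$ does not interfere with the ``$T' \supseteq S$'' condition in \Cref{lem:apx-eigen}; this is immediate since $\alpha_{T'} = 0$ for $T' \not\subseteq T$, so the two formulations of $\lambda_S$ agree. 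A minor secondary check is that $|\alpha|_1 \le 2^d$ so the error constant is absorbed into $2^{O(d)}$.
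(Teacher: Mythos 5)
Your proposal is correct and, at bottom, is the same argument as the paper's: both reduce to the fact (\Cref{lemma:p-to-q-down}) that $E_{T'}f^{=S}$ is $2^{O(d)}\gamma$-small when $T' \not\supseteq S$, you simply invoking it through the packaged form \Cref{lem:apx-eigen} rather than term-by-term. If anything, your handling of the case $S \subsetneq T$ is tidier than the paper's one-line justification (which asserts $T' \not\supseteq S$ for every term of the sum, literally false when $S \subsetneq T$): the surviving terms with $S \subseteq T' \subseteq T$ each act as the identity on $f^{=S}$ and cancel via $\sum_{S \subseteq T' \subseteq T}(-1)^{|T\setminus T'|}=0$, which is exactly what your computation $\lambda_S=\mathbbm{1}[S=T]$ encodes.
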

\begin{proof}
    Using the definition of Efron-Stein, we have
    \begin{align*}
        (f^{=S})^{=T} = \sum\limits_{T' \subseteq T} (-1)^{T \setminus T'}E_{T'}f^{=S}
    \end{align*}
    If $S \neq T$, then $T' \not\supset S$ for every term in the sum, and we are done by the triangle inequality and \Cref{lemma:p-to-q-down}. If $S=T$, the only `surviving' term is $E_Sf^{=S}=f^{=S}$, and we are similarly done.
\end{proof}
Finally, we will need the following behavior of the Efron-Stein decomposition under restriction.
\begin{lemma}\label{lem:Efron-Restrict}
Let $X$ be a $d$-partite complex and $f \in C_d$. Let $I,B \subset [d]$ be any two disjoint sets. Then
\[
f^{=I \cup B}(y_I,x_B,z) = \sum\limits_{J \subset I} (-1)^{|I|-|J|}(f|_{y_J})^{=B}(x_B,z).
\]
\end{lemma}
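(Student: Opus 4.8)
The statement is a purely combinatorial identity relating the Efron--Stein component $f^{=I\cup B}$ on a set split as $I\cup B$ (with $I,B$ disjoint) to the Efron--Stein components, in the $B$-direction, of the restrictions $f|_{y_J}$ for $J\subseteq I$. The natural approach is to expand both sides using only the \emph{definition} of the Efron--Stein decomposition and the definition of restriction, and to check that the two expansions agree term by term after a change of summation variable. No expansion or pseudorandomness hypothesis is needed here: this holds on an arbitrary $d$-partite complex, exactly as stated, so the whole argument is bookkeeping with inclusion--exclusion.

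\textbf{Key steps.} First I would write out the left-hand side directly:
\[
f^{=I\cup B}(y_I,x_B,z)=\sum_{U\subseteq I\cup B}(-1)^{|I\cup B|-|U|}E_U f(y_I,x_B,z),
\]
and then split each subset $U\subseteq I\cup B$ uniquely as $U=J\sqcup C$ with $J=U\cap I\subseteq I$ and $C=U\cap B\subseteq B$, so that $(-1)^{|I\cup B|-|U|}=(-1)^{|I|-|J|}(-1)^{|B|-|C|}$ and the double sum factors as
\[
f^{=I\cup B}=\sum_{J\subseteq I}(-1)^{|I|-|J|}\sum_{C\subseteq B}(-1)^{|B|-|C|}E_{J\cup C}f.
\]
Second, I would expand the right-hand side: by definition of the restriction, $E_{C}(f|_{y_J})(x_B,z)=E[\,f(y_J,x_B,z)\mid x_C\,]$ where the conditioning fixes the $J$-coordinates to $y_J$ and the $C$-coordinates (inside $B$) to $x_C$; this is precisely $E_{J\cup C}f$ evaluated at $(y_I,x_B,z)$ — only the $J$- and $C$-coordinates of the argument matter. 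Hence
\[
(f|_{y_J})^{=B}(x_B,z)=\sum_{C\subseteq B}(-1)^{|B|-|C|}E_{C}(f|_{y_J})(x_B,z)=\sum_{C\subseteq B}(-1)^{|B|-|C|}E_{J\cup C}f(y_I,x_B,z).
\]
Third, substituting this into $\sum_{J\subseteq I}(-1)^{|I|-|J|}(f|_{y_J})^{=B}$ yields exactly the double sum obtained for the left-hand side in step one, completing the identity.

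\textbf{Main obstacle.} There is no serious analytic difficulty; the only point requiring care is the bookkeeping identification $E_{J\cup C}f(y_I,x_B,z)=E_{C}(f|_{y_J})(x_B,z)$, i.e.\ verifying that conditioning on the $C$-coordinates \emph{inside the link of} $y_J$ agrees with conditioning on $J\cup C$ in the ambient complex — this is just the statement that links are conditional distributions (as recalled in \Cref{sec:prelims}), together with the observation that $E_{J\cup C}f$ depends only on the $(J\cup C)$-coordinates of its argument, so evaluating at $(y_I,x_B,z)$ is the same as evaluating at $(y_J,x_C)$. Once this correspondence is pinned down, the rest is a one-line reindexing of a finite sum. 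I would also note explicitly that disjointness of $I$ and $B$ is what makes the decomposition $U=(U\cap I)\sqcup(U\cap B)$ valid and the sign split clean.
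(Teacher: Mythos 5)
Your proposal is correct and follows essentially the same route as the paper: both arguments unroll the Efron--Stein definitions on each side, use the identification $E^{y_J}_{C}\bigl(f|_{y_J}\bigr)(x_B,z)=E_{J\cup C}f(y_I,x_B,z)$ (links are conditional distributions), and re-index the double sum over $J\subseteq I$, $C\subseteq B$ as a single sum over $T=J\cup C\subseteq I\cup B$ using disjointness of $I$ and $B$. The only cosmetic difference is the direction of the manipulation (you expand the left-hand side and match, the paper expands the right-hand side), which changes nothing of substance.
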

\begin{proof}
This is a standard fact on product spaces. We give a proof that generalizes to any partite complex. Expanding the righthand side:
\begin{align*}
    \sum\limits_{J \subset I}(-1)^{|I|-|J|}(f|_{y_J}^{=B})(x_B,z) &= \sum\limits_{J \subset I}(-1)^{|I|-|J|}\sum\limits_{A \subset B}(-1)^{|B|-|A|} E^{y_J}_Af|_{y_J}(x_B,z)\\
    &=\sum\limits_{J \subset I}(-1)^{|I|-|J|}\sum\limits_{A \subset B}(-1)^{|B|-|A|} \underset{z' \sim X_{y_J \cup x_A}}{\mathbb{E}}[f(y_J,x_A,z')]\\
    &=\sum\limits_{J \subset I}(-1)^{|I|-|J|}\sum\limits_{A \subset B}(-1)^{|B|-|A|} E_{J\cup A}f(y_I,x_B,z)\\
    &=\sum\limits_{T \subset I \cup B} (-1)^{|I\cup B| -|T|}E_Tf(y_I,x_B,z)\\
    &=f^{=I \cup B}(y_I,x_B,z)
\end{align*}
as desired.
\end{proof}
\subsection{Efron-Stein, Total Influence, and the Noise Operator}
Our work relies heavily on the following operator-wise generalization of the noise operator.
\begin{definition}[Vector-Valued Noise Operator]
Let $X$ be a $d$-partite complex, $f \in C_d$, and $r \in \R^d$. The noise operator $T_{r}$ is defined as:
\[
T_r f = \sum\limits_{S \subset [d]} \prod_{i\in S}r_i \prod_{i \notin S}(1-r_i)E_S f
\]
\end{definition}
While the individual Efron-Stein components are only approximate eigenvectors of the above operator, it turns out the full decomposition behaves exactly as in the product case.
\begin{claim}\label{claim:Efron-Noise}
Let $X$ be a $d$-partite complex, $f \in C_d$, and $r \in \R^n$. Then
\[
T_{r} f = \sum\limits_{S \subset [d]} r_S f^{=S},
\]
where $r_S = \prod\limits_{i \in S}r_i$
\end{claim}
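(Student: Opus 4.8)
The claim to prove is Claim~\ref{claim:Efron-Noise}: for a $d$-partite complex $X$, $f \in C_d$, and $r \in \R^d$,
\[
T_r f = \sum_{S \subseteq [d]} r_S f^{=S}, \qquad r_S = \prod_{i \in S} r_i,
\]
where $T_r f = \sum_{S \subseteq [d]} \prod_{i \in S} r_i \prod_{i \notin S}(1-r_i) E_S f$.

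\textbf{Approach.} The plan is to expand both sides in terms of the averaging operators $E_T$ and compare coefficients, exactly as one does on a product space --- the key point being that the identity is purely combinatorial and never uses any expansion hypothesis. First I would substitute the definition of the Efron--Stein component, $f^{=S} = \sum_{T \subseteq S}(-1)^{|S \setminus T|} E_T f$, into the right-hand side to get
\[
\sum_{S \subseteq [d]} r_S f^{=S} = \sum_{S \subseteq [d]} r_S \sum_{T \subseteq S} (-1)^{|S \setminus T|} E_T f = \sum_{T \subseteq [d]} E_T f \sum_{S \supseteq T} (-1)^{|S \setminus T|} r_S,
\]
after swapping the order of summation. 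The goal is then to show that the inner coefficient of $E_T f$ equals $\prod_{i \in T} r_i \prod_{i \notin T}(1 - r_i)$, matching the definition of $T_r f$.

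\textbf{Key step.} Writing $S = T \cup U$ with $U \subseteq [d] \setminus T$ disjoint from $T$, and using $r_S = r_T \cdot r_U$, the inner sum factors:
\[
\sum_{S \supseteq T} (-1)^{|S \setminus T|} r_S = r_T \sum_{U \subseteq [d] \setminus T} (-1)^{|U|} r_U = r_T \prod_{i \in [d] \setminus T} (1 - r_i),
\]
where the last equality is the standard product expansion $\sum_{U \subseteq V} \prod_{i \in U}(-r_i) = \prod_{i \in V}(1 - r_i)$. Since $r_T = \prod_{i \in T} r_i$, this is exactly the coefficient of $E_T f$ appearing in the definition of $T_r f$. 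Equating the coefficients of each $E_T f$ on the two sides completes the proof.

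\textbf{Main obstacle.} There is essentially no obstacle here: the statement is an identity in the free commutative-ish algebra generated by the symbols $E_T$, valid because the $E_T$ appear linearly on both sides and the manipulation is a finite reindexing plus the binomial-style product identity; no orthogonality, no $\gamma$-product hypothesis, and no properties of $X$ beyond the definitions of $E_T$ and $f^{=S}$ are needed. The only point requiring a modicum of care is the bookkeeping in the double sum --- making sure the swap of summation order is legitimate (it is, being finite) and that the decomposition $S = T \sqcup U$ is a genuine bijection with $\{(T,U): T \subseteq [d], U \subseteq [d]\setminus T\}$ --- and noting that one does not even need $f = \sum_S f^{=S}$ for this; it is a termwise identity. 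I would present it in two or three displayed lines.
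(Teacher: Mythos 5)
Your proof is correct, and it is the mirror image of the paper's computation: the paper starts from the definition of $T_r$ and expands each $E_S f$ into Efron--Stein components via the inversion identity $E_S f = \sum_{T \subseteq S} f^{=T}$, then re-indexes so that the coefficient of $f^{=T}$ becomes $r_T\sum_{S \supseteq T}\prod_{i \in S\setminus T}r_i\prod_{i \notin S}(1-r_i) = r_T\prod_{i\notin T}(1-r_i+r_i) = r_T$; you instead start from $\sum_S r_S f^{=S}$, expand each $f^{=S}$ by its definition into $E_T$'s, and match the coefficient of $E_T f$ against the definition of $T_r$ via $\sum_{U \subseteq [d]\setminus T}(-1)^{|U|}r_U = \prod_{i\notin T}(1-r_i)$. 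The two arguments are the same finite double-sum reindexing plus a binomial-type product identity, and neither uses any property of $X$ beyond the definitions, as you correctly note. The one small thing your direction buys is that you never need the M\"obius-inversion step $E_S f = \sum_{T\subseteq S} f^{=T}$, which the paper asserts ``by inclusion-exclusion'' without proof; your argument is self-contained from the definition of $f^{=S}$ alone, at the cost of proving the identity for the coefficients of the $E_T$'s rather than of the $f^{=S}$'s (which is equally legitimate, since both sides of the claim are by definition linear combinations of the $E_T f$). No gap; the write-up as sketched would compile into a complete proof.
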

\begin{proof}
    By inclusion-exclusion, as in the product setting we have
    \[
    E_S f = \sum_{T \subseteq S} f^{=T}.
    \]
    With this in hand, expanding $T_r f$ gives:
    \begin{align*}
        T_r f &= \sum\limits_{S \subset [d]} \prod_{i\in S}r_i \prod_{i \notin S}(1-r_i)E_S f\\
        &=\sum\limits_{S \subset [d]} \prod_{i\in S}r_i \prod_{i \notin S}(1-r_i)\sum\limits_{T \subseteq S} f^{=T}\\
        &= \sum\limits_{T \subset [d]} f^{=T}\left(\sum\limits_{S \supset T} \prod_{i\in S}r_i \prod_{i \notin S}(1-r_i)\right) & \text{(Re-indexing)}\\
        &=\sum\limits_{T \subset [d]} r_Tf^{=T}\left(\sum\limits_{S \supset T} \prod_{i\in S \setminus T}r_i \prod_{i \notin S}(1-r_i)\right)\\
        &=\sum\limits_{T \subset [d]} r_Tf^{=T}
    \end{align*}
    where in the final step we have observed that
    \[
            \sum\limits_{S \supset T} \prod_{i\in S \setminus T}r_i \prod_{i \notin S}(1-r_i)=\prod_{i \in [d] \setminus T}(1-r_i+r_i)=1.
    \]
\end{proof}
One can also view this as an equivalent method of defining the operator $T_r$. We will move freely between these two equivalent notions in what follows. We can similarly express the Laplacians in terms of Efron-Stein. The proofs are standard  (see e.g.\ \cite[Lemma 6.2]{gur2021hypercontractivity}) and similar to the above so we omit them.
\begin{lemma}
    Let $X$ be a $d$-partite complex and $f \in C_d$. Then:
    \begin{enumerate}
        \item $L_i f = \sum\limits_{S \ni i}f^{=S}$
        \item $\sum\limits_{i \in [d]} L_i f = \sum\limits_{S \subseteq [d]}|S|f^{=S}$
    \end{enumerate}
\end{lemma}
This implies the following useful relation between total influence and the Efron-Stein decomposition.
\begin{corollary}\label{claim:total-inf}
    Let $X$ be a $d$-partite complex and $f \in C_d$. Then
    \[
    \mathbf{I}[f] = \sum\limits_{i=0}^d i\langle f, f^{=i}\rangle
    \]
\end{corollary}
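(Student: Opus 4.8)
The plan is to combine the two displayed identities for the Laplacians from the lemma immediately preceding the statement with bilinearity of the inner product. First I would unwind the definition of total influence and pull the sum inside: $\mathbf{I}[f] = \sum_{i\in[d]}\langle f, L_i f\rangle = \left\langle f, \sum_{i\in[d]} L_i f\right\rangle$, which is valid since $\langle\cdot,\cdot\rangle$ is linear in its second argument. Then I would apply part (2) of that lemma, namely $\sum_{i\in[d]} L_i f = \sum_{S\subseteq[d]}|S|\,f^{=S}$, to obtain $\mathbf{I}[f] = \left\langle f, \sum_{S\subseteq[d]}|S|\,f^{=S}\right\rangle = \sum_{S\subseteq[d]}|S|\,\langle f, f^{=S}\rangle$.

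Next I would regroup the sum by the cardinality of $S$. Writing $k=|S|$ and ranging $k$ from $0$ to $d$, the previous line equals $\sum_{k=0}^d k \sum_{|S|=k}\langle f, f^{=S}\rangle = \sum_{k=0}^d k\,\langle f, f^{=k}\rangle$, where the last equality uses the definition $f^{=k}=\sum_{|S|=k} f^{=S}$ together with linearity of the inner product. Since the $k=0$ term contributes $0\cdot\langle f, f^{=\emptyset}\rangle=0$, it is harmless that the sum starts at $0$; this is exactly the claimed identity, so the proof is complete.

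I do not expect any real obstacle here: the corollary is an immediate algebraic consequence of the two facts about the Laplacians and the Efron–Stein components already established, plus bilinearity of $\langle\cdot,\cdot\rangle$. The only point requiring even minimal care is the index bookkeeping — verifying that the double sum over $S\subseteq[d]$ weighted by $|S|$ factors cleanly through the degree-exactly-$k$ projections $f^{=k}$ — which is immediate from their definition.
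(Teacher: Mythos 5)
Your argument is correct and is exactly the deduction the paper intends: it states the corollary as an immediate consequence of the preceding lemma (in particular $\sum_{i\in[d]} L_i f = \sum_{S\subseteq[d]}|S|\,f^{=S}$), combined with the definition $\mathbf{I}[f]=\sum_i\langle f, L_i f\rangle$, bilinearity, and the grouping $f^{=k}=\sum_{|S|=k}f^{=S}$. Nothing further is needed.
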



\section{Coordinate-Wise Analysis on HDX}\label{sec:coordinate-method}
The cornerstone of our analysis is a new elementary method of coordinate-wise analysis on high dimensional expanders inspired by the proof of Bourgain's symmetrization theorem. We break the argument into two main parts, a \textit{de-correlation} step where we break the noise operator into coordinate-wise components, and a \textit{localization} step where we show the coordinate-wise noise operators may be viewed as the standard noise operator on the projection/localization to the relevant coordinates. We will see in the next section that these lemmas facilitate classic analytic tools such as the replacement method.

We first define a coordinate-wise version of the noise operator that operates only over a specific subset.
\begin{definition}[Coordinate-Wise Noise Operator]
Let $X$ be a $d$-partite complex, $S \subseteq [d]$, and $r \in \R^S$. The coordinate-wise noise operator $T^S_r$ acts on $f$ by:
\[
T_r^S f \coloneqq \sum\limits_{T \subseteq S} r_{S \setminus T}\prod_{i \in T}(1-r_i) E_{[d] \setminus T}f
\]
\end{definition}
Note that we have inverted the sum in the sense that index $T$ corresponds to $E_{[d] \setminus T}$. When $r \in [0,1]^d$, the above should still be thought of as re-sampling each coordinate within $S$ with probability $1-r_i$.

Our first key lemma shows that $T_r$ can be approximately decomposed into its constituent coordinate operators (under any ordering) up to some error in $q$-norm. For simplicity of notation, given a permutation $\pi \in S_d$ and vector $r \in \R^d$ we write
\[
T^\pi_r \coloneqq T^{\pi(1)}_{{r_\pi(1)}}\ldots T^{\pi(d)}_{r_{\pi(d)}}.
\]
\begin{lemma}[Decorrelation]\label{lemma:decorrelate}
Let $X$ be a $d$-partite $(q,\gamma)$-product, $r\in \R^d$, and $\pi \in S_d$ any permutation, then:
\[
\norm{T_r f - T^\pi_r f}_q \leq c_{d,r}\gamma \norm{f}_{q}
\]
where $c_{d,r} = d^3\sum\limits_S \left |r_S \prod\limits_{i \notin S}(1-r_i)\right|$
\end{lemma}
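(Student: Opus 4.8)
The plan is to prove the statement by a hybrid argument over the coordinates of the permutation, replacing $T_r$ one coordinate at a time by its coordinate-wise version and controlling the error at each step using the fact that distinct colors of a $(q,\gamma)$-product expand in $q$-norm (via the swap-walk bound, \Cref{lem:swap}, and the projection bounds \Cref{lemma:p-to-q-down} / \Cref{claim:intersect-norm}). Concretely, using \Cref{claim:Efron-Noise} write $T_r f = \sum_S r_S f^{=S}$ and expand $T^\pi_r f$ by repeatedly composing the definitions of the $T^{\pi(k)}_{r_{\pi(k)}}$. Since each $T^i_{r_i} = r_i I + (1-r_i)E_{[d]\setminus i}$ on the level of operators (this is the $|S|=1$ case of the coordinate-wise noise operator), the composition $T^{\pi(1)}_{r_{\pi(1)}}\cdots T^{\pi(d)}_{r_{\pi(d)}}$ is literally $\sum_{S\subseteq[d]} r_S\prod_{i\notin S}(1-r_i)\, \bigl(\prod_{i\notin S} E_{[d]\setminus i}\bigr)$, where the product of $E_{[d]\setminus i}$ operators is taken in the order prescribed by $\pi$. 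On a genuine product space $\prod_{i\notin S} E_{[d]\setminus i} = E_S$ exactly, which recovers $T_r = T^\pi_r$; on a $(q,\gamma)$-product the two differ, and the whole task is to bound that difference in $q$-norm.

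The key step is therefore a sublemma: for any disjoint-or-nested collection, $\bigl\|\, E_{[d]\setminus i_1}E_{[d]\setminus i_2}\cdots E_{[d]\setminus i_k} f - E_{[d]\setminus\{i_1,\dots,i_k\}} f \,\bigr\|_q \le O(d^2)\gamma\|f\|_q$, which is exactly the iterated form of \Cref{claim:intersect-norm} (note $([d]\setminus i)\cap([d]\setminus j) = [d]\setminus\{i,j\}$). I would prove this by induction on $k$: peel off the outermost $E_{[d]\setminus i_1}$, apply \Cref{claim:intersect-norm} to collapse $E_{[d]\setminus i_1}E_{[d]\setminus\{i_2,\dots,i_k\}}$ to $E_{[d]\setminus\{i_1,\dots,i_k\}}$ at cost $\gamma|[d]\setminus i_1||[d]\setminus\{i_2\dots i_k\}| \le \gamma d^2$, and use that each $E_T$ contracts $q$-norms to absorb the inductive error term. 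Summing these $k\le d$ errors gives the $O(d^3)\gamma$ dependence, matching the claimed $c_{d,r}$ (the $d^3$ factor: $d$ hybrid steps, each costing $O(d^2)\gamma$). Then sum over $S\subseteq[d]$, pulling out the weight $|r_S\prod_{i\notin S}(1-r_i)|$ and using $\|f^{=S}\|_q$-type bounds or, more simply, applying the triangle inequality directly to $\sum_S r_S\prod_{i\notin S}(1-r_i)\bigl(\prod_{i\notin S}E_{[d]\setminus i} - E_S\bigr)f$ and invoking the sublemma termwise; the coefficient $\sum_S|r_S\prod_{i\notin S}(1-r_i)|$ is precisely what appears in $c_{d,r}$.

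I expect the main obstacle to be bookkeeping rather than conceptual: making \Cref{claim:intersect-norm} chain cleanly when the $E_{[d]\setminus i}$ operators are applied in an arbitrary order (the intersections $[d]\setminus\{i_1,\dots,i_j\}$ shrink monotonically, so the claim applies at each stage, but one must be careful that the "already-collapsed" operator $E_{[d]\setminus\{i_1,\dots,i_{j-1}\}}$ genuinely contains all the coordinates it should). A secondary subtlety is tracking whether one wants the cruder $\|f\|_q$ bound (which only needs $(q,\gamma)$-product structure and \Cref{claim:intersect-norm}'s first inequality) versus the sharper $\ell_2$-type bound — here only the former is needed, so I would use the clean version of \Cref{claim:intersect-norm}. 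Finally, one should double-check the edge cases $r_i = 0$ or $r_i = 1$ (where individual coordinate operators degenerate to $E_{[d]\setminus i}$ or $I$), but these are absorbed without incident since the bound is stated in terms of the $\ell_1$-type quantity $c_{d,r}$ which remains finite.
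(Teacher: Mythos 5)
Your proposal is correct and follows essentially the same route as the paper: expand $T^\pi_r$ via $T^i_{r_i}=r_iI+(1-r_i)E_{[d]\setminus i}$ into $\sum_S r_S\prod_{i\notin S}(1-r_i)\prod_{i\notin S}E_{[d]\setminus i}$, reduce to bounding $\norm{\prod_{i\notin S}E_{[d]\setminus i}-E_S}_q\leq d^3\gamma$ by iterating \Cref{claim:intersect-norm} (your induction is just the paper's telescoping sum in disguise), and finish with the triangle inequality over $S$. The only blemish is the stated $O(d^2)\gamma$ in your sublemma, which should be $O(kd^2)\gamma\leq d^3\gamma$ as your own accounting immediately makes clear.
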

\begin{proof}
The proof is essentially an immediate application of \Cref{claim:intersect-norm}. Assume $\pi=I$ without loss of generality (the analysis is invariant under ordering). Expanding out the coordinate-wise product we have
\begin{align*}
    T^{1}_{r_1}\ldots T^{d}_{r_d} &= \prod\limits_{i \in [d]}(r_{i} I + (1-r_{i})E_{[d]\setminus i})\\
    &=\sum\limits_{S} \left(r_S \prod_{i \notin S}(1-r_i)\right)\prod_{i \notin S}E_{[d]\setminus i}.
\end{align*}
It is therefore enough to bound the $q$-norm
\[
\norm{E_{S} - \prod_{i \notin S}E_{[d]\setminus i}}_q \leq d^3\gamma.
\]
This follows from iterated application of \Cref{claim:intersect-norm}. More formally, assume without loss of generality that $S=\{1,\ldots,j\}$. Then we can write the following telescopic sum:
\begin{align*}
    E_S - \prod_{i \notin S}E_{[d]\setminus i} &= \sum\limits_{\ell=1}^{d-j} \left(\left(\prod_{i=j+1}^{d-\ell}E_{[d] \setminus i}\right)E_{[d-\ell]} - \left(\prod_{i=j+1}^{d-\ell+1}E_{[d] \setminus i}\right) E_{[d-\ell+1]}\right)\\
    &= \sum\limits_{\ell=1}^{d-j} \left(\left(\prod_{i=j+1}^{d-\ell}E_{[d] \setminus i}\right)E_{[d-\ell]} - \left(\prod_{i=j+1}^{d-\ell}E_{[d] \setminus i}\right)E_{[d] \setminus 
 \{d-\ell+1\}}E_{[d-\ell+1]}\right)\\
 &=\sum\limits_{\ell=1}^{d-j} \left(\prod_{i=j+1}^{d-\ell}E_{[d] \setminus i}\right)\left(E_{[d-\ell]} - E_{[d] \setminus 
 \{d-\ell+1\}}E_{[d-\ell+1]}\right)
\end{align*}
By \Cref{claim:intersect-norm} and the triangle inequality, we finally have:
\begin{align*}
    \norm{E_S - \prod_{i \notin S}E_{[d]\setminus i}}_q &\leq \sum\limits_{\ell=1}^{d-j} \norm{\left(\prod_{i=j+1}^{d-\ell}E_{[d] \setminus i}\right)\left(E_{[d-\ell]} - E_{[d] \setminus 
 \{d-\ell+1\}}E_{[d-\ell+1]}\right)}_q\\
 &\leq \sum\limits_{\ell=1}^{d-j}\norm{E_{[d-\ell]} - E_{[d] \setminus 
 \{d-\ell+1\}}E_{[d-\ell+1]}}_q\\
 &\leq d^3\gamma
\end{align*}
where we have additionally taken advantage of the fact that averaging operators contract $q$-norm.
\end{proof}

The second critical lemma is our `localization' process, allowing us to view the coordinate-wise noise operators as local copies of the full noise operator on the links of $X$.
\begin{lemma}[Localization]\label{lem:localize}
Let $X$ be a $d$-partite simplicial complex, and $r \in \R^d$. Then for any $S \subseteq [d]$:
\[
T^S_{r} f(x) = T^{x_{\bar{S}}}_{r} f|_{x_{\bar{S}}}(x_S)
\]
\end{lemma}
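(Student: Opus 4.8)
The strategy is a direct computation: expand both sides in terms of the projection operators $E_T$ and match them term-by-term. Recall that $T^S_r f = \sum_{T \subseteq S} r_{S\setminus T}\prod_{i \in T}(1-r_i)\,E_{[d]\setminus T}f$, while the right-hand side is the standard (vector-valued) noise operator $T_r$ applied on the link $X_{x_{\bar S}}$ to the localization $f|_{x_{\bar S}}$, which is a function of the $S$-coordinates only. By definition of the noise operator on the $|S|$-partite complex $X_{x_{\bar S}}$, we have $T^{x_{\bar S}}_r f|_{x_{\bar S}}(x_S) = \sum_{U \subseteq S} r_U \prod_{i \in S\setminus U}(1-r_i)\, E^{x_{\bar S}}_U\bigl(f|_{x_{\bar S}}\bigr)(x_S)$, where $E^{x_{\bar S}}_U$ is the projection operator inside the link. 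The plan is to show each summand on the left matches a summand on the right under the correspondence $U = S\setminus T$ (equivalently $T = S\setminus U$), which immediately reconciles the scalar coefficients since $r_{S\setminus T}\prod_{i\in T}(1-r_i) = r_U \prod_{i \in S\setminus U}(1-r_i)$.

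The only real content is the claim that the operator pieces agree, i.e. that for $T \subseteq S$ and $U = S \setminus T$,
\[
E_{[d]\setminus T} f(x) = E^{x_{\bar S}}_{U}\bigl(f|_{x_{\bar S}}\bigr)(x_S).
\]
This is where I would spend the effort. Unwinding definitions, $E_{[d]\setminus T}f(x) = \mathbb{E}[f(y) \mid y_{[d]\setminus T} = x_{[d]\setminus T}]$, i.e. we resample exactly the coordinates in $T$ conditioned on all others. Since $T \subseteq S$, the coordinates in $\bar S$ are among the fixed ones, so this conditional expectation only ever sees faces of $X$ agreeing with $x$ on $\bar S$; by the definition of the link $X_{x_{\bar S}}$ as the conditional distribution marginalized to the $S$-coordinates, this is precisely the conditional expectation, inside $X_{x_{\bar S}}$, of $f|_{x_{\bar S}}$ given its $(S\setminus T) = U$ coordinates equal $x_U$ — which is exactly $E^{x_{\bar S}}_U(f|_{x_{\bar S}})(x_S)$. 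The key identity to invoke is that sampling in $X$ conditioned on $x_{\bar S}$ and then projecting to $S$ is the same as sampling from the link $X_{x_{\bar S}}$, which is stated in \Cref{sec:prelims}; combined with the tower property of conditional expectation (fixing $\bar S$ first, then conditioning further on $U$ within the link), the identity follows.

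I expect the main (though modest) obstacle to be purely bookkeeping: keeping the index sets straight, since $T^S_r$ is indexed with an "inverted" sum (the index $T$ pairs with $E_{[d]\setminus T}$, as the paper flags explicitly) while the noise operator on the link uses the "un-inverted" convention, so one must be careful that the set $T$ being resampled on the left corresponds to the complement $U = S\setminus T$ in the link's noise operator. Once the correspondence $T \leftrightarrow S\setminus T$ is fixed and the link-sampling identity is applied, both the scalar coefficients and the operator actions match summand-by-summand, and summing over $T \subseteq S$ (equivalently $U \subseteq S$) completes the proof. No expansion/approximation is needed here — unlike \Cref{lemma:decorrelate}, this is an exact identity that holds on any partite complex.
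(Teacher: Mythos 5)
Your proposal is correct and follows essentially the same route as the paper: both expand $T^S_r$ into the projection operators $E_{[d]\setminus T}$ and reduce everything to the single exact identity $E_{[d]\setminus T}f(x) = E^{x_{\bar{S}}}_{S\setminus T}\bigl(f|_{x_{\bar{S}}}\bigr)(x_S)$ for $T \subseteq S$, which the paper asserts ``by definition'' while you spell out via the link-sampling/tower-property argument. Your handling of the inverted index convention (the correspondence $U = S\setminus T$ and the matching of the scalar coefficients) is exactly the bookkeeping the paper performs, so there is no gap.
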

\begin{proof}
    The proof is essentially immediate from the fact that the partite averaging operators `respect' restriction, namely for any $T \subseteq S$ we have by definition:
    \[
    E_{[d] \setminus T}f(x) = E^{x_{\bar{S}}}_{S \setminus T}f|_{x_{\bar{S}}}(x_S).
    \]
    Then expanding out $T_r^Sf$ gives:
    \begin{align*}
        T_r^S f(x) &= \sum\limits_{T \subseteq S} r_{T\setminus S}\prod_{i \in T}(1-r_i) E_{[d] \setminus T}f(x)\\
        &=\sum\limits_{T \subseteq S} r_{T\setminus S}\prod_{i \in T}(1-r_i)E^{x_{\bar{S}}}_{S \setminus T}f|_{x_{\bar{S}}}(x_S)\\
        &= T^{x_{\bar{S}}}_{r} f|_{x_{\bar{S}}}(x_S)
    \end{align*}
as desired.
\end{proof}


\section{The Symmetrization Theorem}
We are now ready to turn to our first major result, Bourgain's symmetrization theorem for HDX.

\begin{theorem}[Symmetrization on HDX]\label{thm:symmetrization}
Let $q > 1$ and $X$ be a $d$-partite $(q,\gamma)$-product for $\gamma \leq 2^{-\Omega(d)}$. Then for any $f \in C_d$:
\[
(1-2^{O(d)}\gamma)\norm{\widetilde{T_{c_q}f}}_q \leq \norm{f}_q \leq (1+2^{O(d)}\gamma) \norm{\widetilde{T_2f}}_q
\]
for some constant $0 \leq c_q \leq 1$ dependent only on $q$. 
\end{theorem}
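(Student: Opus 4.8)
I would follow the replacement-method proof of the product-space symmetrization theorem (as presented in \cite{o2014analysis}), upgrading each step using the two structural lemmas already established: \Cref{lemma:decorrelate}, which gives $\norm{T_r f - T^\pi_r f}_q \leq c_{d,r}\gamma\norm{f}_q$ for \emph{every} permutation $\pi$ (so any two orderings of the coordinate operators agree in $q$-norm up to this error), and \Cref{lem:localize}, which identifies the single-coordinate operator $T^{\{i\}}_\rho$ with the ordinary $1$-dimensional noise operator $T_\rho$ acting on $f|_{x_{-i}}$ inside the link $X_{x_{-i}}$. The only external ingredient is the classical $1$-dimensional symmetrization inequality: for any $g$ on a probability space and $q>1$,
\[
\norm{T_{1/2}g}_q \leq \norm{\widetilde{g}}_q \leq \norm{T_{c_q}g}_q
\]
for a constant $c_q \in (0,1]$ depending only on $q$. (The left inequality is elementary: write $T_{1/2}g = \tfrac12 g + \tfrac12\E g$, bound $|T_{1/2}g|^q \leq \tfrac12|g|^q + \tfrac12|\E g|^q$ by convexity, and note $|\E g|^q = |\E(2\E g - g)|^q \leq \E|2\E g - g|^q$ by Jensen; the right inequality is where $1$-dimensional hypercontractivity, and hence $c_q$, enters.) I record for repeated use that for vectors $r$ with entries in $\{\pm 1,\tfrac12\}$ one has $c_{d,r} \leq d^3 3^d = 2^{O(d)}$ \emph{uniformly}, so every decorrelation error below is uniform in the already-chosen coordinates; and that the noise operator $T_\rho$ for $\rho\in[0,1]$ is a convex average of the averaging operators $E_S$, hence contracts every $q$-norm.

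\textbf{Reductions.} It suffices to prove, for every $h \in C_d$,
\[
\norm{T_{1/2}h}_q \leq \norm{\widetilde{h}}_q + 2^{O(d)}\gamma\norm{h}_q \quad\text{and}\quad \norm{\widetilde{h}}_q \leq \norm{T_{c_q}h}_q + 2^{O(d)}\gamma\norm{h}_q .
\]
Taking $h = T_2 f$ in the first and using $\norm{T_{1/2}T_2 f - f}_q \leq 2^{O(d)}\gamma\norm{f}_q$ (from \Cref{lemma:apx-closed} and \Cref{lem:apx-eigen}, since $T_{1/2}T_2$ acts as the identity on the Efron--Stein basis up to error), together with $\norm{T_2 f}_q \leq 2^{O(d)}\norm{f}_q$ (\Cref{lemma:Efron-contract}), gives $\norm{f}_q \leq \norm{\widetilde{T_2 f}}_q + 2^{O(d)}\gamma\norm{f}_q$, and rearranging (legitimate as $\gamma = 2^{-\Omega(d)}$) yields the upper bound $\norm{f}_q \leq (1+2^{O(d)}\gamma)\norm{\widetilde{T_2 f}}_q$. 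Taking $h = T_{c_q}f$ in the second, using $\norm{T_{c_q}T_{c_q}f - T_{c_q^2}f}_q \leq 2^{O(d)}\gamma\norm{f}_q$, $\norm{T_{c_q}f}_q\le\norm f_q$, and $\norm{T_{c_q^2}f}_q \leq \norm{f}_q$ (both by $q$-norm contractivity of the noise operator, as $c_q\in(0,1]$), gives $\norm{\widetilde{T_{c_q}f}}_q \leq (1+2^{O(d)}\gamma)\norm{f}_q$, i.e.\ the lower bound.

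\textbf{The telescoping replacement.} I give the argument for the first displayed inequality; the second is identical with $T_{1/2}$ and $T_{c_q}$ interchanged and the two directions of the $1$-dimensional theorem swapped. For $0 \leq j \leq d$ set $T^{(j)} \coloneqq T^1_{r_1}\cdots T^j_{r_j}T^{j+1}_{1/2}\cdots T^d_{1/2}$ and $N_j \coloneqq \bigl(\E_{r_1,\dots,r_j}\norm{T^{(j)}h}_q^q\bigr)^{1/q}$. By \Cref{lemma:decorrelate} with scalar arguments, $N_0 = \norm{T_{1/2}h}_q \pm 2^{O(d)}\gamma\norm{h}_q$ and $N_d = \norm{\widetilde{h}}_q \pm 2^{O(d)}\gamma\norm{h}_q$, so it is enough to show $N_j \leq N_{j+1} + 2^{O(d)}\gamma\norm{h}_q$ for each $j$. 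Fixing $r_1,\dots,r_j$, \Cref{lemma:decorrelate} lets me move the $(j{+}1)$st coordinate operator to the front: $T^{(j)}h = T^{j+1}_{1/2}g + e$ with $g \coloneqq T^1_{r_1}\cdots T^j_{r_j}T^{j+2}_{1/2}\cdots T^d_{1/2}h$ and $\norm{e}_q \leq 2^{O(d)}\gamma\norm{h}_q$ uniformly in $r_1,\dots,r_j$. By \Cref{lem:localize}, $\norm{T^{j+1}_{1/2}g}_q^q = \E_{x_{-(j+1)}}\norm{T_{1/2}(g|_{x_{-(j+1)}})}_{q,x_{j+1}}^q$; applying the $1$-dimensional inequality inside each link (with a fresh sign $r_{j+1}$) and then localizing back via \Cref{lem:localize} yields $\norm{T^{j+1}_{1/2}g}_q^q \leq \E_{r_{j+1}}\E_x|T^{j+1}_{r_{j+1}}g(x)|^q$. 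Finally \Cref{lemma:decorrelate} reorders $T^{j+1}_{r_{j+1}}g$ into $T^{(j+1)}h$ up to $2^{O(d)}\gamma\norm{h}_q$ in $q$-norm. Averaging over $r_1,\dots,r_{j+1}$, taking $q$-th roots, and using the triangle inequality to collect the three errors gives the step; summing over the $d$ values of $j$ and invoking the reductions of the previous paragraph (with $h=T_2f$) completes the proof.

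\textbf{The main obstacle.} The conceptual content is entirely contained in \Cref{lemma:decorrelate} and \Cref{lem:localize}; given those, the remaining work is quantitative bookkeeping, and this is where care is required. Three points must be watched: (i) every use of decorrelation must have error uniform over the already-sampled coordinates $r_1,\dots,r_j$ so that it survives the outer expectation over $r$ --- this holds precisely because $c_{d,r} = 2^{O(d)}$ uniformly over the relevant sign/noise vectors; (ii) the error must not be allowed to compound across the $d$ replacement steps \emph{and} the reduction steps, which forces one to keep all errors additive (of size $2^{O(d)}\gamma\norm{\cdot}_q$) and convert to the multiplicative $(1\pm 2^{O(d)}\gamma)$ form only once, by rearranging the final inequality; and (iii) on a $(q,\gamma)$-product the Efron--Stein decomposition is only an \emph{approximate} eigenbasis, so the natural identities $T_{1/2}T_2 = \mathrm{Id}$ and $T_{c_q}T_{c_q} = T_{c_q^2}$ hold only up to $2^{O(d)}\gamma\norm{f}_q$ and must be routed through \Cref{lem:apx-eigen} and \Cref{lemma:apx-closed}. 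None of these is a genuine obstruction, but together they are what make the clean $(1\pm 2^{O(d)}\gamma)$ bound go through rather than a weaker $O(1)$-factor or error-term statement.
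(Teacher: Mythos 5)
Your upper-bound half is correct and follows essentially the same route as the paper: reduce to $\norm{T_{1/2}h}_q \leq \norm{\widetilde{h}}_q + 2^{O(d)}\gamma\norm{h}_q$, then telescope through the partially-replaced operators $T^{(j)}$ using \Cref{lemma:decorrelate} to permute, \Cref{lem:localize} to pass to the co-dimension-$1$ link, and the elementary one-dimensional inequality $\norm{a+\tfrac12 X}_q \leq \norm{a+rX}_q$; your explicit bookkeeping of the expectation over the already-sampled signs is a fine (slightly more careful) rendering of the paper's argument.

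The lower-bound half, however, has a genuine error. Your stated one-dimensional ingredient $\norm{\widetilde{g}}_q \leq \norm{T_{c_q}g}_q$ with $c_q \in (0,1]$ is false: for a single mean-zero variable $g$ one has $\widetilde{g}(r,x)=r\,g(x)$, so $\norm{\widetilde{g}}_q = \norm{g}_q$ while $\norm{T_{c_q}g}_q = c_q\norm{g}_q$. Consequently the reduction you build on it, namely ``for every $h$, $\norm{\widetilde{h}}_q \leq \norm{T_{c_q}h}_q + 2^{O(d)}\gamma\norm{h}_q$,'' fails already on an exact product, and the ``roles interchanged'' telescoping would be trying to prove exactly this false statement coordinate-by-coordinate (with $a=0$ the per-coordinate claim $\E_r\norm{a+rX}_q^q \leq \norm{a+c_qX}_q^q$ is again false for $c_q<1$). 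The correct one-dimensional input places the attenuation \emph{inside} the symmetrization: $\norm{a+rc_qX}_{q,(r,X)} \leq \norm{a+X}_q$ (O'Donnell's Lemma 10.43 for the $r=-1$ branch, Jensen for $r=+1$), so what one must show is $\norm{\widetilde{T_{c_q}h}}_q \leq (1+2^{O(d)}\gamma)\norm{h}_q$ directly --- i.e.\ each coordinate step replaces $T^i_{r_ic_q}$ by the \emph{identity}, not $T^i_{r_i}$ by $T^i_{c_q}$ --- which also makes your $h=T_{c_q}f$, $T_{c_q}T_{c_q}\approx T_{c_q^2}$ detour unnecessary. This is how the paper argues: after decorrelating $\widetilde{T_{c_q}f}$ into $T^1_{r_1c_q}\cdots T^d_{r_dc_q}f$, it shows each $T^i_{\pm c_q}$ is a $q$-norm contraction via \Cref{lem:localize} (Jensen for $+c_q$, \Cref{lemma:sym-lower} for $-c_q$), with no telescoping error at all; note also that \Cref{lemma:sym-lower} is only stated for $q\geq 2$, and the case $1<q<2$ is handled there by self-adjointness and H\"{o}lder duality --- a point your sketch does not address. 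Your architecture (decorrelate, localize, one-dimensional input) is the right one, but as written the lower bound does not go through without this repair.
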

Combined with \Cref{lem:2-to-p}, we immediately get symmetrization for standard $\gamma$-products and HDX.
\begin{corollary}\label{cor:sym-products}
Let $q \geq 1$ and $X$ be a $\gamma$-product with $\gamma \leq 2^{-\Omega(\max\{q,q'\}d)}$. For any $f \in C_d$:
\[
\left(1-2^{O(d)}\gamma^{\frac{2}{\max\{q,q'\}}}\right)\norm{\widetilde{T_{c_q}f}}_q \leq \norm{f}_q \leq \left(1+2^{O(d)}\gamma^{\frac{2}{\max\{q,q'\}}}\right) \norm{\widetilde{T_2f}}_q
\]
for $q'$ the H\"{o}lder conjugate of $q$. For $q\in\{4,4/3\}$, one may take $c_q=2/5$.
\end{corollary}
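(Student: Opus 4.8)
The plan is to obtain \Cref{cor:sym-products} by feeding the interpolation bound of \Cref{lem:2-to-p} into \Cref{thm:symmetrization}; the only genuinely new ingredient is pinning down the constant $c_q$ in the special cases $q\in\{4,4/3\}$. First I would merge the two cases of \Cref{lem:2-to-p} into a single statement indexed by $m := \max\{q,q'\}$. When $q \ge 2$ we have $q'\le 2\le q$, so $m=q$ and $\gamma_q=\gamma^{2/q}2^{1-2/q}$; when $1<q<2$ we have $m=q'$ and, since $\tfrac{2(q-1)}{q}=\tfrac{2}{q'}$, again $\gamma_q=\gamma^{2/q'}2^{1-2/q'}$. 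In both cases
\[
\gamma_q = 2^{1-2/m}\,\gamma^{2/m}\ \le\ 2\,\gamma^{2/m},
\]
so every $\gamma$-product is a $(q,\gamma_q)$-product with $\gamma_q\le 2\gamma^{2/m}$. (For $q=1$, $m=\infty$ and the hypothesis $\gamma\le 2^{-\Omega(md)}$ forces $\gamma=0$, so that case is vacuous.)

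Next I would check that the hypothesis of \Cref{thm:symmetrization} holds for the induced $(q,\gamma_q)$-product: it requires $\gamma_q\le 2^{-\Omega(d)}$. Since the corollary assumes $\gamma\le 2^{-\Omega(md)}$, we get $\gamma^{2/m}\le 2^{-\Omega(d)}$ and hence $\gamma_q\le 2\cdot 2^{-\Omega(d)}=2^{-\Omega(d)}$, after choosing the implied constant in the corollary's exponent large enough to absorb both the factor of $2$ and the constant implicit in the theorem. Applying \Cref{thm:symmetrization} to this $(q,\gamma_q)$-product then gives, for every $f\in C_d$ and the theorem's constant $c_q\in[0,1]$,
\[
(1-2^{O(d)}\gamma_q)\norm{\widetilde{T_{c_q}f}}_q \le \norm{f}_q \le (1+2^{O(d)}\gamma_q)\norm{\widetilde{T_2f}}_q.
\]
Substituting $2^{O(d)}\gamma_q \le 2^{O(d)}\cdot 2\gamma^{2/m}=2^{O(d)}\gamma^{2/m}$ (absorbing the $2$ into $2^{O(d)}$) yields precisely the displayed inequality of \Cref{cor:sym-products} with error exponent $2/m = 2/\max\{q,q'\}$. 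This settles the general statement.

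It remains to identify $c_q$ for $q\in\{4,4/3\}$. For this I would open up the proof of \Cref{thm:symmetrization}: by the replacement-method argument sketched in the technical overview (decorrelate $T_r$ into coordinate operators, localize each to a $1$-dimensional marginal, replace, re-assemble), the multiplicative constant in the lower bound of \Cref{thm:symmetrization} is exactly the constant of the \emph{one-dimensional} symmetrization theorem for $q$-norms, applied inside links of $X$. For $q=4$ this one-dimensional constant is determined by an elementary fourth-moment optimization: comparing $\E_{r,\omega}\bigl[(\mu + r\,g'(\omega))^4\bigr]$ against $\E_\omega\bigl[(\mu + c\,g'(\omega))^4\bigr]$ over all mean-zero $g'$ and observing that the worst case is the $\varepsilon$-biased two-point space, one finds the threshold is $c=2/5$. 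The case $q=4/3$ then follows from $q=4$ by Hölder duality: the noise operator is self-adjoint, so the symmetrization inequality at $q$ and at its conjugate $q'$ are adjoint statements and the constant transfers unchanged, exactly as duality is used in the proofs of \Cref{lem:2-to-p} and \Cref{lem:swap}; hence $c_{4/3}=c_4=2/5$.

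The mechanical substitution is routine; the main obstacle is the last step. Establishing $c_4=2/5$ requires (i) checking that the constant produced by the replacement argument in \Cref{thm:symmetrization} is inherited verbatim from the one-dimensional case with no degradation, (ii) carrying out the extremal fourth-moment computation carefully, in particular verifying that the $\varepsilon$-biased two-point space is the worst case among all finite $1$-dimensional spaces (the localized functions live on arbitrary finite probability spaces inside links), and (iii) confirming the Hölder-duality passage from $q=4$ to $q=4/3$ survives the fact that the one-dimensional step is applied locally within $X$ rather than globally.
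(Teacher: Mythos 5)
Your derivation of the main inequality is exactly the paper's route: \Cref{cor:sym-products} is obtained by feeding \Cref{lem:2-to-p} (with $\gamma_q \le 2\gamma^{2/\max\{q,q'\}}$, absorbing constants into the $2^{O(d)}$ and $\Omega(\cdot)$ factors) into \Cref{thm:symmetrization}, and that part of your write-up is correct. The one place you diverge is the claim $c_q=2/5$ for $q\in\{4,4/3\}$, which you propose to establish by a fresh extremal fourth-moment computation; in the paper this requires no new work, because the constant in the lower bound of \Cref{thm:symmetrization} is inherited verbatim from the cited one-dimensional statement (\Cref{lemma:sym-lower}, O'Donnell Lemma 10.43), which already asserts that $2/5$ suffices for $q\in\{4,4/3\}$, and the passage from $q=4$ to $q=4/3$ is the self-adjointness/H\"older-conjugation step already carried out inside the proof of the theorem's lower bound --- exactly as you anticipate in your points (i) and (iii). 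If you do insist on re-deriving the constant, note that the inequality you need is the fixed-sign statement $\norm{a-c_qX}_q \le \norm{a+X}_q$ for mean-zero $X$ (this is what bounds $\norm{T^i_{-c_q}g}_q$ coordinate-by-coordinate for a fixed sign pattern $r$), not the comparison of $\E_\omega[(\mu+c\,g')^4]$ against the $r$-averaged quantity $\E_{r,\omega}[(\mu+r\,g')^4]$ that you wrote down --- the averaged form is weaker and does not plug into the replacement argument, which fixes all signs before localizing; also, $2/5$ is only claimed to be admissible, not the exact threshold of the extremal two-point family.
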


The proof closely follows the ideas of Bourgain, as presented by O'Donnell \cite{o2014analysis}. Based on the machinery developed in the previous section, the idea is to decompose $T_\rho$ into coordinate-wise operators and handle each coordinate as a single-variate problem, replacing each copy of $T_\rho^i$ with $T_r^i$ for $r \in \{-1,1\}$. To this end, we first need the single-variate symmetrization theorem.


\begin{lemma}[Symmetrization for random variables (10.14,10.15 in \cite{o2014analysis})]\label{lem:1-D}
Let $X$ be a $0$-mean, real-valued random variable satisfying $\norm{X}_q \leq \infty$. Then for any $a \in \mathbb{R}$, we have:
\[
\norm{a + \frac{1}{2}X}_q \leq \norm{a+rX}_q
\]
where $r \sim \{-1,1\}$ is a uniformly distributed random bit.
\end{lemma}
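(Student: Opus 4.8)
The statement is the classical single-variable symmetrization lemma (stated as (10.14)/(10.15) in \cite{o2014analysis}), and the plan is to reproduce its short proof: two applications of Jensen's inequality pointing in opposite directions, using only $\E[X]=0$. Write $\phi(t)=|t|^q$, which is convex on $\R$ since $q\ge 1$; finiteness of $\norm{X}_q$ ensures every expectation appearing below is finite (by the triangle inequality all three of $\norm{a+\tfrac12 X}_q$, $\norm{a\pm X}_q$ are finite).

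First I would split the contraction factor $\tfrac12$ using an auxiliary uniform bit. Let $B\in\{0,1\}$ be uniformly random and independent of $X$, so that $\E[a+BX\mid X]=a+\tfrac12 X$. Conditional Jensen gives $|a+\tfrac12 X|^q\le \E_B[\,|a+BX|^q\mid X\,]$; taking $\E_X$ and unpacking the two-point average over $B$ yields
\[
\norm{a+\tfrac12 X}_q^q \;\le\; \tfrac12|a|^q+\tfrac12\,\E_X|a+X|^q .
\]
Next I would replace the stray $|a|^q$ by a term of the right shape: since $\E[X]=0$, Jensen applied to the random variable $a-X$ gives the numerical inequality $|a|^q=\bigl|\E_X(a-X)\bigr|^q\le \E_X|a-X|^q$. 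Substituting this into the previous line,
\[
\norm{a+\tfrac12 X}_q^q \;\le\; \tfrac12\,\E_X|a-X|^q+\tfrac12\,\E_X|a+X|^q \;=\; \E_{r,X}|a+rX|^q \;=\; \norm{a+rX}_q^q ,
\]
where the middle equality is just the two-point average over the uniform sign $r$ (independent of $X$). Taking $q$-th roots finishes the argument.

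There is no substantive obstacle here. The only thing to keep straight is that the two invocations of Jensen go in opposite directions --- once as $\phi(\E[W])\le\E[\phi(W)]$ with $W=a+BX$ to pull $\tfrac12 X$ out of the $B$-average, and once as the same inequality with $W=a-X$ to push the constant $a$ back inside a two-point average --- and that the auxiliary randomness ($B$, and the sign $r$) is genuinely independent of $X$ so the conditional expectations and two-point averages behave exactly as written. Everything else is bookkeeping of the $\{0,1\}$- and $\{\pm 1\}$-valued averages, and the argument uses nothing about $X$ beyond $\E[X]=0$ and $\norm{X}_q<\infty$, so it applies verbatim for every $q\ge 1$.
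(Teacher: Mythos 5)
Your proof is correct, and it is exactly the standard argument behind the cited result: the paper does not prove this lemma but imports it from O'Donnell (10.14–10.15), whose proof is precisely your two opposing applications of Jensen — averaging out a Bernoulli coefficient to produce $\tfrac12|a|^q+\tfrac12\E|a+X|^q$, then using $\E[X]=0$ to dominate $|a|^q$ by $\E|a-X|^q$. Nothing is missing.
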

With this in hand, we can prove a single-coordinate variant of the result by restricting our function to the relevant variable and applying the above.
\begin{lemma}\label{lemma:coord-sym}
Let $X$ be a $d$-partite complex and $f \in C_d$. For any $i \in [d]$:
\[
\norm{T_{1/2}^if}_q \leq \norm{T^i_{r}f}_q
\]
where $r \sim \{-1,1\}$ is a uniformly distributed random bit.
\end{lemma}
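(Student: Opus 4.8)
The plan is to reduce the coordinate-wise inequality to the one-dimensional statement of \Cref{lem:1-D} via the localization lemma. First I would invoke \Cref{lem:localize} with $S = \{i\}$ so that $\bar S = [d] \setminus \{i\}$, which gives $T^i_r f(x) = T^{x_{-i}}_r f|_{x_{-i}}(x_i)$ for every $x \in X(d)$; the same identity holds with $r = 1/2$. Taking $q$-norms over $X(d)$ and splitting the expectation by first sampling $x_{-i} \sim X[[d]\setminus\{i\}]$ and then $x_i$ from the link, I get
\[
\norm{T_{1/2}^i f}_q^q = \underset{x_{-i}}{\mathbb{E}}\left[\,\norm{T_{1/2}\, f|_{x_{-i}}}_{q, X_{x_{-i}}[i]}^q\,\right],
\qquad
\norm{T_{r}^i f}_q^q = \underset{x_{-i}}{\mathbb{E}}\left[\,\underset{r}{\mathbb{E}}\,\norm{T_{r}\, f|_{x_{-i}}}_{q, X_{x_{-i}}[i]}^q\,\right].
\]
So it suffices to prove, for each fixed link $X_{x_{-i}}[i]$ and each function $g \colon X_{x_{-i}}[i] \to \R$, the pointwise-in-$x_{-i}$ bound $\norm{T_{1/2} g}_q \le \E_r \norm{T_r g}_q^q{}^{1/q}$; summing the $q$-th powers against the outer measure then finishes the lemma.

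The second step is to unpack what $T_\rho$ is on the single-coordinate domain $X_{x_{-i}}[i]$. On a one-dimensional complex, $E_\emptyset g$ is the constant $\E[g]$ and $E_{\{i\}} g = g$, so $T_\rho g = \rho g + (1-\rho)\E[g]$, and more generally the vector-valued operator acts as $T_r g = r\,g + (1-r)\E[g] = \E[g] + r(g - \E[g])$. Writing $a \coloneqq \E[g]$ (a scalar, possibly depending on $x_{-i}$) and $Y \coloneqq g - \E[g]$ (a mean-zero random variable on the link), we have $T_\rho g = a + \rho Y$ and $T_r g = a + rY$. Then \Cref{lem:1-D} applied with $X = Y$, the constant $a$, and $q$ gives exactly $\norm{a + \tfrac12 Y}_q \le \norm{a + rY}_q$, i.e. $\norm{T_{1/2} g}_q \le \norm{T_r g}_q$ for the symmetrization bit $r$. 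Since $r$ ranges over $\{-1,1\}$ and the right-hand side of \Cref{lem:1-D} already has the randomness over $r$ baked in (its norm is over both the link measure and $r$), this is precisely the pointwise inequality needed above; raising to the $q$-th power and averaging over $x_{-i}$ yields the statement.

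I do not expect a serious obstacle here: the only subtlety is bookkeeping the iterated expectation correctly — that drawing $x_{-i}$ from the projected measure and then $x_i$ from the link $X_{x_{-i}}[i]$ reproduces the measure $\Pi_d$ on $X(d)$, which is immediate from the definition of links in \Cref{sec:prelims} — and making sure the constant $a = \E[g]$ in the application of \Cref{lem:1-D} is allowed to vary with $x_{-i}$, which it is, since \Cref{lem:1-D} holds for every $a \in \R$. Note also that no expansion hypothesis is used: the lemma is stated for an arbitrary $d$-partite complex, and indeed localization (\Cref{lem:localize}) requires no HDX assumption, consistent with the fact that this is a purely "one-dimensional per coordinate" phenomenon.
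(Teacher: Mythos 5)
Your proposal is correct and follows essentially the same route as the paper: localize via \Cref{lem:localize} to the one-coordinate link, observe that there $T_\rho g = \E[g] + \rho(g-\E[g])$ (i.e.\ the decomposition into $g^{=\emptyset}$ and the mean-zero part $g^{=\{i\}}$ used in the paper), apply \Cref{lem:1-D} pointwise in $x_{-i}$ with $a=\E[g]$ allowed to vary, and reassemble the nested $q$-norms. The bookkeeping of the iterated expectation and of the $r$-randomness in the outer norm matches the paper's argument exactly.
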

\begin{proof}
\begin{align*}
    \norm{T^i_{1/2}f}_q &= \norm{\norm{T^{x_{-i}}_{1/2}f|_{x_{-i}}}_{q,x_i}}_{q,x_{-i}} & \text{(\Cref{lem:localize})}\\
    &= \norm{\norm{\left(f|_{x_{-i}}\right)^{\{\emptyset\}}+\frac{1}{2}\left(f|_{x_{-i}}\right)^{=\{i\}}}_{q,x_i,r}}_{q,x_{-i}}\\
    &\leq \norm{\norm{\left(f|_{x_{-i}}\right)^{\{\emptyset\}}+r\left(f|_{x_{-i}}\right)^{=\{i\}}}_{q,x_i,r}}_{q,x_{-i}} & \text{(\Cref{lem:1-D})}\\
    &=\norm{\norm{T_{r}f|_{x_{-i}}}_{q,x_i}}_{q,x_{-i},r}\\
    &=\norm{T^i_{r}f}_q & \text{(\Cref{lem:localize})}
\end{align*}
\end{proof}
We are now ready to prove the upper bound of \Cref{thm:symmetrization} via the replacement method.
\begin{proof}[Proof of \Cref{thm:symmetrization} (Upper Bound)]
We first argue it is sufficient to show
\begin{equation}\label{eq:noise-form}
\norm{T_{1/2}f}_q \leq \norm{T_r f}_q + 2^{O(d)}\gamma\norm{f}_q
\end{equation}
In particular we may write:
\begin{align*}
\norm{g}_q &\leq \norm{T_{1/2}T_2g}_q + 2^{O(d)}\gamma\norm{g}_q & \text{(\Cref{lemma:apx-closed})}\\
&\leq \norm{T_r T_2 g}_q + 2^{O(d)}\gamma\norm{T_2g}_q + 2^{O(d)}\gamma\norm{g}_q & \text{(\Cref{eq:noise-form})}\\
&\leq \norm{T_r T_2 g}_q + 2^{O(d)}\gamma\norm{\sum\limits_{S \subseteq [d]}2^{|S|}g^{=S}}_q + 2^{O(d)}\gamma\norm{g}_q& \text{(\Cref{claim:Efron-Noise})}\\
&\leq \norm{T_r T_2 g}_q + 2^{O(d)}\gamma\norm{g}_q & \text{(\Cref{lemma:Efron-contract})}
\end{align*}
Re-arranging and using the fact that $\frac{1}{1-2^{O(d)}\gamma} \leq 1+2^{O(d)}\gamma$ for small enough $\gamma$ gives the desired bound.

We prove \Cref{eq:noise-form} by the replacement method. In particular, we first start with de-correlating the relevant noise operators into their coordinate-wise components by \Cref{lemma:decorrelate}:
\begin{enumerate}
    \item $\norm{T_{1/2}f - T^1_{1/2}\ldots T^d_{1/2}f}_q \leq 2^{O(d)}\gamma\norm{f}_{q}.$
    \item $\norm{T_{r}f - T^1_{r_1}\ldots T^d_{r_d}f}_q \leq 2^{O(d)}\gamma\norm{f}_{q}.$
\end{enumerate}
Thus it is sufficient to prove
\[
\norm{T^1_{1/2}\ldots T^d_{1/2}f}_q \leq \norm{T^1_{r_1}\ldots T^d_{r_d}f}_q + 2^{O(d)}\gamma\norm{f}_q.
\]
We now iterate through coordinates one by one replacing $T_{1/2}$ with $T_{r_i}$. Toward this end, define $T^{(j)}$ to be the partially replaced operator at step $j$:
\[
T^{(j)} \coloneqq T^1_{r_1}\ldots T^j_{r_j}T^{j+1}_{1/2}\ldots T^d_{1/2}.
\]
We'll prove for all $j \in \{0,\ldots,d\}$:
\begin{equation}\label{eq:replacement}
    \norm{T^{(j)}f}_q \leq\norm{T^{(j+1)}f}_q + 2^{O(d)}\gamma\norm{f}_{q}.
\end{equation}
Then we clearly have
\begin{align*}
\norm{T^1_{1/2}\ldots T^d_{1/2}f}_q&=\norm{T^{(0)}f}_q\\ 
&\leq \norm{T^{(d)} f}_q + 2^{O(d)}\gamma\norm{f}_{q}\\
&= \norm{T^1_{r_1}\ldots T^d_{r_d}f}_q+2^{O(d)}\gamma\norm{f}_{q}
\end{align*}
as desired.

It remains to prove \Cref{eq:replacement}, which now follows exactly as in the standard proof up to the accumulation of error. In particular, we may simply permute the $(j+1)$st operator to the front using \Cref{lemma:decorrelate}, apply our single-coordinate symmetrization theorem to the remaining function, and permute back. Formally, observe that by \Cref{lemma:decorrelate} any two orderings $\pi$ and $\sigma$ and $r \in [-1,1]^d$ satisfy:
\begin{equation}\label{eq:permute}
    \norm{T^\pi_{r}f - T^\sigma_{r}f}_q =  \norm{(T^\pi_{r}f - T_rf) + (T_rf -T^\sigma_{r}f)}_q \leq 2^{O(d)}\gamma\norm{f}_q
\end{equation}
by the triangle inequality. Thus we can write:
\begin{align*}
    \norm{T^{(j)}f} &= \norm{T^1_{r_1}\ldots T^{j}_{r_j}T^{j+1}_{1/2}\ldots T^d_{1/2}f}_q\\
    &\leq \norm{T^{j+1}_{1/2}(T^1_{r_1}\ldots T^{j}_{r_j}T^{j+2}_{1/2}\ldots T^d_{1/2}f)}_q + 2^{O(d)}\gamma\norm{f}_q & \text{(\Cref{eq:permute})}\\
    &\leq \norm{T^{j+1}_{r_{j+1}}(T^1_{r_1}\ldots T^{j}_{r_j}T^{j+2}_{1/2}\ldots T^d_{1/2}f)}_q + 2^{O(d)}\gamma\norm{f}_q  & \text{(\Cref{lemma:coord-sym})}\\
    &\leq \norm{(T^1_{r_1}\ldots T^{j+1}_{r_{j+1}}T^{j+2}_{1/2}\ldots T^d_{1/2}f)}_q + 2^{O(d)}\gamma\norm{f}_q & \text{(\Cref{eq:permute})}\\
&=\norm{T^{(j+1)}f}_q+2^{O(d)}\gamma\norm{f}_q,
\end{align*}
completing the proof.
\end{proof}
The proof of the lower bound is similar. We start again with a closely related single-variate lemma.
\begin{lemma}[{\cite[Lemma 10.43]{o2014analysis}}]\label{lemma:sym-lower}
For any $q \geq 2$, there is an absolute constant $c_q \in (0,1)$ such that for any mean-$0$, real-valued random variable $X$ satisfying $\norm{X}_q \leq \infty$ and $a \in \mathbb{R}$:
\[
\norm{a-c_qX}_q \leq \norm{a+X}_q.
\]
For $q\in\{4,4/3\}$, we may take $c_q=2/5$.
\end{lemma}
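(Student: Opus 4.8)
The plan is to follow the template of the upper-bound lemma \Cref{lem:1-D}, with the extra work concentrated in producing a \emph{single} constant $c_q\in(0,1)$ that defeats every mean-zero $X$ at once. The clean base case is a single Rademacher $\epsilon$, where even $c=1$ works: since $\epsilon$ is symmetric, $\norm{a-c\epsilon}_q=\norm{a+c\epsilon}_q=\left(\tfrac12|a+c|^q+\tfrac12|a-c|^q\right)^{1/q}$, and $c\mapsto|a+c|^q+|a-c|^q$ is nondecreasing on $[0,\infty)$ because its derivative $q\big(|a+c|^{q-1}\sign(a+c)-|a-c|^{q-1}\sign(a-c)\big)$ is nonnegative ($t\mapsto|t|^{q-1}\sign(t)$ is increasing and $a+c\ge a-c$). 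Thus the real content of the lemma is precisely that a general mean-zero $X$ is, up to a bounded contraction, no worse than a Rademacher.

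For even integers $q=2k$ I would argue directly by moments. Writing $m_j=\E[X^j]$ and expanding both sides with the binomial theorem, the target $\E(a-cX)^{2k}\le\E(a+X)^{2k}$ is equivalent to
\[
\sum_{j=0}^{2k}\binom{2k}{j}a^{2k-j}\big((-c)^j-1\big)m_j\le 0 .
\]
The even-$j$ terms are automatically favourable (there $(-c)^j-1=c^j-1<0$ while $m_j\ge 0$), so the only obstruction is the odd-$j$ cross terms $\binom{2k}{j}a^{2k-j}m_j$, whose sign is uncontrolled. Each of these is bounded using the interpolation bound $|m_j|\le\sqrt{m_{j-1}m_{j+1}}$ (Cauchy--Schwarz) followed by a weighted AM--GM step charging it against the two neighbouring even-$j$ terms; one then picks $c$ small enough that the resulting ledger of coefficients closes. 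For $q=4$ this collapses to the single scalar inequality $4(c^3+1)^2\le 6(1-c^2)^2(1+c^2)$, which holds at $c=\tfrac25$, yielding $c_4=\tfrac25$; the exponent $q=4/3$ appearing in \Cref{cor:sym-products} is then obtained by a H\"older-duality argument (or one simply invokes \cite{o2014analysis}).

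For a general real exponent $q\ge 2$ the exact moment identity is unavailable, and I would instead run a quantitative-convexity argument: for $q\ge2$ one has a sharpened Jensen inequality of the shape $|a+h|^q\ge|a|^q+q\,\sign(a)|a|^{q-1}h+c'_q\min(|a|^{q-2}h^2,|h|^q)$ together with a Taylor-remainder upper bound $|a-ch|^q\le|a|^q-qc\,\sign(a)|a|^{q-1}h+C'_q\big(c^2|a|^{q-2}h^2+c^q|h|^q\big)$; taking expectations kills the linear terms because $\E X=0$, after which one must verify that for $c\le c_q$ small the upper remainder is absorbed by the lower one. I expect this last bookkeeping to be the main obstacle, since the comparison must be made uniformly over unbounded, possibly heavy-tailed $X$ and over all $a\in\R$ at once: on the event $|X|\lesssim|a|$ one compares quadratic terms, but on $|X|\gg|a|$ one instead uses the cruder comparison $|a-cX|^q\approx c^q|X|^q\le|X|^q\approx|a+X|^q$, and these two regimes must be glued together carefully. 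This scale interplay is also why one cannot simply interpolate the even-integer cases in $q$: an inequality between \emph{differences} of $q$-th moments is not log-convex in $1/q$, so a Riesz--Thorin-style argument does not apply, and the non-even exponents genuinely require the quantitative-convexity route.
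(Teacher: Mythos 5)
The paper never proves this lemma---it is imported verbatim from O'Donnell---so the right comparison is with the textbook argument, and for the only parameters the paper actually uses your proof matches it: for $q=4$, expanding $\E(a+X)^4-\E(a-cX)^4$, viewing it as a quadratic in $a$, and killing the discriminant via $m_3^2\le m_2m_4$ reduces exactly to your scalar inequality $4(1+c^3)^2\le 6(1-c^2)^2(1+c^2)$, which indeed holds at $c=2/5$; and getting $q=4/3$ from $q=4$ by self-adjointness of $f\mapsto (1+c)\E f-cf$ plus H\"older duality is correct and is literally the reduction the paper performs inside the proof of \Cref{thm:symmetrization}. So the $q\in\{4,4/3\}$, $c_q=2/5$ portion is right and standard.

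As a proof of the statement as written (all $q\ge2$), however, there are genuine gaps. (a) For general even $q=2k$ the ``ledger'' is asserted, not closed, and the nearest-neighbour scheme you describe (Cauchy--Schwarz $|m_j|\le\sqrt{m_{j-1}m_{j+1}}$, then weighted AM--GM charging $j$ against $j\pm1$) provably cannot close, no matter how small $c$ is: shrinking $c$ only drives the coefficients $1\pm c^j$ to $1$, so it creates no slack, and already at $q=6$ the per-coefficient constraints are incompatible --- the $y_2$ column forces $t_3\le\frac{3(1-c^2)}{2(1+c^3)}\le\frac32$, the $y_6$ column forces $t_5\ge\frac{3(1+c^5)}{1-c^6}\ge3$, and then the $y_4$ column requires $\frac{10(1+c^3)}{t_3}+3(1+c^5)t_5\le15(1-c^4)$, while the best case gives $\frac{10}{3/2}+9=\frac{47}{3}>15$. (A two-point $X$ putting small mass at $-a$ shows the post-absolute-value inequality has only slack $2k-1$ out of $\sim 4^k$, so any bookkeeping that loses a constant factor per term dies.) The even case can be rescued, but by a different estimate: bound $|m_j|\le\E|X|^j$ instead and reduce, pointwise in $u=|X|/|a|$, to the one-variable inequality $(1-u)^{2k}+2k(1+c)u\ge(1+cu)^{2k}$ for $u\ge0$, valid for $c\le c_k$. (b) For non-even $q\ge2$ you explicitly stop at a sketch, and the $\min(|a|^{q-2}h^2,|h|^q)$ lower bound is pointwise too weak in the regime $|X|\ll|a|$ (the upper remainder there is $\asymp c^2|a|^{q-2}X^2$, which $|X|^q$ cannot absorb), so the regime-matched estimates and gluing you defer are exactly the content of the lemma. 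A clean uniform finish for all $q\ge2$: prove the homogeneous pointwise inequality $|a-ch|^q\le|a+h|^q-q(1+c)\sign(a)|a|^{q-1}h$, equivalently $G(-ch)\le G(h)$ for $G(h)=|1+h|^q-1-qh$ (a one-variable convexity fact once $c\le c_q$), and take expectations so the linear term vanishes by $\E X=0$ --- or simply do what the paper does and cite \cite{o2014analysis}.
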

We can now prove the lower bound in \Cref{thm:symmetrization} by similar arguments to the above.
\begin{proof}[Proof of \Cref{thm:symmetrization} (Lower Bound)]
Similar to the upper bound, it suffices to show
\[
\norm{T^{1}_{r_1c_q}\ldots T^d_{r_dc_q}f}_q \leq \norm{f}_q + 2^{O(d)}\gamma\norm{f}_q.
\]
Observe that we may re-write the lefthand side of the above inequality by breaking the norm into its boolean and product components:
\[
\norm{T^{1}_{r_1c_q}\ldots T^d_{r_dc_q}f}_q = \norm{\norm{T^{1}_{r_1c_q}\ldots T^d_{r_dc_q}f}_{q,X}}_{q,r}.
\]
It is therefore sufficient to prove that the operators $T^{i}_{r_ic_q}$ contract the inner $q$-norm for fixed $r_i \in \{\pm 1\}$. Since $0 \leq c_q \leq 1$, $T^i_{c_q}$ is an averaging operator and contracts $q$-norms by Jensen's inequality. Thus it is enough to show that for any $g$ with finite $q$-norm:
\[
\norm{T^i_{-c_q}g}_q \leq \norm{g}_q.
\]
By our localization lemma, \Cref{lem:localize}, it is sufficient to show the `single-variate' version of this contraction for $T_{-c_q}$ on any co-dimension 1 link, as then:
\[
\norm{T^i_{-c_q}g}_q = \norm{\norm{T^{x_{-i}}_{-c_q}g|_{x^{-i}}}_{q,x_i}}_{q,x_{-i}} \leq \norm{\norm{g|_{x^{-i}}}_{q,x_i}}_{q,x_{-i}} \leq \norm{g}_q.
\]
Since $T^{x_{-i}}_{-c_q}$ is self-adjoint, it is also enough to prove the inequality just for $q \geq 2$ case by H\"{o}lder conjugation (c.f.\ \Cref{lem:swap}). We can now appeal to \Cref{lemma:sym-lower}. In particular for any single variate function $h$
\begin{align*}
    \norm{T_{-c_q}h}_q &= \norm{h^{=\emptyset}-c_qh^{=1}}_q\\
    & \leq \norm{h^{=\emptyset}+h^{=1}} & \text{(\Cref{lemma:sym-lower})}\\
    &= \norm{h}_q
\end{align*}
since $h^{=1}$ is mean-$0$ and bounded. This holds in particular for $h=g|_{x^{-i}}$ on the co-dim 1 links of $X$, completing the proof.
\end{proof}
\section{Global Hypercontractivity}
We now give our first application of the symmetrization theorem, a (relatively) simple proof of nearly-sharp global hypercontractivity on HDX. 
\begin{theorem}\label{thm:Bonami}
Let $q \geq 2$ be an even integer and $X$ any $d$-partite $\gamma$-product with $\gamma \leq 2^{-\Omega(q d)}$. For any $i \leq d$ and $f \in C_d$:
\[
\norm{f^{\leq i}}_q^q \leq (500q)^{qi}\norm{f^{\leq i}}_2^2\max_{|S|\leq i, x_{S}}\{\norm{f|_{x_S}}_2^{q-2}\}
+ c_{d,q}\gamma\norm{f}_2^2\max_{|S|\leq i, x_{S}}\{\norm{f|_{x_S}}_2^{q-2}\}
\]
for $c_{d,q} \leq q^{O(qi)}2^{O(dq)}$.
\end{theorem}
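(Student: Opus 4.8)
The plan is to follow the outline sketched in the technical overview, combining the symmetrization theorem (\Cref{thm:symmetrization}, or rather \Cref{cor:sym-products}), the approximate-eigenbasis lemmas, the expansion of swap walks (\Cref{lem:swap}), and the \emph{ordinary} $(2{\to}q)$-Bonami lemma on the boolean cube. First I would symmetrize: write $\norm{f^{\leq i}}_q \lesssim (1+o_\gamma(1))\norm{\widetilde{T_2 f^{\leq i}}}_q$ using the symmetrization upper bound, and then replace $T_2 f^{\leq i}$ by its ``diagonal'' $\sum_{|S|\le i} 2^{|S|} f^{=S}$ up to an $\ell_2$-type error via \Cref{lem:apx-eigen} (the $\gamma$-product version, so the error scales like $\gamma \norm{f}_2^{2/q}\max_{|S|\le i,x_S}\{\norm{f|_{x_S}}_2^{1-2/q}\}$, which after taking $q$-th powers becomes the claimed $c_{d,q}\gamma\norm{f}_2^2\max\{\cdots\}^{q-2}$ term). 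This reduces the problem to bounding $\mathbb{E}_x\mathbb{E}_r[(\sum_{|S|\le i} r_S 2^{|S|} f^{=S}(x))^q]$.

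Next, for each fixed $x$ the inner expectation over $r\in\{\pm1\}^d$ is the $q$-norm (to the $q$) of a degree-$\le i$ boolean function with Fourier coefficients $2^{|S|}f^{=S}(x)$, so I apply the classical Bonami lemma to trade the $q$-norm for the $2$-norm at cost $(\sqrt{q-1})^{qi}$, i.e. get $\mathbb{E}_x[(\sum_{|S|\le i} 4^{|S|} f^{=S}(x)^2)^{q/2}]$ up to this factor. The $4^{|S|}\le 4^i$ contributes another $4^{qi/2}$; absorbing both into the final constant is routine. Now expand the $q/2$-th power and re-index by the pattern of pairwise intersections of the sets $S_1,\dots,S_{q/2}$: one is left with sums of terms $\mathbb{E}_x[\prod_k f^{=S_k}(x)^2]$. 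Following \cite{zhao2021generalizations}, I condition on the union of intersections $I$ and repeatedly apply \Cref{lem:swap} in the link of $x_I$ to de-correlate one factor at a time: each application of a swap walk replaces $\mathbb{E}_{x_{S_k\setminus I}\sim X_{x_I}}[f^{=S_k}(x_{S_k})^2 (\cdots)]$ by the product of expectations up to a multiplicative $O(1)$ loss and an additive $O_d(\gamma)\norm{f}_2^2\max\{\cdots\}^{q-2}$ error (using $\norm{(f^{=S})^2}_2 \le 2^{O(|S|)}\norm{f^{=S}}_4^2$ and the $\ell_2$-bounds of \Cref{lemma:Efron-contract}, \Cref{lemma:p-to-q-down} to control the errors in $\ell_2$ rather than $\ell_\infty$). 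After all de-correlations, every factor but one becomes roughly $\sum_{|S|\le i}\norm{f^{=S}}_2^2 \approx \norm{f^{\leq i}}_2^2$ (approximate Parseval, \Cref{thm:apx-Efron-Stein}), and the remaining factor, which still carries the conditioning on $x_I$, is bounded by $\max_{|S|\le i,x_S}\{\norm{f|_{x_S}}_2^2\}$ times a combinatorial factor counting the number of intersection patterns, which is $q^{O(qi)}$. Collecting the $\ell_2$-factors: $q/2-1$ of them give $\norm{f^{\le i}}_2^{q-2}$ and one gives the max term $\max\{\norm{f|_{x_S}}_2^{2}\}^{... }$ — one has to be slightly careful that the bookkeeping produces exactly $\norm{f^{\le i}}_2^2 \cdot \max\{\norm{f|_{x_S}}_2^{q-2}\}$ and not a worse split, but this is exactly the balance \cite{zhao2021generalizations} engineer and it carries over verbatim.

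The main obstacle I expect is the careful propagation of the error terms through the iterated swap-walk de-correlations while keeping everything expressed in $\ell_2$ (and conditional $\ell_2$) norms rather than $\ell_\infty$: a naive bound using $\norm{f}_\infty$ would reintroduce exactly the defect of \Cref{eq:old-GLL} that this theorem is meant to remove. Concretely, at each de-correlation step one must bound inner products of the form $\langle (f^{=S}|_{x_I})^2, (A^{x_I}_{T\setminus I,S\setminus I}-\Pi^{x_I})(f^{=T}|_{x_I})^2\rangle$ by $\gamma|S||T| \norm{(f^{=S}|_{x_I})^2}_2 \norm{(f^{=T}|_{x_I})^2}_2$ via \Cref{lem:swap}, then convert the $4$-norms of $f^{=S},f^{=T}$ that appear back into $\ell_2$-quantities using the second (global) bound of \Cref{lemma:Efron-contract} — and then average the resulting expression over $x_I$, which is where the $\max_{x_S}\norm{f|_{x_S}}_2$ factors legitimately arise. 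Tracking that the total accumulated error is at most $q^{O(qi)}2^{O(dq)}\gamma\norm{f}_2^2\max\{\norm{f|_{x_S}}_2^{q-2}\}$, i.e. verifying the stated $c_{d,q}$, is the bulk of the real work; the choice $\gamma\le 2^{-\Omega(qd)}$ is precisely what makes this dominated by the main term. I would also double-check the base constant $500q$: it is the product of the $\sqrt{q-1}$ from Bonami, the $O(1)$ per swap-walk step raised to the $\sim qi$ total steps, the $4^{1/2}$ from the $4^{|S|}$ factors, and the constants hidden in the symmetrization theorem, all of which are absorbed into a clean $(500q)^{qi}$.
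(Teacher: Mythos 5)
Your proposal follows essentially the same route as the paper's proof: symmetrize via \Cref{cor:sym-products}, replace $T_rT_2f^{\leq i}$ by $\sum_{|S|\leq i}2^{|S|}r_Sf^{=S}$ with an $\ell_2$-type error (the paper's \Cref{lem:cutoff-sym}, using \Cref{lem:apx-eigen} and \Cref{lemma:apx-closed}), apply the boolean Bonami lemma and Parseval in the $r$-variable, and then iteratively peel off one factor of the resulting $(\sum f^{=S}(x)^2)^{q/2}$ at a time by conditioning on the intersection and de-correlating with swap walks in the link (the paper's \Cref{claim:peel} and \Cref{claim:de-cor-hyp}), with errors tracked in conditional $2$-norms via \Cref{lemma:Efron-contract} and finishing with approximate Parseval. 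The error-bookkeeping and the restriction identity (\Cref{lem:Efron-Restrict}) needed to convert $\sum_{S\supset I}\mathbb{E}[f^{=S}|_{x_I}^2]$ into $\max\norm{f|_{x_J}}_2^2$ are exactly the details the paper works out, so your plan is correct and matches the paper's argument.
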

Note this implies \Cref{thm:Bonami-intro} by replacing $\norm{f^{\leq i}}_2^2$ with $(1+2^{O(d)}\gamma)\norm{f}_2^2$ via approximate Parseval, and observing the error term is everywhere dominated by the main term for small enough $\gamma$.\footnote{To be fully formal, one should note the above can be proved with a constant slightly less than $500$, the error term then increases the constant to $500$ as stated in intro. Note we have made no attempt to optimize the constant.}

Before proving the result, we prove the two corollaries discussed in the introduction. Recall we call a function $f$ $r$-global if 
\[
\forall S \subset [d], x_S \in \text{Supp}(X[S]): \norm{f|_{x_S}}_2 \leq r^{|S|}\norm{f}_2.
\]
Our first corollary is hypercontractivity of the noise operator for global functions. 
\begin{corollary}\label{cor:operator-form}
        Let $q>2$ be an even integer, $r \geq 1$, and $X$ be a $d$-partite $\gamma$-product satisfying $\gamma \leq 2^{-\tilde{\Omega}(qd)}$. Then for any $\rho \leq O(\frac{1}{rq})$ and any $r$-global function $f:X(d) \to \R$
    \[
    \norm{T_\rho f}_q \leq (1+c_{d,q}(\gamma))\norm{f}_2,
    \]
    where $c_{d,q}(\gamma) \leq \tilde{O}(\frac{qd}{\log(\frac{1}{\gamma})})$ goes to $0$ as $\gamma \to 0$
\end{corollary}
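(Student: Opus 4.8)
The plan is to deduce \Cref{cor:operator-form} from the Bonami-type inequality \Cref{thm:Bonami} by a degree decomposition of the noise operator, using that an $r$-global function has every low-dimensional restriction under control. By \Cref{claim:Efron-Noise} we may write $T_\rho f = \sum_{j=0}^{d}\rho^{j}f^{=j}$, so the triangle inequality gives $\norm{T_\rho f}_q \le |\mathbb{E}[f]| + \sum_{j=1}^{d}\rho^{j}\norm{f^{=j}}_q$. I isolate the degree-$0$ term because $|\mathbb{E}[f]| = \norm{f^{=\emptyset}}_2 \le \norm f_2$ already supplies the ``main term'': the whole game is to show the remaining sum is lower order, and that its size is governed by the \emph{high-degree mass} $\norm f_2^2-\mathbb{E}[f]^2$ rather than by $\norm f_2^2$.

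For $j\ge 1$ I would bound $\norm{f^{=j}}_q$ by applying \Cref{thm:Bonami} to $f^{=j}$ itself. Since $(f^{=j})^{\le j}$ agrees with $f^{=j}$ up to a $2^{O(d)}\gamma$-error in $q$-norm by \Cref{lemma:apx-closed}, taking $q$-th roots in \Cref{thm:Bonami} yields $\norm{f^{=j}}_q \lesssim (500q)^{j}\norm{f^{=j}}_2^{2/q}\max_{|S|\le j,\,x_S}\{\norm{f^{=j}|_{x_S}}_2^{1-2/q}\}$ plus an error. Now \Cref{lem:Efron-Restrict} writes $f^{=j}|_{x_S}$ as a signed sum of terms $(f|_{x_J})^{=\cdot}$ with $|J|\le j$, and combined with \Cref{lemma:Efron-contract} and $r$-globality this bounds each such restriction by $(Cr)^{j}\norm f_2$; hence $\norm{f^{=j}}_q \lesssim (C'qr)^{j}\norm{f^{=j}}_2^{2/q}\norm f_2^{1-2/q}$ up to error, for absolute constants $C,C'$. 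Substituting back, applying H\"older's inequality with exponents $\tfrac{q}{q-1}$ and $q$ to $\sum_{j\ge1}(C'qr\rho)^{j}\norm{f^{=j}}_2^{2/q}$, and using approximate Parseval (\Cref{thm:apx-Efron-Stein}) to get $\sum_{j\ge1}\norm{f^{=j}}_2^2 \le (1+2^{O(d)}\gamma)(\norm f_2^2-\mathbb{E}[f]^2)$, the tail is at most a geometric factor in $C'qr\rho$ times $(\norm f_2^2-\mathbb{E}[f]^2)^{1/q}\norm f_2^{1-2/q}$, plus error. The hypothesis $\rho\le O(\tfrac1{rq})$ forces $C'qr\rho\le\tfrac12$, so this factor is a small absolute constant, and combining with $|\mathbb{E}[f]|=(\norm f_2^2-\mathbb{E}[f]^2)^{1/2}$-slack via $\norm f_2-|\mathbb{E}[f]|\gtrsim(\norm f_2^2-\mathbb{E}[f]^2)/\norm f_2$ gives $\norm{T_\rho f}_q\le(1+o_\gamma(1))\norm f_2$.

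The first delicate point is sharpening the leading constant: the crude tail bound only gives $(1+O(1))\norm f_2$, and the passage to $1+c_{d,q}(\gamma)$ with $c_{d,q}(\gamma)\to0$ requires the balancing of tail against degree-$0$ slack just described (which is what pins down the absolute constant hidden in $\rho\le O(\tfrac1{rq})$). An alternative, arguably cleaner, route that makes the sharp constant transparent and which I would run in parallel as a sanity check: apply \Cref{thm:symmetrization} to $g=T_\rho f$, replace $\widetilde{T_2 g}$ by $(s,x)\mapsto\sum_S s_S(2\rho)^{|S|}f^{=S}(x)$ using \Cref{lem:apx-eigen} (in its $\ell_2$-error form) and \Cref{lemma:apx-closed}, invoke \emph{sharp} hypercontractivity on the boolean cube to pass from the $s$-average to the weighted $\ell_2$ quantity $\bigl(\sum_S (4(q-1)\rho^2)^{|S|}f^{=S}(x)^2\bigr)^{q/2}$, and bound its $x$-expectation by the O'Donnell--Zhao intersection argument exactly as in the proof of \Cref{thm:Bonami} --- but now with the noise weights $(4(q-1)\rho^2)^{|S|}$ playing the role of the degree cutoff. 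Here the contribution of $S_1=\dots=S_{q/2}=\emptyset$ is precisely $\mathbb{E}[f]^q\le\norm f_2^q$, while every other term carries a factor $O((q-1)\rho^2)=O(1/(r^2q))$ (using the $q$-norm swap-walk expansion \Cref{lem:swap} to de-correlate across shared coordinates and $r$-globality with \Cref{lem:Efron-Restrict}) and is therefore negligible.

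The second, and I expect the main, obstacle is the error accounting. Each application of \Cref{thm:Bonami} contributes an error of order $q^{O(qd)}2^{O(qd)}\gamma$ times conditional-norm factors, and further $2^{O(d)}\gamma$-errors enter from \Cref{lemma:apx-closed}, from approximate Parseval, and---on the symmetrization route---from \Cref{thm:symmetrization} and \Cref{lem:apx-eigen}; these accumulate over $j\le d$ and then pass through a $q$-th root. Showing the total is $o(1)$ is exactly why one needs $\gamma\le 2^{-\tilde\Omega(qd)}$: this makes $q^{O(d)}2^{O(d)}\gamma^{1/q}=2^{-\tilde\Omega(d)}$, and carrying the constants through the $q$-th root and comparing to $\norm f_2$ yields the stated $c_{d,q}(\gamma)\le\tilde O\!\bigl(\tfrac{qd}{\log(1/\gamma)}\bigr)$, which indeed vanishes as $\gamma\to0$ for fixed $d,q$.
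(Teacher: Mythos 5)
Your first step is sound and matches the paper's opening move: decomposing $T_\rho f=\sum_j\rho^j f^{=j}$, applying \Cref{thm:Bonami} degree-by-degree, and using $r$-globality (via \Cref{lem:Efron-Restrict} and \Cref{lemma:Efron-contract}) so that $\rho\le O(\frac{1}{rq})$ makes the series geometric. But this only yields $\norm{T_\rho f}_q\le C\norm{f}_2$ for an absolute constant $C>1$, and the mechanism you propose for upgrading $C$ to $1+c_{d,q}(\gamma)$ does not work. Concretely, normalize $\norm f_2=1$ and let $W=\sum_{j\ge1}\norm{f^{=j}}_2^2\approx 1-\mathbb{E}[f]^2$. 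Your bound has the shape $\sqrt{1-W}+C''W^{1/q}$ plus $\gamma$-errors, and you want to absorb the tail into the slack $\norm f_2-|\mathbb{E}[f]|\approx W/2$. Since the tail scales like $W^{1/q}$ while the slack scales like $W$, the quantity $\sup_{W\in(0,1]}\bigl(C''W^{1/q}-W/2\bigr)$ is a positive absolute constant, independent of $\gamma$ (take $f=1+\sqrt{\eps}\,g$ with $g$ a global, mean-zero degree-$1$ function and $\eps$ small to realize it). So the balancing gives $1+\Omega(1)$, not $1+o_\gamma(1)$. Your alternative symmetrization route has the same defect: after sharp boolean hypercontractivity the non-empty patterns each carry a factor $O((q-1)\rho^2 r^2 q^{O(1)})$, which is a fixed constant determined by the constant hidden in $\rho\le O(\frac1{rq})$ and does not shrink as $\gamma\to0$; you again land at $1+\Omega(1)$. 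The $\gamma$-error bookkeeping you worry about in your last paragraph is not where the difficulty lies --- the excess over $1$ in the main term is $\gamma$-independent in both of your routes.

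The missing idea, and the paper's actual second step, is a tensor-power (amplification) trick: first prove the crude bound $\norm{T_\rho f}_q\le 2\norm f_2$ exactly as you did, then apply it to the powered function $f^{\oplus t}(x_1,\dots,x_t)=\prod_i f(x_i)$ on the $dt$-dimensional product complex $X^t$, which is still a $\gamma$-product, preserves $r$-globality, and satisfies $\norm{T_\rho f^{\oplus t}}_q=\norm{T_\rho f}_q^t$ and $\norm{f^{\oplus t}}_2=\norm f_2^t$. The crude argument applies on $X^t$ as long as $\gamma\le 2^{-\Omega(q\log(q)dt)}$, so one may take $t=\Theta\bigl(\frac{\log(1/\gamma)}{q\log(q)d}\bigr)$ and conclude $\norm{T_\rho f}_q\le 2^{1/t}\norm f_2=(1+O(1/t))\norm f_2$, which is exactly the source of the stated $c_{d,q}(\gamma)\le\tilde O\bigl(\frac{qd}{\log(1/\gamma)}\bigr)$. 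Without this (or some genuinely different mechanism that makes the excess vanish with $\gamma$), your argument proves only the constant-factor version of the corollary.
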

\begin{proof}
    We use a variant of the standard proof of hypercontractivity from the Bonami lemma (c.f.\ \cite[Exercise 9.6]{o2014analysis}) adapted for global functions and lack of independence.

    First, observe that applying \Cref{lem:apx-eigen} and  \Cref{thm:Bonami} directly we have the weaker bound
    \begin{align*}
    \norm{T_\rho f}_q &\leq \sum\limits_{i=0}^d\rho^{i} \norm{f^{=i}}_q + 2^{O(d)}\gamma\norm{f}_2\\
    &\leq \sum\limits_{i=0}^d\rho^{i}q^{O(i)} \norm{f}_2 + 2^{O(d)}\gamma\norm{f}_2\\
    &\leq 2\norm{f}_2
    \end{align*}
whenever $\gamma \leq 2^{-\Omega(q\log(q)d)}$.\footnote{Note we have applied \Cref{thm:Bonami} to $\norm{f^{=j}}_q$ instead of $\norm{f^{\leq j}}_q$ and used the form of the bound from \Cref{thm:Bonami-intro} --- it is easily seen the statement still holds for this restricted case, or one can simply use $q$-approximate Efron Stein to move to $\norm{f^{\leq j}}_q$ first.}

To improve the constant from $2$ to $(1+o_\gamma(1))$, we will actually apply the above argument instead to the powered complex $X^{t}$ containing all faces
\[
(x_1,\ldots,x_t): x_i \in X(d)
\]
endowed with the corresponding product measure $\prod_{i=1}^t \Pi(x_i)$. Note that this is a complex of dimension $dt$. Similarly given $f: X(d) \to \R$, we define the corresponding powered function on $X^t$ as
\[
f^{\oplus t}(x_1,\ldots,x_t) = \prod\limits_{i=1}^t f(x_i).
\]
Because the $x_i$'s are independent, the standard noise operator on $X^t$ acts independently on each part. As a result it is easy to check that for any $\rho \in [0,1]$, $f: X(d) \to \R$, and $p \geq 1$:
\[
\norm{T_\rho f^{\oplus t}}_p = \norm{T_\rho f}_p^t,
\]
and moreover that if $f$ is $r$-global then $f^{\oplus t}$ is $r$-global as well. This means that we can apply the above argument for any $t$ satisfying $\gamma \leq 2^{-\Omega(q\log(q)dt)}$ to get
\[
\norm{T_\rho f}_q = \norm{T_\rho f^{\oplus t}}^{1/t}_q \leq 2^{1/t}\norm{f^{\oplus t}}_2^{1/t} \leq (1+O(1/t))\norm{f}_2.
\]
Finally setting $t = O(\frac{\log(\frac{1}{\gamma})}{q\log(q)d})$ gives the claimed result.
\end{proof}
Our second corollary is an optimal characterization of low influence functions on HDX.
\begin{corollary}\label{cor:KKL}
    Let $X$ be a $d$-partite $\gamma$-product with $\gamma \leq 2^{-\Omega(d)}$ and $f:X \to \mathbb{F}_2$ any function with influence $\mathbf{I}[f] \leq K\text{Var}(f)$. There exists $S \subset [d]$ with $|S| \leq O(K)$ and $x_S \in X[S]$ such that
    \[
    \mathbb{E}[f|_{x_S}] \geq 2^{-O(K)}.
    \]
\end{corollary}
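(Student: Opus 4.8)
The plan is to derive this as a fairly direct consequence of the Bonami-form inequality \Cref{thm:Bonami} applied with $q=4$ (so we only need $\gamma \leq 2^{-\Omega(d)}$), combined with the standard KKL argument for the cube adapted to the global setting. First I would normalize so that $\var(f)$ is a constant, hence $\norm{f-\E[f]}_2^2 = \Theta(1)$, and I may as well assume $\E[f] \le 1/2$ so $\E[f|_{x_S}] \le \norm{f|_{x_S}}_2$ and the conclusion is just a lower bound on the worst conditional $2$-norm of $f$ over low-dimensional links. Write $g = f - \E[f]$, which is $\mathbb{F}_2$-valued up to the additive shift and has $\norm{g}_2^2 = \var(f) = \Theta(1)$ and $\mathbf{I}[g] = \mathbf{I}[f] \leq K$. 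The key structural step is the dichotomy argument: either the low-degree part $g^{\leq i}$ (for $i$ a small multiple of $K$) already carries most of the $2$-mass of $g$, or it does not.

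For the main case, suppose $\norm{g^{\leq i}}_2^2 \geq \tfrac{1}{2}\norm{g}_2^2$ for $i = C K$ with $C$ a suitable constant. Apply \Cref{thm:Bonami} with $q=4$ to $g^{\leq i}$: this gives $\norm{g^{\leq i}}_4^4 \leq (2000)^{4i}\norm{g^{\leq i}}_2^2 \max_{|S| \leq i, x_S}\{\norm{g|_{x_S}}_2^2\} + (\text{error})$, where the error term is $c_{d,4}\gamma \norm{g}_2^2 \max_{|S|\le i,x_S}\{\norm{g|_{x_S}}_2^2\}$ and is dominated by the main term once $\gamma \leq 2^{-\Omega(d)}$ (absorbing the $2^{O(d)}$ and $q^{O(qi)} = 2^{O(K)}$ factors, using $i \leq d$). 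On the other hand, since $g$ is (shifted) boolean, $\norm{g^{\leq i}}_4^4 \geq \norm{g^{\leq i}}_2^4 \cdot (\text{something})$ — more precisely I would lower-bound $\norm{g^{\leq i}}_4$ by relating it to $\norm{g^{\leq i}}_2$ using that truncating a boolean function to low degree cannot make the $4$-norm collapse relative to the $2$-norm; concretely $\norm{g^{\leq i}}_4 \geq \norm{g^{\leq i}}_2$ trivially is the wrong direction, so instead one uses $\norm{g^{\leq i}}_2^2 = \langle g^{\leq i}, g\rangle \leq \norm{g^{\leq i}}_{4/3}\norm{g}_4 \leq \norm{g^{\leq i}}_{4/3}$ (as $g$ is bounded), and then interpolates $\norm{g^{\leq i}}_{4/3}$ between the $1$-norm and $4$-norm. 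Combining the upper and lower bounds on the $4$-norm and rearranging yields $\max_{|S| \leq i, x_S}\{\norm{g|_{x_S}}_2^2\} \geq (2000)^{-4i} \cdot \poly(\norm{g}_2) = 2^{-O(K)}$, which after undoing the shift gives a link of co-dimension $\leq i = O(K)$ with $\E[f|_{x_S}]$ (or $1 - \E[f|_{x_S}]$, handled symmetrically) at least $2^{-O(K)}$.

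For the complementary case, suppose $\norm{g^{\leq i}}_2^2 < \tfrac12\norm{g}_2^2$ with $i = CK$. Then $\norm{g^{>i}}_2^2 \geq \tfrac12\norm{g}_2^2 = \Theta(1)$, so by \Cref{claim:total-inf} (with approximate Parseval, \Cref{thm:apx-Efron-Stein}, to account for the $2^{O(d)}\gamma$ slack) we get $\mathbf{I}[g] = \sum_j j\langle g, g^{=j}\rangle \geq i \cdot \norm{g^{>i}}_2^2 - 2^{O(d)}\gamma\norm{g}_2^2 \geq \tfrac{i}{2}\cdot\Theta(1) - o(1) > K$ once $C$ is chosen large enough and $\gamma \leq 2^{-\Omega(d)}$, contradicting $\mathbf{I}[f] \leq K$. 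Hence this case is vacuous and we are always in the main case.

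The main obstacle I expect is the relation between $\norm{g^{\leq i}}_4$ and $\norm{g^{\leq i}}_2$ for the (shifted) boolean function — i.e., extracting a genuine lower bound on the conditional $2$-norm from the Bonami inequality, which only upper-bounds the $4$-norm. On the cube this is routine because $g$ boolean forces $\norm{g}_4^4 = \norm{g}_2^2$ and one plays off hypercontractivity against $\langle g^{\leq i}, g \rangle$; the care needed here is that $f^{\leq i}$ is not exactly boolean, and one must track whether the $2^{O(d)}\gamma$ errors throughout (from approximate Parseval, approximate Efron–Stein orthogonality, and the error term of \Cref{thm:Bonami}) remain negligible — which they do precisely under the hypothesis $\gamma \leq 2^{-\Omega(d)}$ since all accumulated multiplicative overhead is $2^{O(K)}2^{O(d)} = 2^{O(d)}$. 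A secondary subtlety is handling the two ``sides'' of an $\mathbb{F}_2$-valued function symmetrically: the conclusion $\E[f|_{x_S}] \geq 2^{-O(K)}$ should be read as asserting $f$ is non-constant-ish on some link, and one applies the argument to whichever of $f$ or $1-f$ has the smaller expectation.
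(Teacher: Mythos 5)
There is a genuine gap, and it lies right at the start: you ``normalize so that $\var(f)$ is a constant.'' You cannot normalize an $\mathbb{F}_2$-valued function (rescaling destroys the Booleanity your argument later relies on), and the formal statement of \Cref{cor:KKL} does not assume constant variance --- it assumes the \emph{relative} bound $\mathbf{I}[f] \leq K\var(f)$ precisely so that it applies to sparse $f$. That sparse regime is the entire content of the corollary: if $\var(f) = \Theta(1)$ then $\E[f] \geq \var(f) = \Theta(1)$ and the conclusion holds trivially with $S = \emptyset$ (or any single coordinate by averaging). So your proof, as written, establishes only a vacuous special case, and the interesting case $\E[f] \ll 1$ is exactly where it breaks.

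Concretely, in the sparse regime your route does not produce a density-free bound. Your H\"{o}lder step puts the $4/3$-norm on the low-degree part and the $4$-norm on the Boolean function, then interpolates $\norm{g^{\leq i}}_{4/3} \leq \norm{g^{\leq i}}_1^{2/3}\norm{g^{\leq i}}_4^{1/3}$ and applies \Cref{thm:Bonami} to the $1/3$-power factor. Carrying this through with $\E[f] = \alpha$ small (and using the influence hypothesis, which gives $\langle f, f^{\leq O(K)}\rangle \gtrsim \alpha/K$), the powers of $\alpha$ do not cancel: one ends up with $\max_{|S| \leq O(K),x_S}\E[f|_{x_S}] \gtrsim 2^{-O(K)}\alpha^{\Theta(1)}$, which is useless when $\alpha$ is small and in particular never recovers $\E[f|_{x_S}] \geq 2^{-O(K)}$. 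The paper's proof avoids this by applying H\"{o}lder the other way: $\langle f, f^{\leq i}\rangle \leq \norm{f}_{4/3}\norm{f^{\leq i}}_4$, so that Booleanity converts $\norm{f}_{4/3}\cdot\norm{f}_2^{1/2}$ (the latter coming from the $q=4$ Bonami bound on $\norm{f^{\leq i}}_4$) into exactly $\E[f]^{3/4}\cdot\E[f]^{1/4} = \E[f]$; this matches the linear-in-$\E[f]$ lower bound $\langle f, f^{\leq K}\rangle \gtrsim \E[f]/K$ coming from $\mathbf{I}[f] \leq K\var(f)$, the density cancels, and one gets the absolute bound $\max\E[f|_{x_S}]^{1/4} \geq 2^{-O(K)}$. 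Your dichotomy (``either the low-degree mass is large or the influence is too big'') is morally the same influence step the paper uses, and your error bookkeeping with $\gamma \leq 2^{-\Omega(d)}$ is fine; the missing idea is this level-$i$ inequality with the hypercontractive bound placed on the side of H\"{o}lder where Booleanity makes both sides scale linearly in the density.
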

\begin{proof}
    The proof follows from a standard `level-$i$ inequality' implied by global hypercontractivity. Namely for any boolean function $f$, we claim the degree at most $i$ Fourier mass is bounded by
    \[
    \langle f, f^{\leq i} \rangle \leq 2^{O(i)}\mathbb{E}[f]\max_{|S| \leq i, x_S}\{\mathbb{E}[f|_{x_S}]^{1/4}\}
    \]
    This follows from a basic application of H\"{o}lder's inequality:
    \begin{align*}
        \langle f, f^{\leq i} \rangle &\leq \norm{f}_{4/3}\norm{f^{\leq i}}_4 & \text{(H\"{o}lder)}\\
        &\leq 2^{O(i)}\norm{f}_{4/3}\norm{f}^{1/2}_2\max_{|S| \leq i, x_S}\{\norm{f|_{x_S}}_2^{1/2}\} & \text{(\Cref{thm:Bonami}, $q=4$)}\\
        &=  2^{O(i)}\mathbb{E}[f]\max_{|S| \leq i, x_S}\{\mathbb{E}[f|_{x_S}]^{1/4}\} & \text{(Booleanity)}
    \end{align*}
    On the other hand, recall the influence of $f$ can be written as $\mathbf{I}[f] = \sum\limits_{i=0}^d i\langle f,f^{=i}\rangle$, so in particular
    \[
    K\text{Var}(f) \geq \mathbf{I}[f] \geq (K+1)\left(\text{Var}(f) - \sum\limits_{i=1}^K\langle f, f^{=i}\rangle\right) - 2^{O(d)}\gamma\norm{f}_2^2.
    \]
    Re-arranging and using Booleanity, for small enough $\gamma$ we then have 
    \[
    \langle f, f^{\leq K}\rangle \geq \frac{1}{2}\mathbb{E}[f],
    \]
    which combined with the level-$i$ inequality implies the claimed bound.
\end{proof}
\subsection{Hypercontractivity: The Proof!}
We now turn to the proof of \Cref{thm:Bonami}.
\begin{proof}[Proof of \Cref{thm:Bonami}]
As in the proof overview, our first step is to apply the symmetrization theorem for $\gamma$-products (\Cref{cor:sym-products}) to get
\[
\norm{f^{\leq i}}^q_q \leq 2^q\norm{T_rT_2f^{\leq i}}_{q}^q.
\]
We'd like to now decompose $T_rT_2f^{\leq i}$ in terms of $f$'s Efron-Stein components. We claim this can be done up to the following loss:
\begin{claim}\label{lem:cutoff-sym}
    \[
\norm{T_rT_2f^{\leq i} - \sum\limits_{|S| \leq i}2^{|S|}f^{=S}r_S}_q \leq 2^{O(d)}\gamma\norm{f}_2^{2/q}\max_{|S| \leq i, x_S}\{\norm{f|_{x_S}}_2^{1-\frac{2}{q}}\}
\]
\end{claim}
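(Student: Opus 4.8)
\textbf{Proof proposal for \Cref{lem:cutoff-sym}.} The plan is to peel off the two noise operators one at a time, after first breaking $f^{\le i}$ into its Efron--Stein pieces so that every estimate is applied to a single component $f^{=T}$. Working with the pieces rather than with the cutoff $f^{\le i}$ as a whole is what keeps the argument clean: since Efron--Stein is not exactly closed on $(q,\gamma)$-products, $(f^{\le i})^{=S}$ is only \emph{approximately} $f^{=S}$, and handling $f^{\le i}$ directly would force us to juggle the non-closure and the degree cutoff simultaneously. Concretely, $f^{\le i}=\sum_{|T|\le i}f^{=T}$ together with linearity of $T_r$ and $T_2$ gives
\[
T_rT_2f^{\le i}-\sum_{|S|\le i}2^{|S|}r_Sf^{=S}=\sum_{|T|\le i}\Bigl(T_rT_2f^{=T}-2^{|T|}r_Tf^{=T}\Bigr),
\]
so by the triangle inequality it is enough to bound each summand by $2^{O(d)}\gamma\norm{f}_2^{2/q}\max_{|S|\le i,x_S}\{\norm{f|_{x_S}}_2^{1-2/q}\}$ and absorb the at most $2^d$ summands into the $2^{O(d)}$ factor. (The bound is uniform in $r\in\{\pm1\}^d$, so it also survives averaging over $r$.)

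For a fixed $T$ with $|T|\le i$ I would apply the approximate eigenbasis lemma, \Cref{lem:apx-eigen}, twice. Both $T_2=\sum_S 2^{|S|}(-1)^{d-|S|}E_S$ and $T_r=\sum_S\bigl(\prod_{i\in S}r_i\bigr)\bigl(\prod_{i\notin S}(1-r_i)\bigr)E_S$ are of the form $M=\sum_S\alpha_SE_S$ with $|\alpha|_1\le 3^d=2^{O(d)}$ (using $r\in\{\pm1\}^d$ and $|1-r_i|\le 2$), and the short computation appearing in \Cref{claim:Efron-Noise} identifies the $f^{=T}$-eigenvalue $\lambda_T=\sum_{S\supseteq T}\alpha_S$ as $2^{|T|}$ for $T_2$ and $r_T$ for $T_r$. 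Hence \Cref{lem:apx-eigen}, in its $\ell_2$ form (valid since $X$ is a $\gamma$-product), yields $\norm{T_2f^{=T}-2^{|T|}f^{=T}}_q\le 2^{O(d)}\gamma B_T$ and $\norm{T_rf^{=T}-r_Tf^{=T}}_q\le 2^{O(d)}\gamma B_T$, where $B_T:=\norm{f}_2^{2/q}\max_{T'\subseteq T,x_{T'}}\{\norm{f|_{x_{T'}}}_2^{1-2/q}\}$. Writing $T_2f^{=T}=2^{|T|}f^{=T}+\epsilon$ with $\norm{\epsilon}_q\le 2^{O(d)}\gamma B_T$ and applying $T_r$, we get $T_rT_2f^{=T}-2^{|T|}r_Tf^{=T}=2^{|T|}\bigl(T_rf^{=T}-r_Tf^{=T}\bigr)+T_r\epsilon$; the first term has $q$-norm $\le 2^{|T|}\cdot 2^{O(d)}\gamma B_T=2^{O(d)}\gamma B_T$, and since $T_r$ is a signed combination of averaging operators it contracts $q$-norms up to $|\alpha|_1\le 2^{O(d)}$, so the second term also has $q$-norm $\le 2^{O(d)}\gamma B_T$. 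Finally $B_T\le \norm{f}_2^{2/q}\max_{|S|\le i,x_S}\{\norm{f|_{x_S}}_2^{1-2/q}\}$ since $|T|\le i$, which gives the per-$T$ bound, hence the claim.

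Everything other than the arithmetic for $\lambda_T$ and the $2^{O(d)}$-bookkeeping (number of faces, $|\alpha|_1$, the $2^{|T|}$ weights, the $q\to q$ norm of $T_r$) is routine, and all of it is harmless under $\gamma\le 2^{-\Omega(qd)}$. The one point I would flag is the precise $\gamma$-dependence: \Cref{lem:apx-eigen} is used in its $\ell_2$ form, whose proof routes through $q$-norm swap-walk expansion (\Cref{lem:swap}, \Cref{claim:intersect-norm}), so one should either invoke it via the $(q,O(\gamma^{2/q}))$-product structure guaranteed by \Cref{lem:2-to-p} or check directly that the error is genuinely $2^{O(d)}\gamma(\cdots)$. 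In the former case the error reads $2^{O(d)}\gamma^{2/q}(\cdots)$, which is still $\le 2^{O(qd)}\gamma(\cdots)$ since $\gamma\le 2^{-\Omega(qd)}$ and is therefore absorbed into the error term $c_{d,q}\gamma$ of \Cref{thm:Bonami}. I expect this normalization check, rather than any conceptual step, to be the only real obstacle.
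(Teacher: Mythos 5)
Your proposal is correct and follows essentially the same route as the paper: the paper likewise reduces, by linearity and the triangle inequality, to the per-component bound $\norm{T_rT_2f^{=S}-2^{|S|}r_Sf^{=S}}_q\le 2^{O(d)}\gamma\norm{f}_2^{2/q}\max_{|S|\le i,x_S}\{\norm{f|_{x_S}}_2^{1-\frac{2}{q}}\}$, getting $T_2f^{=S}\approx 2^{|S|}f^{=S}$ from \Cref{claim:Efron-Noise} together with \Cref{lemma:apx-closed} and $T_rf^{=S}\approx r_Sf^{=S}$ from \Cref{lemma:p-to-q-down}, which is precisely the content of your two invocations of \Cref{lem:apx-eigen}. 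One small caution on your closing remark: the absorption $2^{O(d)}\gamma^{2/q}\le 2^{O(qd)}\gamma$ would require a \emph{lower} bound on $\gamma$ (the hypothesis $\gamma\le 2^{-\Omega(qd)}$ gives only an upper bound), but since your main argument invokes the $\ell_2$ form of \Cref{lem:apx-eigen} exactly as stated—matching the paper's own bookkeeping—this aside does not affect the proof.
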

The proof is immediate from \Cref{lem:apx-eigen} and \Cref{lemma:apx-closed}, and is deferred to the next subsection.

Using the claim and H\"{o}lder's inequality, we can now bound $\norm{f^{\leq i}}^q_q $ by
\begin{align*}
2^q\norm{T_rT_2f^{\leq i}}_{q}^q &=2^q\mathbb{E}\left[\left(\sum\limits_{|S| \leq i} 2^{|S|}r_Sf^{=S} + \left(T_rT_2f^{\leq i} - \sum\limits_{|S| \leq i} 2^{|S|}r_Sf^{=S}\right)\right)^q\right]\\
&\leq 4^q\mathbb{E}\left[\left(\sum\limits_{|S| \leq i} 2^{|S|}r_Sf^{=S}\right)^q\right] + 4^q\mathbb{E}\left[\left(T_rT_2f^{\leq i} - \sum\limits_{|S| \leq i} 2^{|S|}r_Sf^{=S}\right)^q\right]\\
&\leq 4^q\underset{x}{\mathbb{E} }\left[\underset{r}{\mathbb{E}}\left[\left(\sum\limits_{|S| \leq i}2^{|S|}f^{=S}(x)r_S\right)^q\right]\right] + 2^{O(d+q)}\gamma \norm{f}_2^2\max_{|S| \leq i, x_S}\{\norm{f|_{x_S}}_2^{q-2}\}.
\end{align*}
Notice that the inner expectation is now over a degree-$i$ boolean function with Fourier coefficients $2^{|S|}f^{=S}(x)$. Applying standard $(2{\to}q)$-hypercontractivity then gives:
\begin{align*}
    4^q\underset{x}{\mathbb{E} }\left[\underset{r}{\mathbb{E}}\left[\left(\sum\limits_{|S| \leq i}2^{|S|}f^{=S}(x)r_S\right)^q\right]\right]
    &\leq (8q)^{\frac{qi}{2}}\underset{x}{\mathbb{E}}\left[\underset{r}{\mathbb{E}}\left[\left(\sum\limits_{|S| \leq i}2^{|S|}f^{=S}(x)r_S\right)^2\right]^{q/2}\right] & \text{(Bonami Lemma)}\\
    &= (8q)^{\frac{qi}{2}}\underset{x}{\mathbb{E} }\left[\left(\sum\limits_{|S| \leq i} 2^{2|S|}f^{=S}(x)^2\right)^{q/2}\right] & \text{(Parseval)}\\
    &\leq (32q)^{\frac{qi}{2}}\underset{x}{\mathbb{E} }\left[\left(\sum\limits_{|S| \leq i} f^{=S}(x)^2\right)^{q/2}\right].
\end{align*}
We now turn our attention to the expectation:
\[
\underset{x}{\mathbb{E} }\left[\left(\sum\limits_{|S| \leq i} f^{=S}(x)^2\right)^{q/2}\right] = \underset{x}{\mathbb{E} }\left[\sum\limits_{|S_1|,\ldots,|S_{\frac{q}{2}}| \leq i} \prod\limits_{j=1}^{q/2}f^{=S_i}(x)^2\right]
\]
Our goal is eventually to reduce the above to a sum over only a single of the $q/2$ components, which we can bound by the two-norm. We will do this by iteratively `peeling off' the last $S_j$ incurring a cost in the conditional 2-norm and $q^i$ in each step. Formally, we claim 
\begin{claim}\label{claim:peel}
For any $2 \leq t \leq \frac{q}{2}$:
\begin{align*}
\underset{x}{\mathbb{E} }\left[\sum\limits_{|S_1|,\ldots,|S_{t}| \leq i} \prod\limits_{j=1}^{t}f^{=S_i}(x)^2\right] \leq& (10et)^{i}\max_{|S| \leq i,x_S}\left\{\norm{f|_{x_S}}_2^2\right\}\underset{x}{\mathbb{E} }\left[\sum\limits_{|S_1|,\ldots,|S_{t-1}| \leq i} \prod\limits_{j=1}^{t-1}f^{=S_i}(x)^2\right]\\
&+ 2^{O(dt)}\gamma\norm{f}_2^2\max_{|S| \leq i,x_S}\left\{\norm{f|_{x_S}}_2^{2t-2}\right\}
\end{align*}
\end{claim}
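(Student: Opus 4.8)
The strategy is to isolate the $t$-th factor $f^{=S_t}(x)^2$ inside the expectation, re-index over the intersection pattern $I = S_t \cap (S_1 \cup \dots \cup S_{t-1})$, and use $q$-norm expansion of the swap walks (\Cref{lem:swap}) to de-correlate the remaining free coordinates of $S_t$ from those of the other $S_j$'s, exactly as sketched in the technical overview for the $(2{\to}4)$ case. Concretely, I would first fix $S_1,\dots,S_{t-1}$ and write $\underset{x}{\mathbb{E}}[f^{=S_t}(x)^2 \prod_{j<t} f^{=S_j}(x)^2]$ as an inner product in the link of $x_I$: conditioning on $x_I$, the variables $x_{S_t \setminus I}$ are drawn from a marginal that, relative to the coordinates touched by $\bigcup_{j<t}S_j$, is governed by the swap walk $A^{x_I}_{S_t \setminus I,\, (\bigcup_{j<t}S_j)\setminus I}$. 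Applying \Cref{lem:swap} (which gives $\|A_{S,T}-\Pi_{S,T}\|_q \le |S||T|\gamma$ on a $(q,\gamma)$-product, and $\gamma$-products are $(q,\gamma_q)$-products by \Cref{lem:2-to-p}) lets me replace the conditional expectation of $(f^{=S_t}|_{x_I})^2$ by its stationary value $\underset{x_{S_t\setminus I}\sim X_{x_I}}{\mathbb{E}}[(f^{=S_t}|_{x_I})^2] \le \|f|_{x_I}\|_2^2 \le \max_{|S|\le i,x_S}\{\|f|_{x_S}\|_2^2\}$, up to an error term of the promised shape $2^{O(dt)}\gamma \cdot (\text{norm factors})$. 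Here one must be slightly careful: the function $(f^{=S_t})^2$ is not itself an Efron–Stein component, so to bound its conditional norms I would invoke \Cref{lemma:Efron-contract}'s $\ell_2$-bound (and, where the swap-walk error must be controlled in $\ell_2$ rather than $\ell_q$, push the max of the conditional $2$-norm out of the expectation just as in the proof of \Cref{lemma:p-to-q-down} / \Cref{claim:intersect-norm}).

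Once $x_{S_t\setminus I}$ has been averaged out, the surviving object is $\max_{|S|\le i, x_S}\{\|f|_{x_S}\|_2^2\}$ times a sum over $I \subseteq S_t$ with $|I| \le i$ and over $S_1,\dots,S_{t-1}$ of $\underset{x_I}{\mathbb{E}}[\, \cdots \prod_{j<t}(f^{=S_j}|_{x_I})^2\,]$ — but the $f^{=S_j}$ for $j<t$ only depend on the $S_j$-coordinates, so this is at most the original $(t-1)$-fold sum, times the number of ways $S_t$ could have produced a given intersection. The combinatorial count is where the $(10et)^i$ factor comes from: summing over $|S_t|\le i$ and all subsets $I\subseteq S_t$ that could serve as the intersection, one gets $\sum_{k\le i}\binom{d}{k}$-type bounds that must be tamed. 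This is the standard maneuver in O'Donnell–Zhao and in \cite{gur2021hypercontractivity}: rather than summing $\binom{d}{k}$ one re-groups so that the free indices of $S_t$ beyond $I$ are "charged" against the factor $\binom{i}{|I|} t^{|S_t\setminus I|}$ (each free coordinate of $S_t$ can land in one of the $\le t-1$ other sets, or be genuinely new, but a genuinely new coordinate contributes a factor absorbed by the conditional-norm bound since $f|_{x_S}$ already has $|S|$ up to $i$). Carefully this gives a bound of the form $\sum_{\ell \le i}\binom{i}{\ell}(et)^{\ell}\cdot(\text{something}) \le (10et)^i$, matching the claim; I would keep the constant loose ($10e$ rather than optimized) precisely because the re-indexing is where constants accumulate.

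The main obstacle is \emph{bookkeeping the error terms}, not the conceptual core. There are two intertwined difficulties: (i) $(f^{=S_t})^2$ is a product of near-eigenfunctions, so every application of \Cref{lem:apx-eigen}, \Cref{lemma:apx-closed}, or the swap walk contributes a $2^{O(d)}\gamma$ slack, and these must be collected into a single $2^{O(dt)}\gamma\norm{f}_2^2\max\{\|f|_{x_S}\|_2^{2t-2}\}$ term — which requires tracking that each of the $\le t$ "frozen" factors $f^{=S_j}|_{x_I}$ ($j<t$) can only be bounded in $\infty$-norm by its conditional $2$-norm raised to the appropriate power (contributing the $2t-2$ exponent), with one global $\norm{f}_2^2$ left over; and (ii) to get errors scaling in $\ell_2$ rather than $\ell_q$ one cannot use the crude $\|g\|_q^q \le \|g\|_2^2\|g\|_\infty^{q-2}$ bound uniformly, so I would use the sharper $\ell_2$-variants of \Cref{lemma:Efron-contract}, \Cref{lemma:p-to-q-down}, and \Cref{lem:swap}'s proof technique (pulling out $\max_{x_S}\|f|_{x_S}\|_2^{1-2/q}$ first) at each step. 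I expect the proof to reduce, modulo these routine-but-delicate estimates, to: (1) fix $S_1,\dots,S_{t-1}$; (2) condition on $x_I$ where $I = S_t\cap\bigcup_{j<t}S_j$; (3) de-correlate $x_{S_t\setminus I}$ via \Cref{lem:swap}; (4) bound the de-correlated $S_t$-average by the worst conditional $2$-norm; (5) sum over $I$ and $S_t$ with the combinatorial count; (6) collect all $\gamma$-errors. The deferred subsection presumably does exactly this; I would present it in that order.
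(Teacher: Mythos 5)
Your overall route is the same as the paper's: re-index over the intersection $I=S_t\cap\bigcup_{j<t}S_j$, view the inner conditional expectation as a swap walk in the link of $x_I$, de-correlate using expansion of the swap walks (collecting the $\gamma$-error in $\ell_2$ via repeated Cauchy--Schwarz and \Cref{lemma:Efron-contract}, exactly as in \Cref{claim:de-cor-hyp}), pull out a maximum, and count multiplicities. However, there is one genuine gap in your step (4). You assert
\[
\underset{x_{S_t\setminus I}\sim X_{x_I}}{\mathbb{E}}\bigl[(f^{=S_t}|_{x_I})^2\bigr]\;\leq\;\norm{f|_{x_I}}_2^2,
\]
but this is not valid as stated: $f^{=S_t}|_{x_I}$ is the restriction of a \emph{global} Efron--Stein component, not a component of the Efron--Stein decomposition of $f|_{x_I}$, so Parseval in the link does not apply to it directly. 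The missing ingredient is the restriction identity (\Cref{lem:Efron-Restrict}), which expresses $f^{=I\cup B}(y_I,x_B)$ as a signed sum over $J\subset I$ of $(f|_{y_J})^{=B}(x_B)$; after Cauchy--Schwarz over the $2^{|I|}$ terms and approximate Parseval (\Cref{thm:apx-Efron-Stein}) applied \emph{within each link} $X_{y_J}$, one gets only the aggregate bound $\sum_{S_t\supset I}\mathbb{E}_{x_{S_t\setminus I}}[f^{=S_t}|_{x_I}^2]\leq 5^i\max_{|J|\leq|I|,y_J}\{\norm{f|_{y_J}}_2^2\}$. Note in particular that the bound unavoidably involves restrictions $f|_{y_J}$ for proper subsets $J\subsetneq I$ and carries a $2^{O(i)}$ loss; your term-by-term inequality with $\norm{f|_{x_I}}_2^2$ alone is false in general (it can be off by a factor $4^{|I|}$ and by which restriction appears).

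A secondary, repairable inaccuracy is your accounting of the constant. In the paper the $(10et)^i$ splits as $(2et)^i\cdot 5^i$: the $(2et)^i$ is the multiplicity with which a fixed tuple $(S_1,\ldots,S_{t-1})$ is over-counted across the choices of $I\subseteq\mathcal{S}^{t-1}$ (bounded by $\sum_{j\leq i}\binom{i(t-1)}{j}$), while the sum over $S_t\supset I$ is \emph{not} counted at all --- it is absorbed into the max term via the link-Parseval bound above, which is where the $5^i$ comes from. Your description of charging the free coordinates of $S_t$ against the other sets is the wrong bookkeeping, though with your loose constant budget it would not sink the argument once the restriction-identity step is supplied.
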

We now complete the proof assuming \Cref{claim:peel}. Iteratively applying the claim $\frac{q}{2}-1$ times, we arrive at
\begin{align*}
\norm{f^{\leq i}}_q^q &\leq (500q)^{qi}\max_{|S| \leq i,x_S}\left\{\norm{f|_{x_S}}_2^{q-2}\right\}\underset{x}{\mathbb{E} }\left[\sum\limits_{|S| \leq i} f^{=S}(x)^2\right] + q^{O(qi)}2^{O(dq)}\gamma \norm{f}_2^2\max_{|S| \leq i,x_S}\left\{\norm{f|_{x_S}}_2^{q-2}\right\}\\
&\leq \max_{|S| \leq i,x_S}\left\{\norm{f|_{x_S}}_2^{q-2}\right\}\norm{f^{\leq i}}_2^2 +  q^{O(qi)}2^{O(dq)}\gamma\max_{|S| \leq i,x_S}\left\{\norm{f|_{x_S}}_2^{q-2}\right\}\norm{f}_2^2
\end{align*}
as desired, where the final step is a simple application of \Cref{thm:apx-Efron-Stein}.
\end{proof}


\subsubsection{Auxiliary Proofs}
We now give the proofs of \Cref{lem:cutoff-sym} and \Cref{claim:peel}.
\begin{proof}[Proof of \Cref{lem:cutoff-sym}]

    The proof is immediate from linearity of the noise operator and the triangle inequality if we can show for any $|S| \leq i$:
    \[
    \norm{T_rT_2f^{=S} - 2^{|S|}r_Sf^{=S}}_q \leq 2^{O(d)}\gamma\norm{f}_2^{2/q}\max_{|S| \leq i, x_S}\{\norm{f|_{x_S}}_2^{1-\frac{2}{q}}\}.
    \]
    By \Cref{claim:Efron-Noise} we can write
    \begin{align*}
        T_rT_2f^{=S} &= \sum\limits_{S' \subseteq [d]} 2^{|S|}T_r(f^{=S})^{=S'}.
    \end{align*}
    Since the projection operators $\{E_{T}\}$ contract $p$-norms and $T_\rho$ is a (bounded) linear combination of projections, \Cref{lemma:apx-closed} implies the righthand side is $2^{O(d)}\gamma\norm{f}_2^{2/q}\max_{|S| \leq i, x_S}\{\norm{f|_{x_S}}_2^{1-\frac{2}{q}}\}$-close in $q$-norm to $\sum_S 2^{|S|}T_rf^{=S}$. Expanding out $T_r$, the main term is then
    \[
    2^{|S|}\sum\limits_{T \subseteq [d]}r_T\prod_{i \notin T}(1-r_i) E_Tf^{=S}.
    \]
    Now by \Cref{lemma:p-to-q-down}, $\norm{E_Tf^{=S}}_q \leq 2^{O(d)}\gamma\norm{f}_2^{2/q}\max_{|S| \leq i, x_S}\{\norm{f|_{x_S}}_2^{1-\frac{2}{q}}\}$ unless $S \subset T$, in which case $E_Tf^{=S} = f^{=S}$. Thus the above is similarly close in $q$-norm to
    \begin{align*}
    2^{|S|}\sum\limits_{T \supset S}r_T\prod_{i \notin T}(1-r_i)f^{=S} &= 2^{|S|}r_Sf^{=S}
    \end{align*}
    as desired (where the equality follows as in \Cref{claim:Efron-Noise}).    
\end{proof}
\begin{proof}[Proof of \Cref{claim:peel}] Recall our goal is to show for any $2 \leq t \leq \frac{q}{2}$:
\begin{align*}
\underset{x}{\mathbb{E} }\left[\sum\limits_{|S_1|,\ldots,|S_{t}| \leq i} \prod\limits_{j=1}^{t}f^{=S_i}(x)^2\right] \leq& (10et)^i \max_{|S| \leq i,x_S}\left\{\norm{f|_{x_S}}_2^2\right\}\underset{x}{\mathbb{E} }\left[\sum\limits_{|S_1|,\ldots,|S_{t-1}| \leq i} \prod\limits_{j=1}^{t-1}f^{=S_i}(x)^2\right]\\
&+ 2^{O(d)}\gamma\norm{f}_2^2\max_{|S| \leq i,x_S}\left\{\norm{f|_{x_S}}_2^{2t-2}\right\}
\end{align*}
Our proof strategy follows an extension of the $q=4$ case (the $t=2$ setting of this lemma) proved for products in \cite{zhao2021generalizations} combined with variants of our $q$-norm expansion bounds to handle correlated variables.

In particular, our goal is now to pull out all $S_t$ terms in the sum only paying a factor of $t^{i}$ and the conditional $2$-norm of $f$. We do this by re-indexing our sum over \textit{intersections} so that the $S_t$ terms (conditioned on the intersection) are ``independent" of $S_1,\ldots,S_{t-1}$. We then pay only a factor in the conditional norm when pulling out the maximum.

Formally, write $\mathcal{S}^{t-1}=\bigcup_{j=1}^{t-1} S_{j}$ and the set of intersecting coordinates as
\[
I=\mathcal{S}^{t-1} \cap S_t.
\]
We can upper bound the left-hand expectation by first summing over intersections $I$:
\begin{equation}\label{eq:hyp-formal}
\sum\limits_{|I| \leq i}\underset{x_{I}}{\mathbb{E} }\left[\sum\limits_{\underset{\mathcal{S}^{t-1} \supset I}{|S_1|,\ldots,|S_{t-1}| \leq i:}} \underset{x_{\mathcal{S}^{t-1} \setminus I}}{\mathbb{E}}\left[\prod\limits_{j=1}^{t-1}f^{=S_i}(x)^2\cdot \left(\sum\limits_{\underset{S_t \cap \mathcal{S}^{t-1}=I}{S_t \supset I:}}\underset{x_{S_t \setminus I}}{\mathbb{E}}\left[f^{=S_t}|_{x_{I}}(x_{S_t})^2\right]\right)\right] \right]
\end{equation}
Note that here the inner expectations are drawn \textit{conditionally} on the prior variables. This would not matter in a product, as the variables would be independent, but here our inner expectation over $x_{S_t \setminus I}$ actually depends on variables in $S_1,\ldots,S_{t-1}$ that are outside of $S_t$. This is an issue for us since we want to relate the maximum of this final term to $\max_{I,x_I}\{\norm{f|_{x_I}}\}$ and it is not clear how to do this under such dependencies.

To handle this fact we will draw on a variant of an idea in \cite{bafna2021hypercontractivity}'s proof of hypercontractivity for two-sided HDX. Namely, it is possible to use expansion of swap walks to decorrelate these dependencies:
\begin{claim}\label{claim:de-cor-hyp}
    \begin{align*}
    (\ref{eq:hyp-formal})
    \leq &\sum\limits_{|I| \leq i}\underset{x_{I}}{\mathbb{E} }\left[\sum\limits_{\underset{\mathcal{S}^{t-1} \supset I}{|S_1|,\ldots,|S_{t}| \leq i:}}\underset{x_{\mathcal{S}^{t-1} \setminus I}}{\mathbb{E}}\left[ \prod\limits_{j=1}^{t-1}f^{=S_i}(x)^2\right]\cdot \left(\sum\limits_{S_t \supset I}\underset{x_{S_t \setminus I}}{\mathbb{E}}\left[f^{=S_t}|_{x_{I}}(x_{S_t})^2\right]\right)\right]\\
    &+ 2^{O(d)}\gamma\norm{f}_2^2\max_{|S| \leq i,x_S}\left\{\norm{f|_{x_S}}_2^{2t-2}\right\}.
    \end{align*}
\end{claim}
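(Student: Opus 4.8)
The plan is to take the re-indexing \eqref{eq:hyp-formal} of the left-hand side by the pairwise intersection $I=\mathcal{S}^{t-1}\cap S_t$ as given, and to prove \Cref{claim:de-cor-hyp} term by term over $I$, the sets $S_1,\dots,S_{t-1}$ with $\mathcal{S}^{t-1}=\bigcup_{j<t}S_j\supseteq I$, and $S_t$ with $S_t\cap\mathcal{S}^{t-1}=I$ and $|S_t|\le i$. The key observation is that under the constraint $S_t\cap\mathcal{S}^{t-1}=I$ the coordinate sets $S_t\setminus I$ and $\mathcal{S}^{t-1}\setminus I$ are \emph{disjoint}, so, working in the link $X_{x_I}$ and writing $h_{x_I}=\prod_{j<t}(f^{=S_j}|_{x_I})^2$ (a function of $x_{\mathcal{S}^{t-1}\setminus I}$) and $g_{x_I}=(f^{=S_t}|_{x_I})^2$ (a function of $x_{S_t\setminus I}$), the innermost conditional expectation in \eqref{eq:hyp-formal} is exactly an application of the swap walk $A^{x_I}_{S_t\setminus I,\mathcal{S}^{t-1}\setminus I}$:
\[
\underset{x_{\mathcal{S}^{t-1}\setminus I}}{\mathbb{E}}\left[\prod_{j<t}f^{=S_j}(x)^2\,\underset{x_{S_t\setminus I}}{\mathbb{E}}\left[f^{=S_t}|_{x_I}(x_{S_t})^2\right]\right] = \big\langle h_{x_I},\, A^{x_I}_{S_t\setminus I,\,\mathcal{S}^{t-1}\setminus I}\, g_{x_I}\big\rangle_{x_I}.
\]

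Next I would split $A^{x_I}_{S_t\setminus I,\mathcal{S}^{t-1}\setminus I}=\Pi^{x_I}_{S_t\setminus I,\mathcal{S}^{t-1}\setminus I}+\big(A^{x_I}_{S_t\setminus I,\mathcal{S}^{t-1}\setminus I}-\Pi^{x_I}_{S_t\setminus I,\mathcal{S}^{t-1}\setminus I}\big)$. The stationary part contributes $\langle h_{x_I},\Pi^{x_I}g_{x_I}\rangle_{x_I}=\big(\underset{x_{\mathcal{S}^{t-1}\setminus I}\sim X_{x_I}}{\mathbb{E}}[h_{x_I}]\big)\big(\underset{x_{S_t\setminus I}\sim X_{x_I}}{\mathbb{E}}[g_{x_I}]\big)$; taking $\mathbb{E}_{x_I}$ and expanding, then summing over $S_1,\dots,S_t$ with the constraint $S_t\cap\mathcal{S}^{t-1}=I$, reproduces the decorrelated term on the right-hand side of \Cref{claim:de-cor-hyp} but with that constraint still in force. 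Since every summand there is a product of expectations of nonnegative quantities, dropping the constraint to the weaker requirement $S_t\supseteq I$ only increases the sum, yielding precisely the claimed main term.

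It remains to bound the error, i.e.\ the sum over all admissible $(I,S_1,\dots,S_t)$ of $\mathbb{E}_{x_I}\big[\langle h_{x_I},(A^{x_I}_{S_t\setminus I,\mathcal{S}^{t-1}\setminus I}-\Pi^{x_I}_{S_t\setminus I,\mathcal{S}^{t-1}\setminus I})g_{x_I}\rangle_{x_I}\big]$. For each summand, Cauchy--Schwarz inside the link gives $|\langle h_{x_I},(A^{x_I}-\Pi^{x_I})g_{x_I}\rangle_{x_I}|\le\norm{h_{x_I}}_{2,x_I}\norm{(A^{x_I}-\Pi^{x_I})g_{x_I}}_{2,x_I}$, and since $X_{x_I}$ is again a $\gamma$-product (hence a $(2,\gamma)$-product), \Cref{lem:swap} applied inside $X_{x_I}$ gives $\norm{(A^{x_I}-\Pi^{x_I})g_{x_I}}_{2,x_I}\le|S_t\setminus I|\,|\mathcal{S}^{t-1}\setminus I|\,\gamma\norm{g_{x_I}}_{2,x_I}\le ti^2\gamma\norm{g_{x_I}}_{2,x_I}$. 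Taking $\mathbb{E}_{x_I}$ and Cauchy--Schwarz once more over $x_I$ bounds the summand by $ti^2\gamma\big(\mathbb{E}_{x_I}\norm{h_{x_I}}_{2,x_I}^2\big)^{1/2}\big(\mathbb{E}_{x_I}\norm{g_{x_I}}_{2,x_I}^2\big)^{1/2}$, and both factors unfold to global moments: $\mathbb{E}_{x_I}\norm{g_{x_I}}_{2,x_I}^2=\norm{f^{=S_t}}_4^4$ and $\mathbb{E}_{x_I}\norm{h_{x_I}}_{2,x_I}^2=\mathbb{E}_x\big[\prod_{j<t}f^{=S_j}(x)^4\big]$. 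I would then apply the generalized H\"older inequality with $t-1$ copies of the exponent $t-1$ to get $\mathbb{E}_x[\prod_{j<t}f^{=S_j}(x)^4]\le\prod_{j<t}\norm{f^{=S_j}}_{4(t-1)}^4$, and invoke the $\ell_2$-bound of \Cref{lemma:Efron-contract}, $\norm{f^{=S_j}}_{4(t-1)}\le 2^{O(i)}\norm{f}_2^{1/(2(t-1))}\max_{|S|\le i,x_S}\{\norm{f|_{x_S}}_2^{1-1/(2(t-1))}\}$ together with $\norm{f^{=S_t}}_4\le 2^{O(i)}\norm{f}_2^{1/2}\max_{|S|\le i,x_S}\{\norm{f|_{x_S}}_2^{1/2}\}$. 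Multiplying through, exactly one global factor $\norm{f}_2^2$ survives while the conditional $2$-norms accumulate to the power $2t-2$, so each summand is $\le 2^{O(it)}\gamma\norm{f}_2^2\max_{|S|\le i,x_S}\{\norm{f|_{x_S}}_2^{2t-2}\}$; summing over the $\le 2^{O(dt)}$ admissible tuples $(I,S_1,\dots,S_t)$ gives the asserted error bound.

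The step I expect to be the main obstacle is this last one: balancing the H\"older exponents against the $\ell_2$ Efron--Stein contraction estimate so that precisely one global $\norm{f}_2^2$ is left while all remaining mass is absorbed into the $(2t-2)$-th power of the conditional $2$-norm --- an off-by-one here would either leave an unwanted extra global factor or produce too many conditional factors, breaking the induction in \Cref{claim:peel}. A secondary technical point is checking that the swap-walk and Efron--Stein estimates localize correctly to $X_{x_I}$ (in particular that $X_{x_I}$ inherits the $\gamma$-product property so \Cref{lem:swap} applies with $q=2$), and verifying that relaxing $S_t\cap\mathcal{S}^{t-1}=I$ to $S_t\supseteq I$ genuinely only adds nonnegative terms.
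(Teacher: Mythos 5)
Your proposal is correct and follows essentially the same route as the paper: rewrite the inner conditional expectation as the swap walk $A^{x_I}_{S_t\setminus I,\,\mathcal{S}^{t-1}\setminus I}$ in the link of $x_I$, decorrelate via its expansion ($A=\Pi+(A-\Pi)$, dropping the intersection constraint by nonnegativity), and control the error by Cauchy--Schwarz together with the $\ell_2$-moment bound of \Cref{lemma:Efron-contract}. The only difference is in the final bookkeeping --- the paper iterates Cauchy--Schwarz through the product (giving dyadic exponents $2^{j+1}$) whereas you use a single generalized H\"older with uniform exponent $4(t-1)$ plus an outer Cauchy--Schwarz over $x_I$ --- and your exponent count indeed leaves exactly one factor $\norm{f}_2^2$ with the conditional norms accumulating to $2t-2$, matching the paper's $2^{O(dt)}\gamma$ error term.
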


We defer the proof and complete the argument. Pulling out the maximum we can bound the main term by
\[
\max_{|I| \leq i, x_I}\left(\sum\limits_{S_t \supset I}\underset{x_{S_t \setminus I}}{\mathbb{E}}\left[f^{=S_t}|_{x_{I}}(x_{S_t})^2\right]\right)\sum\limits_{|I| \leq i}\underset{x_{I}}{\mathbb{E} }\left[\sum\limits_{\underset{\mathcal{S}^{t-1} \supset I}{|S_1|,\ldots,|S_{t}| \leq i:}}\underset{x_{\mathcal{S}^{t-1} \setminus I}}{\mathbb{E}}\left[ \prod\limits_{j=1}^{t-1}f^{=S_i}(x)^2\right]\right]
\]

We bound the two terms separately. For the latter, observe that every term in the inner summation occurs at most $\sum_{j \leq i}{i(t-1) \choose j} \leq (2et)^{i}$ times (once for each possible choice of intersecting subset $I$), so we have
\[
\sum\limits_{|I| \leq i}\underset{x_{I}}{\mathbb{E} }\left[\sum\limits_{\underset{\mathcal{S}^{t-1} \supset I}{|S_1|,\ldots,|S_{t-1}| \leq i:}} \underset{x_{\mathcal{S}^{t-1} \setminus I}}{\mathbb{E}}\left[\prod\limits_{j=1}^{t-1}f^{=S_i}(x)^2\right]\right] \leq (2et)^i \underset{x}{\mathbb{E} }\left[\sum\limits_{|S_1|,\ldots,|S_{t-1}| \leq i} \prod\limits_{j=1}^{t-1}f^{=S_i}(x)^2\right].
\]

To bound the maximum, recall from \Cref{lem:Efron-Restrict} that
\[
f^{=I \cup B}(y_I,x_B) = \sum\limits_{J \subset I} (-1)^{|I|-|J|}(f|_{y_J})^{=B}(x_B).
\]
Thus for any possible $y_I \in X[I]$, a simple application of Cauchy-Schwarz gives:
\begin{align*}
\sum\limits_{T \supset I}\underset{x_{T \setminus I} \sim X_{y_I}}{\mathbb{E}}\left[f^{=T}(y_I,x_{T\setminus I})^2\right] &= \sum\limits_{T \supset I}\underset{x_{T \setminus I} \sim X_{y_I}}{\mathbb{E}}\left[\left(\sum\limits_{J \subset I}(-1)^{|I|-|J|}(f|_{y_J})^{=T\setminus I}(x_{T\setminus I})\right)^2\right]\\
&\leq 2^{i}\sum\limits_{J \subset I}\sum\limits_{S \subset \bar{I}} \underset{x_S}{\mathbb{E}}[(f|_{y_J})^{=S}(x_S)^2]\\
    &\leq 2^{i}\sum\limits_{J \subset I}\sum\limits_{S \subset \bar{J}} \underset{x_S}{\mathbb{E}}[(f|_{y_J})^{=S}(x_S)^2]\\
    &\leq 5^i\max_{|J| \leq |I|,y_J}\{\norm{f|_{y_J}}_2^2\}
\end{align*}
where in the last step we have applied approximate Parseval (\Cref{thm:apx-Efron-Stein}) within the link of $y_J$ (hence the $5^i$ rather than $4^i$ coefficient).
\end{proof}
\begin{proof}[Proof of \Cref{claim:de-cor-hyp}]
    We have now reached the final step, decorrelating the variables in $S_t \setminus I$ in our inner expectation from the remaining `irrelevant' variables in $S_1,\ldots,S_t \setminus I$. The key is to observe that we can rewrite the inner expectation as an application of the swap walk between these disjoint variable sets in the link of $x_I$:
\[
\eqref{eq:hyp-formal} = \sum\limits_{|I| \leq i}\underset{x_{I}}{\mathbb{E} }\left[\sum\limits_{\underset{\mathcal{S}^{t-1} \supset I}{|S_1|,\ldots,|S_{t-1}| \leq i:}}\underset{x_{\mathcal{S}^{t-1} \setminus I}}{\mathbb{E}}\left[ \prod\limits_{j=1}^{t-1}f^{=S_i}(x)^2\cdot \left(\sum\limits_{\underset{S_t \cap \mathcal{S}^{t-1}=I}{S_t \supset I:}}A^{x_I}_{S_t \setminus I, \mathcal{S}^{t-1} \setminus I}(f^{=S_t}|_{x_{I}})^2\right)\right] \right]
\]
Since the swap walks are $d\gamma$-expanders \cite{gur2021hypercontractivity}, we can re-write the above as the desired term de-correlating $S_t$ as in the claim, plus an error term of the form
\[
\sum\limits_{|I| \leq i}\underset{x_{I}}{\mathbb{E} }\left[\sum\limits_{\underset{\mathcal{S}^{t-1} \supset I}{|S_1|,\ldots,|S_{t-1}| \leq i:}} \underset{x_{\mathcal{S}^{t-1} \setminus I}}{\mathbb{E}}\left[\prod\limits_{j=1}^{t-1}f^{=S_i}(x)^2\cdot \left(\sum\limits_{\underset{S_t \cap \mathcal{S}^{t-1}=I}{S_t \supset I:}}\Gamma(f^{=S_t}|_{x_{I}})^2\right)\right] \right]
\]
where $\Gamma$ has operator norm at most $d\gamma$. Applying Cauchy-Schwarz individually to each summand, we can then upper bound this by
\[
d\gamma\sum\limits_{|I| \leq i}\underset{x_{I}}{\mathbb{E} }\left[\sum\limits_{\underset{\mathcal{S}^{t-1} \supset I}{|S_1|,\ldots,|S_{t-1}| \leq i:}} \underset{x_{\mathcal{S}^{t-1} \setminus I}}{\mathbb{E}}\left[\prod\limits_{j=1}^{t-1}f^{=S_i}(x)^4\right]^{1/2}\cdot \left(\sum\limits_{\underset{S_t \cap \mathcal{S}^{t-1}=I}{S_t \supset I:}}\underset{x_{S_t \setminus I}}{\mathbb{E}}\left[f^{=S_t}(x)^4\right]^{1/2}\right) \right].
\]
Now repeatedly applying Cauchy-Schwarz within the product $t-2$ additional times (separating out only the final $S_j$ in each step), we can loosely upper bound this cumbersome expression by the much nicer one

\[
d\gamma\sum\limits_{S_1,\ldots,S_t}\norm{f^{=S_t}}_{2^{t}}^2\prod_{j=1}^{t-1}\norm{f^{=S_j}}_{2^{j+1}}^2,
\]
where we have removed the restrictions on the intersection and size of the $S_i$, and reversed the order of the $S_i$ to simplify the expression (note the former is valid since we have only added to the value).

Finally we apply \Cref{lemma:Efron-contract} to each term in the product and simplify to get
\begin{align*} &2^{O(ti)}d\gamma\sum\limits_{S_1,\ldots,S_t}\norm{f}_{2}^{2^{-t}}\max_{|I| \leq i, x_I}\{\norm{f|_{x_I}}_{2}^{2-2^{-t}}\}\prod_{j=1}^{t-1}\norm{f}_{2}^{2^{-j+1}}\max_{|I| \leq i, x_I}\{\norm{f|_{x_I}}_{2}^{2-2^{-j+1}}\}\\
\leq &2^{O(dt)}\gamma\norm{f}_2^2\max_{|I| \leq i, x_I}\{\norm{f|_{x_I}}_{2}^{2t-2}\}
\end{align*}
as desired.
\end{proof}
\section{Bourgain's Booster Theorem}
In this section we prove $\gamma$-products satisfy a booster theorem, resolving a main open question of \cite{bafna2021hypercontractivity}. We first define the relevant notion of a \textit{booster}, which is simply a restriction on which $f$ deviates significantly from its expectation.
\begin{definition}[$\tau$-boosters]
    Given $f: X(d) \to \{\pm 1\}$, a sub-assignment $x_T \in X[T]$ is called a $\tau$-booster if 
    \[
    |\mathbb{E}[f|_{x_T}]-\mathbb{E}[f]| \geq \tau.
    \]
\end{definition}
We prove that low influence functions on HDX have many (small support) boosters.
\begin{theorem}\label{thm:booster}
    Let $X$ be a $\gamma$-product for $\gamma \leq 2^{-\Omega(d)}$ and $f: X(d) \to \{\pm 1\}$ a function with $\mathbf{I}[f] \leq K$. If $\text{Var}(f) \geq .01$, there is some $\tau \geq 2^{-O(K^2)}$ such that
    \[
    \Pr_{x \sim X(d)}\left[\exists T \subset [d]:~|T| \leq O(K) \text{ and $x_T$ is a $\tau$-booster} \right] \geq \tau
    \]
\end{theorem}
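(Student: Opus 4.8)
The plan is to follow the structure of Bourgain's original argument (Bourgain, JAMS 1999), using the global hypercontractive inequality \Cref{thm:Bonami} — or equivalently the level-$d_0$ inequality extracted inside the proof of \Cref{cor:KKL} — in place of its classical counterpart, and controlling all HDX error terms with the $q$-norm Efron--Stein machinery of \Cref{sec:analysis-on-g-prods}. First I would pass to the low-degree part: since $\mathbf{I}[f]=\sum_i i\langle f,f^{=i}\rangle\le K$ (\Cref{claim:total-inf}) and $\var(f)\ge .01$, Markov's inequality shows the degree-$>d_0$ Efron--Stein mass of $f$ is at most $\var(f)/2$ for some $d_0=O(K)$, so by approximate Parseval (\Cref{thm:apx-Efron-Stein}) the centered truncation $g:=f^{\le d_0}-\E[f]$ satisfies $\|g\|_2^2\ge\Omega(1)$. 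The goal is then reduced to producing a small-support sub-assignment on which $E_T f$ deviates noticeably from $\E[f]$, on a noticeable fraction of $x$.

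The heart of the argument is a Bourgain-style iteration over links. At each stage we hold a (balanced, low-influence) restriction $f|_{x_W}$ of $f$ to a small set $W$; either a noticeable fraction of its further small restrictions are already $\tau$-boosters and we stop, or $f|_{x_W}$ is ``globally balanced'' on all small links, in which case we apply the level-$d_0$ inequality — global hypercontractivity (\Cref{thm:Bonami} with $q=4$) applied to the boolean indicators $\ind_{\{f|_{x_W}=1\}}$ and $\ind_{\{f|_{x_W}=-1\}}$, exactly as in the proof of \Cref{cor:KKL} — together with the fact that $\mathbf{I}[f|_{x_{W'}}]$ decreases on average when an influential coordinate (or block) is restricted, to pass into a link of strictly smaller influence at the cost of a single $2^{O(d_0)}=2^{O(K)}$ factor and an $\Omega(1)$-fraction of probability. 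Since the influence drops at every stage, the recursion terminates in $O(K)$ rounds; once it is driven low enough that $f$ is essentially constant in the surviving link, a Paley--Zygmund step produces the actual deviation and the probability lower bound. Multiplying the $O(K)$ per-round losses gives a $\tau$-booster with $\tau\ge 2^{-O(K^2)}$ surviving on a $\ge 2^{-O(K^2)}=\tau$ fraction of faces.

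Throughout, the essential point — and the reason only the new tools make this work on sparse $X$ — is that global hypercontractivity replaces the dimension-dependent union bounds (the $\binom{d}{\le d_0}$-type losses one gets from naively expanding over Efron--Stein components) by dimension-free $2^{O(d_0)}$ losses; the $q$-norm swap-walk expansion (\Cref{lem:swap}) and the approximate-eigenbasis lemmas (\Cref{lem:apx-eigen}, \Cref{lemma:apx-closed}, \Cref{lemma:p-to-q-down}) are then used to decorrelate the Efron--Stein pieces appearing inside links and to bound the accumulated error by $2^{O(d)}\gamma\cdot\poly(\|f\|_2,\text{conditional norms})$, which is negligible for $\gamma\le 2^{-\Omega(d)}$. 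I expect the main obstacle to be precisely the bookkeeping of this iteration on HDX: ensuring that (i) the influence genuinely decreases each round so the recursion has depth $O(K)$, (ii) the measure of ``good'' restrictions does not collapse faster than $2^{-O(K)}$ per round, and (iii) the per-round level-$d_0$ inequality and its error analysis survive passage to links of co-dimension up to $O(K)$ — routine on the cube but genuinely dependent on the $q$-norm HDX framework here. A possibly cleaner alternative worth attempting is a one-shot version: symmetrize $g$ via \Cref{cor:sym-products}, apply the Bonami lemma on the boolean coordinate, and re-index the resulting sum over intersections $I=S\cap T$ exactly as in the proof of \Cref{thm:Bonami}, so that the ``large deviation'' of $f^{\le d_0}$ is localized onto a genuine small restriction; this trades the recursion for a single more delicate combinatorial manipulation.
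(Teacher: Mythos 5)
Your plan does not match the paper's proof, and its central mechanism has a genuine gap. The engine of your iteration is the level-$d_0$ inequality (as in \Cref{cor:KKL}, i.e.\ \Cref{thm:Bonami} with $q=4$ applied to the indicators of $f|_{x_W}$) combined with an influence-decrement recursion over links. But the level-$i$ inequality $\langle g, g^{\leq i}\rangle \leq 2^{O(i)}\E[g]\max_{|S|\leq i,x_S}\E[g|_{x_S}]^{1/4}$ is only informative for \emph{sparse} functions; in the branch of your dichotomy where no small restriction is a booster, the indicators of a constant-variance $f$ have all small-restriction densities close to $\E[\ind_{f=1}]\approx \Theta(1)$, so both sides of the inequality are $\Theta(1)$ and you get neither a contradiction nor a density/deviation increment. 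This is exactly why \Cref{cor:KKL} is stated only for sparse functions and why Bourgain's theorem requires a different idea. The quantitative skeleton of the recursion is also unsupported: restricting a coordinate decreases total influence on average only by that coordinate's individual influence, which can be $\ll 1$, so nothing bounds the depth by $O(K)$; and the final Paley--Zygmund step must produce a deviation of the restricted \emph{mean} from $\E[f]$ (not mere density or near-constancy), which is not established. These are precisely the three obstacles you flag yourself, and they are where the argument fails rather than being bookkeeping. Your ``one-shot alternative'' (symmetrize, Bonami, re-index over intersections) is essentially a restatement of the proof of \Cref{thm:Bonami} and likewise never addresses the mean-deviation issue, since globalness controls conditional $2$-norms, which for a balanced boolean function are $\Theta(1)$ regardless of boosters.

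For comparison, the paper's proof has no recursion over links: it follows O'Donnell's treatment of Bourgain's argument. The core is \Cref{prop:booster}: for each input $x$ one defines the \emph{input-dependent} notable coordinates $J_x'=\{j:\sum_{S\ni j}f^{=S}(x)^2\geq \tau\}$ with $\tau=2^{-\Theta(K)}$, and shows that in expectation over $x$ almost all of the pointwise Fourier mass $\sum_S f^{=S}(x)^2$ sits on sets $S\subseteq J_x$ with $|S|\leq O(K)$, where $|J_x|\leq 2^{O(K)}$. The analytic step bounding the mass outside $J_x'$ is not a level-$d$ inequality but a $(4/3,2)$-hypercontractivity argument (\Cref{lemma:2-vs-43}) applied to the symmetrizations $\widetilde{g^i}|_x$ of $g^i=T_{2/5}L_if$, which exploits that $\Vert\widetilde{g^i}|_x\Vert_2^2\leq \tau+\mathrm{error}$ precisely because $i\notin J_x'$, and closes with the symmetrization theorem and $\Vert L_if\Vert_{4/3}^{4/3}\lesssim \langle f,L_if\rangle$ to get a bound of the form $2^{O(K)}\tau^{1/3}\mathbf{I}[f]$; the HDX errors are swept into $2^{O(d)}\gamma^{1/3}$ terms. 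The booster is then extracted pointwise: since $|\mathcal{F}_x|\leq 2^{O(K^2)}$, with probability $2^{-O(K^2)}$ some coefficient satisfies $f^{=S_x}(x)^2\geq 2^{-O(K^2)}$ with $0<|S_x|\leq O(K)$, and expanding $f^{=S_x}(x)=\sum_{\emptyset\neq T\subseteq S_x}(-1)^{|S_x|-|T|}E_T(f-\E f)(x)$ yields a $T$ with $|\E[f|_{x_T}]-\E[f]|\geq 2^{-O(K^2)}$. If you want to salvage your outline, the missing ingredient is this input-dependent localization applied to the Laplacians $L_if$ (or some equivalent device), not a density-increment recursion driven by the sparse level-$d$ inequality.
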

Our proof closely follows O'Donnell's treatment of Bourgain's Theorem for product spaces in \cite{o2014analysis}, adjusting where necessary to handle lack of independence. The key technical component is to show that for any low influence function, it is possible to identify a small \textit{input-dependent} set of coordinates which account for most of the Fourier mass.
\begin{proposition}\label{prop:booster}
     Let $X$ be a $\gamma$-product with $\gamma \leq 2^{-\Omega(d)}$, $\varepsilon \in (2^{O(d)}\gamma, 1/2)$, and $f:X(d) \to \{\pm 1\}$ any function satisfying $\text{Var}(f) \geq .01$. Let $\ell=\mathbf{I}[f]/\varepsilon$. There exists a family of `notable coordinates' $\{J_x\}_{x \in X(d)}$ such that
     \begin{enumerate}
         \item $J_x$ is small: 
         \[
         \forall x, |J_x| \leq 2^{O(\ell)}
         \]
         \item Small subsets of $J_x$ contain of most of the Fourier mass:
         \[
         \mathbb{E}\left[\sum\limits_{S \notin \mathcal{F}_x} f^{=S}(x)^2\right] \leq 3\varepsilon
         \]
     \end{enumerate}
     where $\mathcal{F}_x \coloneqq \{S: S \subseteq J_x, |S| \leq \ell\}$.
\end{proposition}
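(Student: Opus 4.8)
The plan is to transplant Bourgain's construction of \q{notable coordinates}, as presented by O'Donnell \cite{o2014analysis}, to the $\gamma$-product setting. On product spaces one may take a \emph{fixed} junta --- the set of coordinates of influence at least $\theta=2^{-\Theta(\ell)}$ --- and invoke Friedgut's argument to show it carries almost all low-degree weight. The essential new difficulty is that Friedgut's argument runs on hypercontractivity, and the hypercontractivity available on HDX (\Cref{thm:Bonami}) is only \emph{global}: a coordinate of tiny global influence can still carry large low-degree weight at inputs $x$ that lie in a link where the corresponding derivative is dense. The cure is to make $J_x$ \emph{input-dependent}, peeling off these dense directions one at a time, by localizing into their link, until the residual localized function is global enough to apply \Cref{thm:Bonami} directly.

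Concretely I would proceed in three steps. \textbf{(Degree truncation.)} Writing $\mathbf{I}[f]=\sum_i i\langle f,f^{=i}\rangle$ (\Cref{claim:total-inf}) and using approximate Parseval/orthogonality (\Cref{thm:apx-Efron-Stein}) gives $\sum_{|S|>\ell}\|f^{=S}\|_2^2\le \tfrac1\ell\bigl(\mathbf{I}[f]+2^{O(d)}\gamma\bigr)\le\varepsilon+2^{O(d)}\gamma$, using $\ell=\mathbf{I}[f]/\varepsilon$; since $\varepsilon>2^{O(d)}\gamma$ this is $<2\varepsilon$ regardless of $J_x$, so it remains to produce $\{J_x\}$ with $|J_x|\le 2^{O(\ell)}$ and $\mathbb{E}_x\bigl[\sum_{|S|\le\ell,\,S\not\subseteq J_x}f^{=S}(x)^2\bigr]\le\varepsilon$. \textbf{(Building $J_x$.)} Fix $\theta=2^{-\Theta(\ell)}$. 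For each $x$, run an iterative process: start with $J=\emptyset$; given $J$ with values $x_J$, pass to the link $X_{x_J}$ (again a $\gamma$-product) and the localized function $f|_{x_J}$; if some $i\notin J$ carries low-degree weight $\sum_{S\ni i,\,|S|\le\ell}\|(f|_{x_J})^{=S}\|_2^2\ge\theta$ in that link, add the maximizing such $i$ to $J$ and repeat; otherwise halt and set $J_x:=J$. The size bound is the delicate point: localizing one coordinate decreases the \emph{expected} total influence of the localized function by (roughly) the influence of that coordinate --- up to $\gamma$-errors controlled via \Cref{lemma:p-to-q-down} --- so the process has expected length $O(\mathbf{I}[f]/\theta)=2^{O(\ell)}$; a union bound over the at most $2^{O(\ell)}$ reachable states (capping the process and arguing the capped branches carry negligible Fourier mass) then upgrades this to a bound $|J_x|\le 2^{O(\ell)}$ valid for \emph{every} $x$. \textbf{(Mass bound.)} On halting, no $i\notin J_x$ is locally influential for $f|_{x_{J_x}}$, so in the link $X_{x_{J_x}}$ each derivative $\sum_{S\ni i}(f|_{x_{J_x}})^{=S}$ has bounded local $2$-norm, i.e.\ is global; hence the Friedgut-type level-$k$ estimate --- run through \Cref{thm:Bonami} in the $q=4$ (and its $q=4/3$ dual) form, exactly as in the proof of \Cref{cor:KKL} --- shows $\sum_{S\ni i,\,|S|\le\ell}\!f^{=S}(x)^2$ is $2^{O(\ell)}\theta^{\Omega(1)}$-small after translating Efron--Stein components across the localization via \Cref{lem:Efron-Restrict}; summing over $i\notin J_x$ and over $x$ gives the $\varepsilon$-bound. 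The hypothesis $\mathrm{Var}(f)\ge .01$ is not used here beyond being carried to \Cref{thm:booster}.

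The main obstacle is meeting both requirements simultaneously in the adaptive setting. The size bound must hold \emph{pointwise}, so the iterative process has to terminate quickly for every $x$, not just in expectation; since conditioning can transiently inflate individual influences, one cannot simply run a monotone-potential argument, and the clean way forward is to cap the process at depth $2^{O(\ell)}$ and show (using a pointwise $2^{O(d)}$ bound on total Fourier mass from \Cref{lemma:Efron-contract}) that the rare capped branches cost nothing --- this is genuinely delicate when $\ell$ is small relative to $d$. Simultaneously, the mass bound needs the localization in step two to leave behind a function global enough that the loss factor $(500q)^{qi}$ and the locality correction $\max_{|S|\le i,x_S}\|f|_{x_S}\|_2$ in \Cref{thm:Bonami} both stay below $\theta^{-\Theta(1)}=2^{\Theta(\ell)}$. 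Interleaved with all of this is the bookkeeping of $\gamma$-errors incurred each time we move between the Efron--Stein decomposition and the walks on $X$ (via \Cref{lemma:p-to-q-down,lemma:Efron-contract,lem:swap}): these are $2^{O(d)}\gamma$ per step over at most $2^{O(d)}$ steps, so they remain below $\varepsilon$ precisely because of the hypothesis $\varepsilon>2^{O(d)}\gamma$. Finally, it is worth noting that a direct transcription of Friedgut's \emph{fixed}-junta argument works when $\ell\gtrsim d$ but fails for small $\ell$ (e.g.\ constant $\varepsilon$), which is exactly why the input-dependent construction is unavoidable here.
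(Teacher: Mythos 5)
Your first step (degree truncation via \Cref{claim:total-inf} and approximate Parseval) matches the paper, but from there your route diverges from the paper's and, as proposed, has two genuine gaps. The paper does \emph{not} build $J_x$ by an adaptive link-localization process. It uses Bourgain's pointwise definition $J_x' = \{j \in [d]: \sum_{S \ni j} f^{=S}(x)^2 \geq \tau\}$ with $\tau = 2^{-\Theta(\ell)}$; the mass bound for $J_x'$ is proved by symmetrization together with a boolean $(4/3\to 2)$-hypercontractive estimate (\Cref{lemma:2-vs-43}) in which the pointwise smallness $\sum_{S \ni i} f^{=S}(x)^2 < \tau$ for $i \notin J_x'$ enters directly as $\norm{\widetilde{g^i}|_x}_2^2 \leq \tau + \Gamma(x)$, yielding a factor $\tau^{1/3}$ against $\sum_i \norm{L_i f}_{4/3}^{4/3} = O(\mathbf{I}[f])$; and the size bound is enforced by truncation ($J_x = \emptyset$ when $|J_x'| > C^\ell$), where the discarded mass is controlled by Cauchy--Schwarz, pairing $\Pr[|J_x'| > C^\ell]^{1/2}$ (Markov, since $\mathbb{E}|J_x'| \leq (\mathbf{I}[f]+2^{O(d)}\gamma)/\tau$) with a second-moment bound $\mathbb{E}\bigl[\bigl(\sum_{0<|S|\leq \ell} f^{=S}(x)^2\bigr)^2\bigr] \leq 2^{O(\ell)}\mathbf{I}[f] + 2^{O(d)}\gamma$, itself obtained by symmetrization and $4$-norm hypercontractivity.

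The gaps in your version are exactly at the two points you flag as delicate, and I do not see how to close them as described. First, the size bound: writing off capped branches against the pointwise bound $\sum_{|S|\leq \ell} f^{=S}(x)^2 \leq 2^{O(d)}$ (\Cref{lemma:Efron-contract}) forces the capping probability to be at most $\varepsilon\cdot 2^{-O(d)}$, but an expected process length of $2^{O(\ell)}$ gives only a constant failure probability by Markov, and no concentration argument on a process of length $2^{O(\ell)}$ can produce a $2^{-\Omega(d)}$ tail when $\ell \ll d$ (e.g.\ constant $\varepsilon$), which is precisely the regime the proposition targets; the paper's fix is to avoid the $2^{O(d)}$ pointwise bound altogether and use a second moment instead. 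Second, the mass bound: your halting condition controls the link-averaged quantity $\sum_{S \ni i, |S|\leq \ell}\norm{(f|_{x_{J_x}})^{=S}}_2^2$ for each remaining $i$, which neither implies the globalness hypothesis needed to apply \Cref{thm:Bonami} to the derivative $L_i(f|_{x_{J_x}})$ (globalness requires \emph{all} further restrictions of that derivative to have small $2$-norm, which small individual influence does not give), nor controls the \emph{pointwise} mass $\sum_{S \not\subseteq J_x} f^{=S}(x)^2$ at the particular completion $x$, which is the quantity in the statement. The pointwise definition of $J_x'$ is what makes $\tau$ appear exactly where the hypercontractive step needs it; with a link-based halting rule that connection is missing and would have to be rebuilt.
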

We first prove the booster theorem assuming this fact. The proof is essentially exactly as in the product case (see e.g.\ \cite[Page 311]{o2014analysis}). We include the argument for completeness.
\begin{proof}[Proof of \Cref{thm:booster}]
    We first argue that it is enough to show that with good probability over $x$ there is a subset of coordinates $0 < |S_x| \leq O(K)$ such that $f^{=S_x}(x)^2 \geq 2^{-O(K^2)}$, and in particular that this implies there exists some $T \subseteq S_x$ such that $x_T$ is a $2^{-O(K^2)}$-booster.

    Toward this end, let $g=f-\mathbb{E}[f]$, and note for all $S \neq \emptyset$, $g^{=S}=f^{=S}$. Since $S_x \neq \emptyset$, we therefore have $|g^{=S_x}(x)| \geq 2^{-O(K^2)}$. Finally, recall
    \[
    g^{=S_x}(x) = \sum\limits_{\emptyset \neq T \subseteq S_x}(-1)^{|S_x|-|T|}E_Tg(x),
    \]
    so there must exist some $0 < T \leq O(K)$ such that $|E_Tg| \geq 2^{-O(K^2)}$. However since $g=f-\mathbb{E}[f]$, we have
    \[
    |E_Tg(x)| = |\mathbb{E}[f|_{x_T}]-\mathbb{E}[f]| \geq 2^{-O(K^2)}
    \]
    as desired.

    With this in mind, our goal is now to define such sets $S_x$ so that
    \[
    \Pr_x[f^{=S_x}(x)^2 \geq 2^{-O(K^2)}] \geq 2^{-O(K^2)}.
    \]
    Set $\varepsilon$ of \Cref{prop:booster} to $.001$ (note this is admissible when $\gamma \leq 2^{-\Theta(d)}$ is small enough), and for any fixed $x \in X(d)$ let $|\mathcal{F}_x| \leq 2^{O(K^2)}$ be the resulting family of subsets. We will show that there is often a subset $S_x \in \mathcal{F}_x$ with large Fourier weight.

    First, we observe that the \textit{expected} weight of such $f^{=S_x}(x)^2$ drawn from $\mathcal{F}_x$ is large. Namely because $\text{Var}(f) \geq .01$, we have $f^{=\emptyset}(x)^2 \leq .99$ and
    \[
    \mathbb{E}_x\left[\sum\limits_{S \in \mathcal{F}_x \setminus \{\emptyset\}}f^{=S}(x)^2\right] \geq 1-3\varepsilon-.99-2^{O(d)}\gamma \geq .005.
    \]
    This in turn implies that in expectation over $x$, the \textit{maximum} weight coefficient in $ \mathcal{F}_x$ is at least $2^{-O(K^2)}$:
    \[
    \mathbb{E}_x\left[\max_{S \in \mathcal{F}_x \setminus \{\emptyset\}}f^{=S}(x)^2\right] \geq 2^{-O(K^2)}.
    \]
    In particular, this means for every $x$ we can define a set $0< |S_x| \leq O(K)$ such that
    \[
    \mathbb{E}_x\left[f^{=S_x}(x)^2\right] \geq 2^{-O(K^2)}.
    \]
    Finally, recalling $\norm{f^{=S_x}}_\infty \leq 2^{|S_x|}$ (\Cref{lemma:Efron-contract}) we infer
    \[
    \Pr_x[f^{=S_x}(x)^2 \geq 2^{-O(K^2)}] \geq 2^{-O(K^2)}
    \]
    as desired.
\end{proof}
It is left to prove the core Proposition.
\begin{proof}[Proof of \Cref{prop:booster}]
    First observe that the expected Fourier mass on components beyond level $\ell$ is small
    \[
    \mathbb{E}_x\left[\sum\limits_{|S| > \ell} f^{=S}(x)^2 \right] \leq 2\varepsilon.
    \]
    This follows by viewing the components $\sum\limits_{|S|=i}\mathbb{E}_x[f^{=S}(x)^2]$ as an approximate distribution over $[d]$. Namely using the fact that
    \[
    1=\mathbb{E}_x[f^2] \in (1\pm 2^{O(d)}\gamma)\sum\limits_{S \subseteq [d]}\mathbb{E}_x[f^{=S}(x)^2],
    \]
    for small enough $\gamma$, there is some normalizing factor $c \in [.9,1.1]$ such that $c\mathbb{E}_x[f^{=S}(x)^2]$ is a distribution. Now consider the random variable $Z$ that takes value $|S|$ with probability $c\mathbb{E}_x[f^{=S}(x)^2]$. By Markov:
    \[
    \Pr[Z > \ell] \leq \frac{\mathbb{E}[Z]}{\ell} \leq \frac{c^{-1}\mathbf{I}[f]+c_d\gamma}{\ell} \leq c^{-1}\varepsilon+\frac{\varepsilon c_{d}\gamma}{\mathbf{I}[f]}
    \]
    To bound the latter `error' term, note that by approximate orthogonality of Efron-Stein, we have the relation
    \[
    \mathbf{I}[f] = \sum\limits_{i=0}^d i\langle f,f^{=i} \rangle \geq \text{Var}(f) - 2^{O(d)}\gamma \geq .005,
    \]
    which combined with the above gives
    \[
    \Pr[Z > \ell] \leq 2\varepsilon
    \]
    as desired.
    
    We may now restrict our attention to the components of degree at most $\ell$. In particular, we need to find small sets $J_x \subset [d]$ such that
    \[
    \mathbb{E}_x\left[\sum\limits_{|S| \leq \ell, S \not\subseteq J_x}f^{=S}(x)^2\right] \leq \varepsilon.
    \]
    The natural strategy to define such a set is simply to take any coordinate with large influence, that is
    \[
    J_x' \coloneqq \left\{j \in [d]:~\sum\limits_{S \ni j}f^{=S}(x)^2 \geq \tau \right\}
    \]
    where $\tau = 2^{-\Theta(\ell)}$ is a sufficiently small constant. In fact we will show for this definition
    \begin{equation}\label{eq:booster-1}
       \mathbb{E}_x\left[\sum\limits_{|S| \leq \ell, S \not\subseteq J_x'}f^{=S}(x)^2\right] \leq \varepsilon/2
    \end{equation}
    The issue is that $|J_x'|$ may not be bounded. We will argue that truncating $J_x'$ as
    \[
    J_x =
    \begin{cases}
        J_x' & \text{if $|J_x'| \leq C^\ell$}\\
        \emptyset & \text{otherwise}
    \end{cases}
    \]
    for some large enough constant $C>0$ gives the desired family. In particular, it is enough to show
    \begin{equation}\label{eq:booster-2}
    \mathbb{E}\left[\mathbf{1}[|J_x'| > C^\ell]\sum\limits_{0 < |S| \leq \ell} f^{=S}(x)^2\right] \leq \varepsilon/2.
    \end{equation}
We first show \Cref{eq:booster-1}. Toward this end, we will start by adding two applications of the noise operator inside our summation, one in order to use symmetrization, the second to use boolean hypercontractivity inside the symmetrized function. In particular, using the fact that Efron-Stein is an approximate eigendecomposition (\Cref{lem:apx-eigen}) and $\norm{f}_2=1$, we can write:

\begin{align*}
     \mathbb{E}_x\left[\sum\limits_{|S| \leq \ell, S \not\subseteq J_x'}f^{=S}(x)^2\right] & \leq 2^{O(\ell)}\mathbb{E}_x\left[\sum\limits_{S \not\subseteq J_x'}(T_{1/\sqrt{3}}T_{2/5}f^{=S})(x)^2\right] + 2^{O(d)}\gamma
\end{align*}
We'd like to eventually compare the inside sum to the total influence of $f$. To do this, we'll re-index our sum over coordinates not $i \notin J'_x$, then over $S \ni i$. We can then view the inner sum as an application of the Laplacian. In particular continuing the above we have:
\begin{align*}
     & \leq 2^{O(\ell)}\mathbb{E}_x\left[\sum\limits_{i \not\in J_x'}\sum\limits_{S \ni i}(T_{1/\sqrt{3}}T_{2/5}f^{=S})(x)^2\right] + 2^{O(d)}\gamma\\
     & \leq 2^{O(\ell)}\mathbb{E}_x\left[\sum\limits_{i \not\in J_x'}(T_{1/\sqrt{3}}T_{2/5}L_if)(x)^2\right] + 2^{O(d)}\gamma
\end{align*}
where the last step follows from observing
\begin{align*}
    \mathbb{E}_x\left[\sum\limits_{S \ni i}(T_{1/\sqrt{3}}T_{2/5}f^{=S})(x)^2\right] &= \mathbb{E}_x\left[(T_{1/\sqrt{3}}T_{2/5}L_if)(x)^2\right] - \sum\limits_{S \neq S' \ni i} \langle T_{1/\sqrt{3}}T_{2/5}f^{=S},T_{1/\sqrt{3}}T_{2/5}f^{=S'} \rangle\\
    &\leq 2^{O(d)}\gamma
\end{align*}
by \Cref{lem:apx-eigen} and approximate orthogonality.

Now defining $g^{i} \coloneqq T_{2/5}L_if$, we can re-write the above as
\[
\mathbb{E}_x\left[\sum\limits_{|S| \leq \ell, S \not\subseteq J_x'}f^{=S}(x)^2\right] \leq 2^{O(\ell)}\sum\limits_{i \not\in J_x'}\norm{(T_{1/\sqrt{3}}g^i)}_2^2 + 2^{O(d)}\gamma.
\]
We'll now pass to the symmetrization of $g^i$ so we can apply boolean hypercontractivity. In fact, we argue this can be done directly, namely:
\begin{align*}
2^{O(\ell)}\sum\limits_{i \not\in J_x'}\norm{(T_{1/\sqrt{3}}g^i)}_2^2 + 2^{O(d)}\gamma \leq 2^{O(\ell)}\sum\limits_{i \not\in J_x'}\mathbb{E}_x\left[\norm{T_{1/\sqrt{3}}\widetilde{g^i}|_x}_{2,r}^2\right] + 2^{O(d)}\gamma
\end{align*}
This follows since $\widetilde{g^i}|_x$ is the boolean function whose Fourier coefficients are $(g^i)^{=S}(x)$, so by Parseval
\begin{align*}
\mathbb{E}_x\left[\norm{T_{1/\sqrt{3}}\widetilde{g^i}|_x}_{2,r}^2\right] &= \mathbb{E}_x\left[\sum\limits_{S \subseteq [d]}\left(\frac{1}{\sqrt{3}}\right)^{|S|}(g^i)^{=S}(x)^2\right]\\
&\geq \mathbb{E}_x\left[T_{1/\sqrt{3}}g^i(x)^2\right] - 2^{O(d)}\gamma
\end{align*}
where we've again used approximate orthogonality and \Cref{lem:apx-eigen}.

We are now finally in position to apply boolean hypercontractivity to $\widetilde{g^i}|_x$. We will do this indirectly via the following useful lemma adapted from \cite[Lemma 10.48]{o2014analysis}.
\begin{lemma}\label{lemma:2-vs-43}
    Let $i \notin J_x'$. Then
    \[
    \mathbb{E}_x\left[\norm{T_{\frac{1}{\sqrt{3}}}\widetilde{g^i}|_x}_2^2\right] \leq 2\tau^{1/3}\mathbb{E}_x\left[\norm{\widetilde{g^i}|_x}_{4/3}^{4/3}\right]+2^{O(d)}\gamma^{1/3}
    \]
\end{lemma}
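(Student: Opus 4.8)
The plan is to push $h_x \coloneqq \widetilde{g^i}|_x$---the boolean function on $\{\pm1\}^d$ whose Fourier coefficients are $\widehat{h_x}(S) = (g^i)^{=S}(x)$---through boolean hypercontractivity and then exploit the hypothesis $i \notin J_x'$ to show the residual $2$-norm is tiny. First I would apply the $(4/3\to2)$-hypercontractive inequality on the cube, $\norm{T_{1/\sqrt3}h_x}_2 \le \norm{h_x}_{4/3}$ (valid since $1/\sqrt3 = \sqrt{(4/3-1)/(2-1)}$), followed by the elementary interpolation $\norm{h_x}_{4/3}^2 = \norm{h_x}_{4/3}^{4/3}\norm{h_x}_{4/3}^{2/3} \le \norm{h_x}_{4/3}^{4/3}\norm{h_x}_2^{2/3}$, the last step being monotonicity of $L_p$-norms on the probability space $\{\pm1\}^d$. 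This reduces the lemma to controlling $\norm{h_x}_2^{2/3}$ on the event $\{i \notin J_x'\}$.

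Next I would estimate $\norm{h_x}_2^2 = \sum_S (g^i)^{=S}(x)^2$ using that $g^i = T_{2/5}L_i f$ has Efron--Stein components close to $(2/5)^{|S|}f^{=S}\mathbf{1}[i\in S]$. Combining the approximate eigenbasis lemma (\Cref{lem:apx-eigen}, applied to $M = T_{2/5}$) with approximate closedness (\Cref{lemma:apx-closed}), and using that $(L_if)^{=S} = f^{=S}\mathbf{1}[i\in S]$ up to a $2^{O(d)}\gamma$-error in $\ell_2$, gives $(g^i)^{=S} = (2/5)^{|S|}f^{=S}\mathbf{1}[i\in S] + e_S$ with $\norm{e_S}_2 \le 2^{O(d)}\gamma$ (here $\norm{L_if}_2 \le 2$ and $\norm f_2 = 1$). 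Writing $E(x) \coloneqq \sum_S e_S(x)^2$, so that $\mathbb{E}_x[E(x)] = \sum_S\norm{e_S}_2^2 \le 2^{O(d)}\gamma^2$, the triangle inequality in $\ell_2$ over $S$ yields $\norm{h_x}_2 \le \sqrt{\sum_{S\ni i}(2/5)^{2|S|}f^{=S}(x)^2} + \sqrt{E(x)}$. On $\{i\notin J_x'\}$ the first sum is $<\tau$ by definition of $J_x'$ (as $(2/5)^{2|S|}\le1$), so there $\norm{h_x}_2^{2/3} \le \tau^{1/3} + E(x)^{1/3}$ by subadditivity of $t\mapsto t^{2/3}$.

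Putting the pieces together, I would take $\mathbb{E}_x$ with the indicator $\mathbf{1}[i\notin J_x']$ inserted (the relevant terms carry this indicator in the application, which is the only sensible reading of the $x$-dependent hypothesis) of the chain $\norm{T_{1/\sqrt3}h_x}_2^2 \le \norm{h_x}_{4/3}^{4/3}(\tau^{1/3}+E(x)^{1/3})$. The $\tau^{1/3}$-piece contributes at most $\tau^{1/3}\mathbb{E}_x[\norm{h_x}_{4/3}^{4/3}]$, well within the claimed $2\tau^{1/3}\mathbb{E}_x[\norm{\widetilde{g^i}|_x}_{4/3}^{4/3}]$. For the $E(x)^{1/3}$-piece, Cauchy--Schwarz gives $\mathbb{E}_x[\norm{h_x}_{4/3}^{4/3}E(x)^{1/3}] \le \mathbb{E}_x[\norm{h_x}_{4/3}^{8/3}]^{1/2}\mathbb{E}_x[E(x)^{2/3}]^{1/2}$; the first factor is $2^{O(d)}$ via the crude pointwise bound $\norm{h_x}_{4/3}\le\norm{h_x}_2\le\sum_S\norm{(g^i)^{=S}}_\infty\le 2^{O(d)}$ from \Cref{lemma:Efron-contract}, and the second is $\le\mathbb{E}_x[E(x)]^{1/3}\le 2^{O(d)}\gamma^{2/3}$ by Jensen, so this piece is $2^{O(d)}\gamma^{2/3}\le 2^{O(d)}\gamma^{1/3}$, completing the bound. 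The step I expect to be the main nuisance is the bookkeeping around $E(x)$: the approximation $(g^i)^{=S}\approx(2/5)^{|S|}f^{=S}\mathbf{1}[i\in S]$ holds only in $\ell_2$ over $x$, not pointwise, so one must perform the split $\norm{h_x}_2^{2/3}\le\tau^{1/3}+E(x)^{1/3}$ \emph{before} passing to expectations---a direct H\"older on $\mathbb{E}_x[\norm{h_x}_{4/3}^{4/3}\norm{h_x}_2^{2/3}]$ would yield $\mathbb{E}_x[\norm{h_x}_{4/3}^{2}]^{2/3}$ rather than $\mathbb{E}_x[\norm{h_x}_{4/3}^{4/3}]$ and is too lossy. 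Otherwise the argument is a routine adaptation of O'Donnell's product-space proof.
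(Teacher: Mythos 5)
Your proposal is correct and follows essentially the same route as the paper's proof: boolean $(4/3{\to}2)$-hypercontractivity, the interpolation $\norm{h_x}_{4/3}^2 \leq \norm{h_x}_{4/3}^{4/3}\norm{h_x}_2^{2/3}$, then using \Cref{lem:apx-eigen} and \Cref{lemma:apx-closed} together with $i \notin J_x'$ to show $\norm{\widetilde{g^i}|_x}_2^2$ is at most $\tau$ plus a small nonnegative error controlled in $\ell_2$, finished with crude $2^{O(d)}$ pointwise bounds on the $4/3$-norm factor. Your bookkeeping of the error via $E(x)=\sum_S e_S(x)^2$ with Cauchy--Schwarz and Jensen is just a cosmetic variant of the paper's $\Gamma(x)$ with $\norm{\Gamma}_{1/3}\leq\norm{\Gamma}_2$, and your explicit insertion of the indicator $\mathbf{1}[i\notin J_x']$ matches how the lemma is actually invoked.
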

Combining this with our prior arguments, we can bound the total Fourier weight outside $J'_x$ as:
\begin{align*}
    \mathbb{E}_x\left[\sum\limits_{|S| \leq \ell, S \not\subseteq J_x'}f^{=S}(x)^2\right] &\leq 2^{O(\ell)}\sum\limits_{i \not\in J_x'}\mathbb{E}_x\left[\norm{T_{1/\sqrt{3}}\widetilde{g^i}|_x}_{2,r}^2\right] + 2^{O(d)}\gamma\\
    &\leq 2^{O(\ell)}\tau^{1/3}\sum\limits_{i \not\in J_x'}\mathbb{E}_x\left[\norm{\widetilde{T_{2/5}L_if}|_x}_{4/3,r}^{4/3}\right]+2^{O(d)}\gamma^{1/3}\\
    &\leq 2^{O(\ell)}\tau^{1/3}\sum\limits_{i \in [d]}\norm{L_if}_{4/3}^{4/3}+2^{O(d)}\gamma^{1/3} & \text{(Symmetrization Theorem)}
\end{align*}
We are left with needing to relate the $4/3$-norm of the Laplacian applied to $f$ to its total influence. This is a standard exercise, indeed for any $\{\pm 1\}$-valued functions we have $\norm{L_if}_{4/3}^{4/3} \leq O(\langle f,L_i f \rangle)$.\footnote{See e.g.\ \cite[Exercise 8.10]{o2014analysis}. This holds on any partite complex by observing $\norm{L_if}_{4/3}^{4/3}=\mathbb{E}[|L_if(x)|^{4/3}] \leq 2^{1/3}\mathbb{E}[|L_if(x)|] = 2^{1/3}\mathbb{E}[f(x)\cdot L_if(x)] = 2^{1/3}\langle f, L_i f \rangle$ where the 3rd and 4th step use that $f$ is $\{\pm 1\}$-valued.} Thus we can finally upper bound this by
\[
2^{O(\ell)}\tau^{1/3}\mathbf{I}[f] + 2^{O(d)}\gamma^{1/3} \leq \varepsilon/2
\]
for the appropriate choice of $\tau$ as desired.

We now need to show that $|J'_x|$ is typically small (\Cref{eq:booster-2}). We first separate the indicator and Fourier sum by Cauchy-Schwarz:
\begin{align*}
     \mathbb{E}\left[\mathbf{1}[|J_x'| > C^\ell]\sum\limits_{0 < |S| \leq \ell} f^{=S}(x)^2\right] \leq \Pr_x[|J'_x| > C^\ell]^{1/2}\cdot  \mathbb{E}\left[\left(\sum\limits_{0 < |S| \leq \ell} f^{=S}(x)^2\right)^2\right]^{1/2}
\end{align*}
We bound the two terms separately. Toward the first, observe that by definition for any fixed $x$ the size of $|J'_x|$ can be at most
\[
|J'_x| \leq \frac{1}{\tau}\sum\limits_{i \in [d]}\sum\limits_{S \ni i}f^{=S}(x)^2,
\]
else the mass in $J'_x$ exceeds the total mass at $x$. With this in mind, we can bound $\Pr_x[|J'_x| > C^\ell]$ by Markov's inequality as:
\begin{align*}
    \Pr_x[|J'_x| > C^\ell] &\leq C^{-\ell}\mathbb{E}_x[|J'_x|]\\
    &\leq C^{-\ell}\mathbb{E}_x[\frac{1}{\tau}\sum\limits_{i \in [d]}\sum\limits_{S \ni i}f^{=S}(x)^2]\\
    &\leq \frac{C^{-\ell}}{\tau}(\mathbf{I}[f]+2^{O(d)}\gamma)
\end{align*}
where we've used the relation $\mathbf{I}[f] = \sum\limits_{i \leq d}i\langle f,\sum\limits_{S \ni i} f^{=S} \rangle$ (\Cref{claim:total-inf}) and approximate orthogonality.


To bound the latter term, let $h = T_{2/5}(f-f^{=\emptyset})$. By similar arguments as above, we may write
\begin{align*}
    \mathbb{E}_x\left[\left(\sum\limits_{0 < |S| \leq \ell} f^{=S}(x)^2\right)^2\right] &\leq 2^{O(\ell)}\mathbb{E}\left[\left(\sum\limits_{S \neq \emptyset} T_{2/5}f^{=S}(x)^2\right)^2\right] + 2^{O(d)}\gamma\\
    &\leq 2^{O(\ell)}\mathbb{E}_x\left[\norm{\tilde{h}|_x}_{2,r}^4\right] + 2^{O(d)}\gamma\\
    &\leq 2^{O(\ell)}\mathbb{E}_x\left[\norm{\tilde{h}|_x}_{4,r}^4\right] + 2^{O(d)}\gamma\\
    &\leq 2^{O(\ell)}\norm{f-f^{\emptyset}}_{4}^4 + 2^{O(d)}\gamma & \text{(Symmetrization)}\\
    &\leq 2^{O(\ell)}\text{Var}(f) + 2^{O(d)}\gamma\\
    &\leq 2^{O(\ell)}\mathbf{I}[f] + 2^{O(d)}\gamma
\end{align*}
where we've used the fact that $|f-f^{\emptyset}| \leq 2$ to move to the variance and again applied the inequality $\text{Var}(f) \leq \mathbf{I}[f] - 2^{O(d)}\gamma$.

Combining these we have
\[
\mathbb{E}\left[\mathbf{1}[|J_x'| > C^\ell]\sum\limits_{0 < |S| \leq \ell} f^{=S}(x)^2\right] \leq (2^{O(\ell)}\mathbf{I}[f] + 2^{O(d)}\gamma)^{1/2}\left(\frac{C^{-\ell}}{\tau}(\mathbf{I}[f]+2^{O(d)}\gamma)\right)^{1/2} \leq \varepsilon/2
\]
for large enough constant $C$ and small enough $\gamma$.
\end{proof}

It is left to prove \Cref{lemma:2-vs-43}.
\begin{proof}[Proof of \Cref{lemma:2-vs-43}]
    Fix any $x$. Inside the expectation we may appeal to standard $(4/3,2)$-hypercontractivity which states that $\norm{T_{1/\sqrt{3}}\widetilde{g^i}|_x}_2 \leq \norm{\widetilde{g^i}|_x}_{4/3}$. Thus
    \[
    \norm{T_{1/\sqrt{3}}\widetilde{g^i}|_x}_2^2 \leq \left(\norm{\widetilde{g^i}|_x}_{4/3}^{2}\right)^{1/3} \cdot \norm{\widetilde{g^i}|_x}_{4/3}^{4/3} \leq \left(\norm{\widetilde{g^i}|_x}_{2}^{2}\right)^{1/3} \cdot \norm{\widetilde{g^i}|_x}_{4/3}^{4/3} 
    \]
    We claim that the $2$-norm term on the RHS can be bounded as a function of $x$ by:
    \begin{equation}\label{eq:g-sym-norm-bound}
    \norm{\widetilde{g^i}|_x}_{2}^{2} \leq \tau + \Gamma(x)
    \end{equation}
    where $\norm{\Gamma}_2 \leq 2^{O(d)}\gamma$ and $\Gamma$ is non-negative. Assuming this fact, we can now bound
    \begin{align*}
\mathbb{E}_x\left[\norm{T_{1/\sqrt{3}}\widetilde{g^i}|_x}_2^2\right] &\leq \mathbb{E}_x\left[(\tau + \Gamma)^{1/3} \cdot \norm{\widetilde{g^i}|_x}_{4/3}^{4/3}\right]\\
& \leq \tau^{1/3}\mathbb{E}_x\left[ \norm{\widetilde{g^i}|_x}_{4/3}^{4/3}\right] + 2^{O(d)}\norm{\Gamma}_{1/3}^{1/3}\\
& \leq \tau^{1/3}\mathbb{E}_x\left[ \norm{\widetilde{g^i}|_x}_{4/3}^{4/3}\right] + 2^{O(d)}\norm{\Gamma}_{2}^{1/3}\\
&\leq \tau^{1/3}\mathbb{E}_x\left[ \norm{\widetilde{g^i}|_x}_{4/3}^{4/3}\right] + 2^{O(d)}\gamma^{1/3}
    \end{align*}
where we've used the fact that $\max_x \norm{\widetilde{g^i}|_x}_{4/3}^{4/3} \leq 2^{O(d)}$ and monotonicity of $\ell_p$-norms.

To prove \Cref{eq:g-sym-norm-bound}, observe by standard Parseval we have
    \begin{align*}
    \norm{\tilde{g}|_x}_{2}^{2} &= \sum\limits_{S \subseteq [d]} (T_{2/5}L_if)^{=S}(x)^2\\
    &=\sum\limits_{S \subseteq [d]} \sum\limits_{T \ni i}(T_{2/5}f^{=T})^{=S}(x)^2
    \end{align*}
    By \Cref{lem:apx-eigen} and \Cref{lemma:apx-closed}, each individual term $(T_{2/5}f^{=T})^{=S}$ is $2^{O(d)}\gamma$-close in $2$-norm to
    \[
    \begin{cases}
        (\frac{2}{5})^{|S|}f^{=S} & \text{if } T=S\\
        0 & \text{else}
    \end{cases}
    \]
    Thus we can write
    \begin{align*}
    \norm{\tilde{g}|_x}_{2}^{2} &= \sum\limits_{S \ni i}\left(\left(\frac{2}{5}\right)^{|S|}f^{=S}+\Gamma^{(1)}_S\right)(x)^2 + \sum\limits_{S \subseteq [d]}\Gamma^{(2)}_S(x)^2\\
    &=\sum\limits_{S \ni i}\left(\left(\frac{2}{5}\right)^{2|S|}f^{=S}(x)^2+\Gamma^{(1)}_S(x)^2 + \frac{4}{5}f^{=S}(x)\Gamma^{(1)}_S(x)\right) + \sum\limits_{S \subseteq [d]}\Gamma^{(2)}_S(x)^2\\
    &\leq \tau + \sum\limits_{S \ni i}\left(\Gamma^{(1)}_S(x)^2 + 2^{O(d)}\Gamma^{(1)}_S(x)\right) + \sum\limits_{S \subseteq [d]}\Gamma^{(2)}_S(x)^2
    \end{align*}
    where all $\norm{\Gamma^{(i)}_S}_2 \leq 2^{O(d)}\gamma$ and we've used the facts that $i \notin J_x'$ and $\max_x\{f^{=S}\} \leq 2^d$. Finally, one can check that all $\Gamma^{(i)}_S$ are also bounded by $2^{O(d)}$ in infinity norm (this follows due to being differences of sums of Efron-Stein components, which are themselves bounded). Thus we can bound each square term by $2^{O(d)}$ times its linear component, and sum the linear components to get a vector $\Gamma(x)$ of $2$-norm at most $2^{O(d)}\gamma$. Taking the absolute value of this vector (which does not change its norm and continues to upper bound $\norm{\tilde{g}|_x}_{2}^{2} - \tau$) gives the desired expression.
\end{proof}
\section*{Acknowledgements}
We thank Vedat Alev, Venkat Guruswami, Tali Kaufman, and Shachar Lovett for helpful discussions on hypercontractivity and $q$-norm HDX. We thank Nima Anari especially for suggesting looking into interpolation type theorems for handling $q$-norms. We thank the Weizmann Institute and the Simons Institute for the Theory of Computing for hosting the author for part of the completion of this work.
\bibliographystyle{amsalpha}  
\bibliography{references} 
\end{document}